\definecolor{JLHgreen}{RGB}{50,100,50}
\newcommandx{\now}[2][1=]{\todo[inline, linecolor=red,backgroundcolor=red!25,bordercolor=red,#1]{#2}}
\newcommandx{\add}[2][1=]{\todo[inline, linecolor=teal,backgroundcolor=teal!25,bordercolor=teal,#1]{#2}}
\newcommandx{\improve}[2][1=]{\todo[inline, linecolor=violet,backgroundcolor=violet!25,bordercolor=violet,#1]{#2}}
\newcommandx{\change}[2][1=]{\todo[inline, linecolor=blue,backgroundcolor=blue!25,bordercolor=blue,#1]{#2}}
\newcommandx{\delete}[2][1=]{\todo[inline, linecolor=orange,backgroundcolor=orange!25,bordercolor=orange,#1]{#2}}
\renewcommand{\emph}[1]{#1}
\newtheorem{theorem}{Theorem}
\newtheorem{lemma}[theorem]{Lemma}
\newtheorem{claim}[theorem]{Claim}
\newtheorem{construction}[theorem]{Construction}
\newtheorem{corollary}[theorem]{Corollary}
\newtheorem{conjecture}[theorem]{Conjecture}
\newtheorem{definition}[theorem]{Definition}
\newcommand{\CC}{\mathbb{C}}
\newcommand{\FF}{\mathbb{F}}
\newcommand{\NN}{\mathbb{N}}
\newcommand{\RR}{\mathbb{R}}
\newcommand{\cC}{\mathcal{C}}
\DeclareMathOperator{\Ker}{ker}
\DeclareMathOperator{\Ima}{im}
\renewcommand{\>}{\rangle}
\def \a{\alpha}
\def \b{\beta}
\def \v{\nu}
\title{Good quantum LDPC codes with linear time decoder \\
from lossless expanders}
\author[1,2]{Ting-Chun Lin}
\author[2]{Min-Hsiu Hsieh}
\affil[1]{University of California San Diego}
\affil[2]{Hon Hai (Foxconn) Research Institute}
\begin{document}

\sloppy

\maketitle

\begin{abstract}
  Quantum low-density parity-check (qLDPC) codes are quantum stabilizer codes where each stabilizer acts on a constant number of qubits and each qubit is acted on by a constant number of stabilizers.
  We study qLDPC codes constructed from balanced products and lossless expanders.
  We found that assuming the existence of 2-sided lossless expander graphs with free group action,
    the resulting qLDPC codes have constant rate, linear distance, and linear time decoders.
\end{abstract}

%\listoftodos
%\tableofcontents

\section{Introduction}
\label{sec:introduction}
%% the problem
%% the progress

The work of quantum error correction begins with Shor's discovery of the 9-qubit code  \cite{shor1995scheme}, followed by the CSS construction \cite{calderbank1996good} \cite{steane1996error} that represents a quantum stabilizer code with two classical linear codes. The CSS construction allows us to translate many classical results into quantum, including the existence of good quantum code,  where `good' means the code has code dimension and distance proportional to the number of qubits.

However, one problem that remains is the existence of good quantum low-density parity check (LDPC) code, i.e., a quantum CSS code    where the two classical codes are LDPC. The classical LDPC code is a classical linear code with a sparse parity-check matrix.
%  And the quantum LDPC CSS code is a quantum CSS code where the two classical codes are LDPC.
It was known for a while that good classical LDPC code exists through a random construction \cite{gallager1962low}.
%    It can be shown that a random sparse parity-check matrix defines a good code with high probability. 
Quantum, however, does not allow a simple random construction. The simplest construction where stabilizers are chosen independently   will not satisfy the commuting conditions required between the stabilizers. That is, to satisfy the commuting condition we need some structure besides pure randomness.

How to satisfy the commuting condition between the stabilizers?
  The simplest way is by taking the Cartesian product of two classical codes
  \cite{tillich2013quantum}.
  This defines a quantum LDPC code and has constant rate, 
    but unfortunately only gives $\Omega(\sqrt{n})$ distance.
  There are several improvements of distance to $\Omega(\textnormal{polylog}(n) \sqrt{n})$ \cite{freedman2002z2, evra2020decodable, kaufman2020new}
  but it is very challenging to improve even by a polynomial factor.

It was until
  \cite{hastings2021fiber}
  that breaks the square root barrier and brings the code distance to $\Omega(n^{3/5}/\textnormal{polylog} (n))$.
  The new idea is to consider a more general product, the fiber product,
    which locally looks like Cartesian product, but have additional global features.
  Soon 
  \cite{panteleev2021quantum} and
  \cite{breuckmann2021balanced} further improve on the construction.
    The first paper uses random lift of a graph and increase the code distance to $\Omega(n/\log n)$
    The second paper provides an explicit construction
      and proposed another kind of product, the balanced product,
      that unifies the previous constructions.

Now, we have the structure, so it is time to incorporate randomness
  and try to prove the existence of good quantum LDPC codes and linear time decoders.

%% main result
\subsection{Main results}

In this paper, we show two results.
First, we give a new construction of good quantum LDPC code assuming the existence of 2-sided lossless expanders with free group actions. Second, under the same assumption, we show the quantum LDPC code has a linear time decoder.

%% technical tool
\subsection{Technical tools} \label{sec:technical-tools}

Our construction is built on two ingredients; namely, a two-dimensional graph structure and (one-dimensional) lossless expander graphs.

The two-dimensional graph is constructed by  taking the Cartesian product of two one-dimensional graphs with free group actions over the same group and take the quotient of the diagonal action. Because of the Cartesian product, the final graph has not only vertices and edges, but also squares.
This feature of having squares provides the necessary structure for quantum codes.
This two-dimensional graph structure has been used in different context with different names including balanced product \cite{breuckmann2021balanced},   G-lift \cite{panteleev2021asymptotically}, and left-right Cayley graph \cite{dinur2021locally}.

Now, we talk about the second ingredient, the lossless expanders. Having a two-dimensional graph structure is not enough to give a good qLDPC code, one needs some kind of expander property. If we look back at the historical development of the expander code, we found two different approaches. One construction uses the eigenvalue expanders    together with Tanner code construction    \cite{sipser1996expander}. The other construction uses the lossless expanders    \cite{sipser1996expander}. Here, in the case of qLDPC, there are also two approaches. Ref.~\cite{panteleev2021asymptotically} took the first path, and this paper took the second.

So, what are lossless expander graphs?
  Lossless expander graphs are the optimal kind of vertex expanders.
  Given a $w$-regular graph,
    for any subset of vertices $S$ with small enough size
    (less than a constant fraction of the total number of vertices),
    its neighboring vertices, $N(S)$, has size $|N(S)| \ge (1-\epsilon)w|S|$
    for some small $\epsilon$.
  This almost saturate an upper bound for $|N(S)|$, $|N(S)| \le w|S|$.
  More details is provided in Subsection~\ref{sec:lossless-expander-def}.

Lossless property has many implications.
  One of them is the unique expansion,
    which means most of the vertices in $N(S)$ are connected to only one vertex in $S$.
  The more refined statement is that a lossless expander looks like a tree (Lemma~\ref{lem:lossless-split}).
    This allows us to have tight control on the global code structure such as distance and linear time decoder.

%% outline
\subsection{Outline}
In Sec.~\ref{sec:preliminary} we introduce definitions and tools
  including the definitions of quantum error correcting codes, 
  lossless expander graphs,
  balanced product construction, 
  and chain complex.
In Sec.~\ref{sec:distance-qLDPC} we construct quantum codes
  using balanced product and lossless expander graphs
  and show that the codes is a good qLDPC code.
In Sec.~\ref{sec:decoder} we construct a linear time decoder for the qLDPC code,
  show the correctness of the decoder,
  and analyze the running time.
In Sec.~\ref{sec:conclusion} we discuss our results and future directions.

% ----- begin main -----
\section{Preliminary} \label{sec:preliminary}
\subsection{Quantum error correcting codes}

Here, we review quantum error correcting codes and quantum low-density parity-check codes. Quantum error correcting codes are described through stabilizers. To discuss the stabilizers, we first review the Pauli group.
The Pauli group on 1 qubit $G_1$ is the matrix group generated by the 2 by 2 Pauli matrices
  $$X = \begin{pmatrix}0 & 1 \\ 1 & 0 \end{pmatrix},
  Y = \begin{pmatrix}0 & -i \\ i & 0 \end{pmatrix},
  Z = \begin{pmatrix}1 & 0 \\ 0 & -1 \end{pmatrix}.$$
More explicitly, $G_1 = \{\pm I, \pm i I, \pm X, \pm i X, \pm Y, \pm i Y, \pm Z, \pm i Z\}$,  where $I$ is the 2 by 2 identity matrix. 
The Pauli group on $n$ qubits $G_n = G_1^{\otimes n}$ is the $n$ fold tensor product of $G_1$.

\begin{definition}[Quantum stabilizer code]
A quantum stabilizer code $Q$ that encodes $k$ logical qubits into $n$ physical qubits
  is defined by specifying a stabilizer group $S$ of $(n-k)$ generators, 
  where $S$ is an Abelian subgroup of the Pauli group on $n$ qubits.
A codeword, $|\psi\>$, is a vector in $\CC^{2^n}$ such that
  the vector is invariant under the action of each element in $S$,
  $s|\psi\> = |\psi\>, \forall s \in S$. 
The codespace is the vector space spanned by the codewords.
\end{definition}

To describe $S$, it is enough to specify a generating set.
In the case of quantum CSS code,
  each stabilizer in the generating set contains only Pauli X or Pauli Z.
Therefore, we can explicitly describe $S$
  by giving two matrices $H_x \in \FF_2^{m_x\times n} \cong [\FF_2^{m_x} \rightarrow \FF_2^{n}], H_z \in \FF_2^{m_z\times n} \cong [\FF_2^{m_z} \rightarrow \FF_2^{n}]$,
  where $S$ is generated by 
  $m_x$ X-stabilizers $s_j = \prod_{i=1}^n X_i^{a_{j, i}}, j = 1, 2, ..., m_x$
  and $m_z$ Z-stabilizers $s_k = \prod_{i=1}^n Z_i^{b_{k, i}}, k = 1, 2, ..., m_z$, 
  where $X_i, Z_i$ are the Pauli X and Z operator acting on the $i$-th qubit
  and $a_{j, i}, b_{k, i}$ are the entries of $H_x$ and $H_z$.
Recall that to have a well defined quantum stabilizer code, $S$ has to be Abelian.
  This is equivalent to the condition $H_x H_z^T = 0$.
We denote the corresponding quantum CSS code as $Q(H_x, H_z)$.

Besides the stabilizers, another important object is the logical operators.
The logical operators are the Pauli strings that commute with all stabilizers.
  They are called logical operators because
  they map codewords to codewords.
The trivial logical operators are the logical operators
  that acts trivially on all codewords.
In fact, the trivial logical operators are exactly the stabilizers.

Mathematically, 
  the logical X-operators are the kernel of $H_z^T$,
  $\Ker H_z^T = \{a \in \FF_2^n: H_z^T a^T = 0\}$,
  the trivial logical X-operators are the image of $H_x$,
  $\Ima H_x = \{v H_x \in \FF_2^n: v \in \FF_2^{m_x}\}$.

Below are some important parameters of the code.
The weight of a vector is the number of nonzero entries denoted as $|v|$.

\begin{itemize}
  \item The length of the code: $n$.
  \item The dimension of the code: $k$.
  \item The distance of the code: $d$, which is the minimal weight of all nontrivial logical operators.
  \item The weight of the code: $w$, which is the maximal weight of all column and row vectors in $H_x$ and $H_z$.
\end{itemize}

Because the code is a CSS code, we can write $d = \min(d_x, d_z)$, where $d_x$ and $d_z$ are the lengths of the shortest nontrivial logical X-operator and Z-operator.

% This can be written more explicitly as 
% \begin{itemize}
%   \item $d_x$ is the minimal weight of the vectors in $\Ker H_z^T - \Ima H_x$.
%   \item $d_z$ is the minimal weight of the vectors in $\Ker H_x^T - \Ima H_z$.
% \end{itemize}

Quantum low-density parity-check (LDPC) codes are codes with constant weight, $w = \Theta(1)$. 
This implies each stabilizer only acts on a constant number of qubits and each qubit is only acted by a constant number of stabilizers.
Good quantum LDPC codes are quantum LDPC codes that further have linear dimension and linear distance, $k = \Theta(n)$ and $d = \Theta(n)$.

% =====
\subsection{Graphs}

Here, we review the definition of bipartite graphs, adjacency matrices, regular graphs, Cayley graphs, and lossless expander graphs.

% notation
We use $\Xi \equiv (V_0, V_1, E)$ to denote a bipartite graph, where $V_0$, $V_1$ are the sets of vertices on the 2 sides, and $E \subseteq V_0 \times V_1$ is the set of edges between the vertices. We use variables $\nu_0, \nu_1$ to denote subsets of $V_0, V_1$, and use variables $x_0, x_1$ to denote individual vertices in $V_0, V_1$. We use $E(\nu_0, \nu_1)$ to denote the set of edges between $\nu_0$ and $\nu_1$.

The neighbors of a vertex $x_0$ within $\nu_1$ is denoted as $N_{\nu_1}(x_0)$. 
We abbreviate $N_{V_1}(x_0)$ to $N_1(x_0)$ and $N_{V_0}(x_1)$ to $N_0(x_1)$.
Similarly, the neighbors of a subset $\nu_0$ within $\nu_1$ is denoted as $N_{\nu_1}(\nu_0)$.

% The neighbors of a vertex $x_0$ in graph $X$ are denoted as $N_X(x_0)$.
% The neighbors of a vertex $x_0$ restricted to $v_1$ is denoted as 
%   $N_{v_1}(x_0) = N_X(x_0) \cap v_1$. When $X\in\{V_0,V_1\}$, we sometime abbreviate $N_X(x_0)$ to $N_i(x_0)$ with the subscript $i\in\{0,1\}$.
%Similarly, the neighbors of a subset $v_0$ in graph $X$ is denoted as $N_{X}(v_0)$.
% The neighbors of a subset $v_0$ restricted to $v_1$ is denoted as    $N_{v_1}(v_0) = N_X(v_0) \cap v_1$,  and we abbreviates $N_X(v_0) = N_{V_1}(v_0)$ to $N_1(v_0)$.
The degree of a vertex $x_0$ in $\nu_1$ is defined as the size of the neighbors of $x_0$ in $\nu_1$, i.e. $\deg_{\nu_1}(x_0) \coloneqq |N_{\nu_1}(x_0)|$.
% We use the same abbreviation scheme as described above.

The adjacency matrix of a bipartite graph 
  is a matrix $L(E) \in \FF_2^{V_0 \times V_1} \cong [\FF_2^{V_0} \rightarrow \FF_2^{V_1}]$,
  where $L(E)_{x_0, x_1} = 1$ if $(x_0, x_1) \in E$,
  otherwise $L(E)_{x_0, x_1} = 0$.
Equivalently,
  $L(E) e_{x_0} = \sum_{x_1 \in N_{V_1}(x_0)} e_{x_1}$,
  where $e_{x_0}$ and $e_{x_1}$ are the basis vectors in $\FF_2^{V_0}$ and $\FF_2^{V_1}$.

% regular graph
A bipartite graph $\Xi$ is $(w_0, w_1)$-regular if the degrees of all vertices in $V_0$ are equal to $w_0$, and the degrees of the all vertices in $V_1$ are equal to $w_1$. Notice that $w_0$ and $w_1$ are the weights of the column and row vectors of the adjacency matrix.

\subsubsection{Cayley graphs and graphs with free group action}

Here, we discuss graphs with free group action, which is crucial for the balanced product construction. Cayley graph is the key example for a graph with free group action. In fact, all graphs with free group action can be decomposed into Cayley graphs.

A bipartite graph $\Xi$ is $G$-invariant if there exist $G$-actions on $V_0$ and $V_1$, such that if $(x_0, x_1) \in E$, then $(gx_0, gx_1) \in E$.
We also say the graph has $G$-symmetry.
Later we only consider a special case of $G$-action, where the action is free.
A group action on a set $V$ is free, if for all $x \in V$, $gx = x$ implies $g = 1$.
A group action on $\Xi$ is free, if the actions on both $V_0$ and $V_1$ are free.

The left (acting) bipartite Cayley graph,
  $\Gamma_{\textnormal{left}}(G, A) = (V_0, V_1, E)$, 
  is a bipartite graph constructed from a group $G$ 
  and a generating set $A \subseteq G$. 
The graph has vertices $V_0 = G, V_1 = G$, 
  and edges $E = \{(g, ag) : g \in G, a \in A\}$. 
We can also have the generating set acts from the right, 
  which defines the right bipartite Cayley graph, 
  $\Gamma_{\textnormal{right}}(G, B)$. 
The left (right) bipartite Cayley graph is $G$-invariant by the right (left) group action which acts freely.

% vertex expansion
\subsubsection{Lossless expander} \label{sec:lossless-expander-def}

A lossless expander graph is a regular graph where the vertex expansion is optimal i.e approximately equals to its degree. 

\begin{definition}[Small set vertex expansion]
  A bipartite graph $\Xi$ has $(c, \alpha)$-vertex expansion from $V_0$ to $V_1$ if for any subset $\nu_0 \subseteq V_0$ with $|\nu_0| < c |V_0|$, $|N_{V_1}(\nu_0)| \ge \alpha |\nu_0|$. 
\end{definition}

\begin{definition}[1-sided lossless expander]
  A $(w_0, w_1)$-regular bipartite graph $\Xi$ is a 1-sided $(c, \epsilon)$-lossless expander from $V_0$ to $V_1$, if it has $(c, (1-\epsilon)w_0)$-vertex expansion from $V_0$ to $V_1$.
\end{definition}

\begin{definition}[2-sided lossless expander] A $(w_0, w_1)$-regular bipartite graph $\Xi$ is a 2-sided $(c, \epsilon)$-lossless expander, if it has $(c, (1-\epsilon)w_0)$-vertex expansion from $V_0$ to $V_1$ and $(c, (1-\epsilon)w_1)$-vertex expansion from $V_1$ to $V_0$.
\end{definition}

It is known that 1-sided lossless expanders with free group actions exist \cite{capalbo2002randomness}. But 2-sided lossless expanders with free group actions are unknown at this moment.

% There is a related notion of expanders called the eigenvalue expanders. Most eigenvalue expanders are undirected graphs and are characterized by the second largest eigenvalue of the adjacency matrix. These are the key ingredients for Tanner graphs used in \cite{panteleev2021quantum} and \cite{dinur2021locally}. There are several explicit construction of eigenvalue expanders. In our context, we need the graph to have symmetry, so we consider the Cayley eigenvalue expanders constructed in \cite{philips1988ramanujan}.

% Similar to error correcting codes, we are often interested in a family of expander graphs with fixed parameters $(w_0, w_1, c, \epsilon)$ and an arbitrarily large number of vertices.

% =====

\subsection{Balanced product construction}

Balanced product is used to guarantee 
  the commuting condition of the stabilizers.
It is obtained by first taking the Cartesian product, then taking the quotient over the diagonal group action.

% The other key ingredient is the balanced product  which guarantees the commuting condition of the stabilizers.

% The balanced product construction is a general method that can be applied to any 2 objects with a common group action. In our case, it would be two graphs with a common group action. A more familiar example in math is a ring acting on the two modules. This is known as the tensor product of modules.

% =====

We first review the definition of the hypergraph product \cite{tillich2013quantum}
  which is the same as the Cartesian product. 

%\simplefigure{images/left-right-cayley-full}

\begin{definition}[Hypergraph product]
  Given two bipartite graphs $\Xi_X = (V_{X, 0}, V_{X, 1},E_X)$ and $\Xi_Y = (V_{Y, 0}, V_{Y, 1}, E_Y)$, the \emph{hypergraph product} of $\Xi_X$ and $\Xi_Y$, $\Xi_X \times \Xi_Y$, has 
  \begin{itemize}
    \item vertices: $V_{00} = V_{X,0} \times V_{Y,0},
      V_{10} = V_{X,1} \times V_{Y,0},
      V_{01} = V_{X,0} \times V_{Y,1},
      V_{11} = V_{X,1} \times V_{Y,1}$,
    \item edges: \\
      $E_{*0} = \{((x_0,y_0), (x_1,y_0)) : (x_0,x_1) \in E_X, y_0 \in V_{Y,0}\}, \\
      E_{*1} = \{((x_0,y_1), (x_1,y_1)) : (x_0,x_1) \in E_X, y_1 \in V_{Y,1}\}, \\
      E_{0*} = \{((x_0,y_0), (x_0,y_1)) : x_0 \in V_{X,0}, (y_0,y_1) \in E_Y\}, \\
      E_{1*} = \{((x_1,y_0), (x_1,y_1)) : x_1 \in V_{X,1}, (y_0,y_1) \in E_Y\}$,
    \item faces: $F = \{((x_0,y_0), (x_1,y_0), (x_0,y_1), (x_1,y_1)) : (x_0,x_1) \in E_X, (y_0,y_1) \in E_Y\}$.
  \end{itemize}
\end{definition}  
%It is essentially the Cartesian product of graphs, i.e., the hypergraph product of bipartite graphs.  Because a bipartite graph has 2 parts, $V_0$ and $V_1$, when taking the product of two bipartite graphs, we end up with $4 = 2 \times 2$ parts, which will be labeled by $V_{00}, V_{10}, V_{01}, V_{11}$.  The edges that connects them will be labeled by $E_{*0}, E_{*1}, E_{0*}, E_{1*}$, as shown 

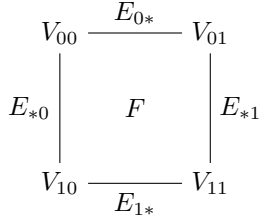
\begin{figure}
  \centering
  \begin{tikzpicture}
    \node (V00) at (0,0) {$V_{00}$};
\node (V10) at (0,-2) {$V_{10}$};
\node (V01) at (2,0) {$V_{01}$};
\node (V11) at (2,-2) {$V_{11}$};

\draw (V00) to node[auto, swap] {$E_{*0}$} (V10);
\draw (V00) to node[auto] {$E_{0*}$} (V01);
\draw (V01) to node[auto] {$E_{*1}$} (V11);
\draw (V10) to node[auto, swap] {$E_{1*}$} (V11);

\node (F) at (1,-1) {$F$};
  \end{tikzpicture}
  \caption{Balanced product of bipartite graphs.}
  \label{fig:hypergraph_BG}
\end{figure}

When the graphs $\Xi_X$ and $\Xi_Y$ have $G$-action, the hypergraph product $\Xi_X \times \Xi_Y$ has a $G$-action defined by the diagonal $G$-action. After quotienting this action, we obtain the balanced product.

\begin{definition}[Balanced product]
  The \emph{balanced product} of two bipartite graphs $\Xi_X, \Xi_Y$ with free $G$-action, denoted by $\Xi_X \times_G \Xi_Y$, 
    has 
  \begin{itemize}
    \item vertices: $V_{\a\b} \coloneqq V_{X,\a} \times V_{Y,\b} / G,$
      for $\a, \b \in \{0, 1\}$,
      where $(x_\a, y_\b) \sim (gx_\a, gy_\b)$,
    \item edges: for $\a, \b \in \{0, 1\}$,
      $E_{*\b} \coloneqq \{((x_0,y_\b), (x_1,y_\b)) : (x_0,x_1) \in E_X, y_\b \in V_{Y,\b}\} / G$, 
      where $((x_0,y_\b), (x_1,y_\b)) \sim ((gx_0,gy_\b), (gx_1,gy_\b))$, and 
      $E_{\a*} \coloneqq \{((x_\a,y_0), (x_\a,y_1)) : x_\a \in V_{X,\a}, (y_0,y_1) \in E_Y\} / G$, 
      where $((x_\a,y_0), (x_\a,y_1)) \sim ((gx_\a,gy_0), (gx_\a,gy_1))$,
    \item faces:
      $F \coloneqq \{((x_0,y_0), (x_1,y_0), (x_0,y_1), (x_1,y_1)) : (x_0,x_1) \in E_X, (y_0,y_1) \in E_Y\} / G$, 
     where $((x_0,y_0), (x_1,y_0), (x_0,y_1), (x_1,y_1)) \sim ((gx_0,gy_0), (gx_1,gy_0), (gx_0,gy_1), (gx_1,gy_1))$.
  \end{itemize}
\end{definition}

The balanced product of bipartite graphs is illustrated in 
  Figure~\ref{fig:hypergraph_BG}.
We use $(V^*, E^*, F)$ as a shorthand for
  $(V_{00}, V_{10}, V_{01}, V_{11}, E_{*0}, E_{*1}, E_{0*}, E_{1*}, F)$.

An important example of a balanced product graph is the left-right Cayley graph. 

\begin{definition}[Left-right Cayley graph] \label{def:left-right-cayley-graph}
  
  The left-right bipartite Cayley graph $\Gamma_2(G, A, B) \coloneqq \Xi_X \times_G \Xi_Y$
  where $\Xi_X \coloneqq\Gamma_{\rm right}(G, A^{-1}), \Xi_Y\coloneqq\Gamma_{\rm right}(G, B)$
  are right bipartite Cayley graphs with $A^{-1} = \{a^{-1} : a \in A\}$.

  Explicitly, the graph has
  \begin{itemize}
    \item vertices: $V_{00} \cong V_{10} \cong V_{01} \cong V_{11} \cong G \times G / G \cong G$,
    \item edges: 
      \begin{align*}
        E_{*0} &= \{(g, ag) : g \in G, a \in A\}, \\
        E_{*1} &= \{(gb, agb) : gb \in G, a \in A\}, \\
        E_{0*} &= \{(g, gb) : g \in G, b \in B\}, \\
        E_{1*} &= \{(ag, agb) : ag \in G, b \in B\},
      \end{align*}
    \item faces: $\{(g, ag, gb, agb): g \in G, a \in A, b \in B\}$,
  \end{itemize}
  where $G \times G / G \cong G$
  uses the bijection $[(x_\a, y_\b)] \mapsto x_\a^{-1} y_\b$
  for $\a, \b \in \{0,1\}$.
  $[(x_\a, y_\b)]$ denotes the equivalent class of $(x_\a, y_\b)$ in $G \times G / G$.
\end{definition}

\begin{figure}
  \centering
  \begin{tikzpicture}
    \node (V00) at (0,0) {$V_{00}$};
\node (V10) at (0,-2) {$V_{10}$};
\node (V01) at (2,0) {$V_{01}$};
\node (V11) at (2,-2) {$V_{11}$};

\draw (V00) to node[auto, swap] {$E_{*0}$} (V10);
\draw (V00) to node[auto] {$E_{0*}$} (V01);
\draw (V01) to node[auto, swap] {$E_{*1}$} (V11);
\draw (V10) to node[auto] {$E_{1*}$} (V11);

\draw (0,0)+(0.7,-0.7) node (g) {$g$};
\draw (0,-2)+(0.7,-0.7) node (ag) {$ag$};
\draw (2,0)+(0.7,-0.7) node (gb) {$gb$};
\draw (2,-2)+(0.7,-0.7) node (agb) {$agb$};

\path (g) -- node [sloped] {$\ni$} (V00);
\path (ag) -- node [sloped] {$\ni$} (V10);
\path (gb) -- node [sloped] {$\ni$} (V01);
\path (agb) -- node [sloped] {$\ni$} (V11);

\draw (g) to (ag);
\draw (g) to (gb);
\draw (gb) to (agb);
\draw (ag) to (agb);
  \end{tikzpicture}
  \caption{Left-right bipartite Cayley graph.}
  \label{fig:left-right-cayley-graph}
\end{figure}
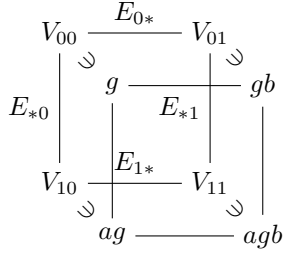

The left-right bipartite Cayley graph is illustrated in 
  Figure~\ref{fig:left-right-cayley-graph}.Note that we labeled the edges to point out the four vertices, $g \in V_{00}, ag \in V_{10}, gb \in V_{01}, agb \in V_{11}$, form a square. The square appears because the left action commutes with the right action. The existence of squares will be used later in the proof.

Back to the balanced product, if $z_{00} \in V_{00}, z_{10} \in V_{10}, z_{01} \in V_{01}$
  and they are adjacent $(z_{00}, z_{10}) \in E_{*0}$, $(z_{00}, z_{01}) \in E_{0*}$,
  then there exists a unique vertex $z_{11} \in V_{11}$ that completes the square 
  $(z_{00}, z_{10}, z_{01}, z_{11}) \in F$.
  We denote such vertex as $z_{10} \times_{z_{00}} z_{01}$.
This is proven in the Lemma~\ref{lem:square-completion} and is used to show the balanced product gives a chain complex.

% =====

\subsection{Chain complex} \label{sec:chain-complex}

Here, we introduce the language of chain complexes from homological algebra. The connection between quantum error correcting codes and chain complexes was first discussed in \cite{bravyi2014homological}. 

As we will see, homological algebra gives a natural language to discuss quantum CSS codes,
  which drives the recent breakthrough in qLDPC codes
    \cite{hastings2021fiber}
  by drawing analogies between codes, chain complexes and manifolds.

\begin{definition}[Chain complex]
A \emph{chain complex} $\mathcal{C}$ is a sequence of vector spaces, $C_i$, together with linear maps, $\partial_i: C_i \rightarrow C_{i-1}$ called the \emph{boundary operators}, where these boundary operators satisfy
  \begin{equation}
    \partial_{i-1} \partial_i  = 0.
  \end{equation}
\end{definition}

The kernel and the image of a boundary operator is defined as
$\Ker \partial_i \coloneqq \{c_i \in C_i : \partial_i c_i = 0\}$,
$\Ima \partial_i \coloneqq \{\partial_i c_i \in C_{i-1} : c_i \in C_i\}$.

Notice the constraint $\partial_{i-1} \partial_i = 0$ is similar
  to the the commuting constraint for CSS codes $H_xH_z^T=0$.
Indeed, a quantum CSS code is equivalent to a 3-term chain complex:
\begin{equation}
  \cC: [C_2 \xrightarrow{\partial_2} C_1 \xrightarrow{\partial_1} C_0]
  \equiv [\FF_2^{m_z} \xrightarrow{H_z^T} \FF_2^{n} \xrightarrow{H_x} \FF_2^{m_x}].
\end{equation}

Using this equivalence, we can rewrite the objects in quantum codes
  in the language of chain complexes.

\begin{itemize}
  \item Logical Z-operators $\equiv \Ker \partial_1$.
  \item Trivial logical Z-operators $\equiv \Ima \partial_2$.
  \item Z-distance $d_z = d_1(\cC) = \min_{v \in \left(\Ker \partial_1 - \Ima \partial_2 \right)} |v|$.
\end{itemize}

Similarly for X-operators, we consider the \emph{dual chain complex},
\begin{equation}
  \cC^T: C_0 \xrightarrow{\partial_1^T} C_1 \xrightarrow{\partial_2^T} C_2.
\end{equation}

\begin{itemize}
  \item Logical X-operators $\equiv \Ker \partial_2^T$.
  \item Trivial logical X-operators $\equiv \Ima \partial_1^T$.
  \item X-distance $d_x = d_1(\cC^T) = \min_{v \in \left(\Ker \partial_2^T - \Ima \partial_1^T \right)} |v|$.
\end{itemize}

Note that distance is not an intrinsic property of a chain complex, instead, it is basis dependent. 

\section{Construct and prove good qLDPC} \label{sec:distance-qLDPC}
In this section, we will construct and prove the existence of good quantum LDPC codes  assuming the existence of 2-sided lossless expanders with free group actions.
The construction is based on the balanced product of lossless expander graphs. 

\begin{theorem}[Good quantum LDPC] \label{thm:good-qLDPC}
  Assume that the 2-sided lossless expander with free group action in Conjecture~\ref{con:2-sided-lossless-expander-with-symmetry} exists. Then for all $0 < r < 1$, there exist $\delta > 0$, $w \in \NN$
  and a construction of an infinite family of quantum error-correcting codes
    $\{C_i\}$ with parameters $[[n_i, k_i, d_i]]$,
  such that $n_i$ approaches infinity as $i$ increases, $k_i/n_i \ge r, d_i/n_i \ge \delta$
  and $C_i$ has weight $w$.
\end{theorem}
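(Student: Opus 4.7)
The plan is to apply the balanced product construction to two 2-sided $(c,\epsilon)$-lossless expanders with free $G$-action, whose existence is granted by Conjecture~\ref{con:2-sided-lossless-expander-with-symmetry}. Pick bipartite graphs $\Xi_X, \Xi_Y$ of regularities $(w_0, w_1)$ and $(w_1, w_0)$ respectively, with a common free $G$-action, where $w_0, w_1$ are chosen to hit the target rate $r$ and $\epsilon$ is chosen small enough in terms of $r, w_0, w_1, c$. Form $\Xi = \Xi_X \times_G \Xi_Y$ and consider the associated three-term $\FF_2$-chain complex
\begin{equation*}
  \FF_2^F \xrightarrow{\partial_2} \FF_2^{E_{*0} \sqcup E_{*1} \sqcup E_{0*} \sqcup E_{1*}} \xrightarrow{\partial_1} \FF_2^{V_{00} \sqcup V_{10} \sqcup V_{01} \sqcup V_{11}},
\end{equation*}
with $\partial_1$ sending each edge to the sum of its two endpoints and $\partial_2$ sending each square to the sum of its four bounding edges. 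The composition vanishes by the square-completion property (Lemma~\ref{lem:square-completion}). Let $Q_\Xi$ be the resulting CSS code with $H_z^T = \partial_2$ and $H_x = \partial_1$, so qubits correspond to edges, Z-checks to faces, and X-checks to vertices.

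Two of the three conditions are short to verify. Each qubit sits in exactly two vertex-checks and in $O(w_0+w_1)$ face-checks, while each face-check has weight $4$ and each vertex-check has weight $w_0+w_1$, giving the weight bound $w = O(w_0+w_1)$. For the rate I would write $k = \dim C_1 - \text{rk}\,\partial_1 - \text{rk}\,\partial_2$, count $|V^*|, |E^*|, |F|$ in terms of $|G|, w_0, w_1$ (with $n=|E^*|$ the number of qubits), and upper-bound the two boundary ranks using that the local relations among their rows are controlled by the expansion parameters of $\Xi_X, \Xi_Y$. A careful accounting then shows $k/n \ge r$ provided $w_0, w_1$ are tuned to $r$ and $\epsilon$ is sufficiently small.

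The main obstacle is the linear distance bound $d \ge \delta n$. By the symmetric roles of $\Xi_X$ and $\Xi_Y$ it suffices to prove $d_z \ge \delta n$: given $v \in \Ker \partial_1$ with $|v| < \delta n$, show $v \in \Ima \partial_2$. I would decompose $v$ into its four components in $E_{*0}, E_{*1}, E_{0*}, E_{1*}$ and, for each row indexed by a vertex of $\Xi_X$, read off the cross-section of $v$ as a vertex subset of $\Xi_Y$; and symmetrically by exchanging the roles of $X$ and $Y$. Because $|v|$ is small, almost every row has a small cross-section, and on such rows Lemma~\ref{lem:lossless-split} (the tree-like structure provided by lossless expansion) forces the cross-section to contain unique-neighbor leaves. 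I would then iteratively flip squares attached to such leaves, reducing $|v|$ without changing its $\Ima \partial_2$-coset, and argue that this process terminates at $0$. The technical heart is to glue the row-wise local cancellations into a single global reduction while keeping $|v|$ in the small-set regime of expansion at every step; this is where 2-sidedness (rather than 1-sidedness) is essential, since progress must be made simultaneously along both the $X$- and $Y$-directions.
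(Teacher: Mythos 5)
Your construction is not the paper's, and the difference is fatal for the rate. You place qubits on the edges $E^{*}$ with Z-checks on faces and X-checks on vertices, i.e.\ the cellular chain complex $\FF_2^{F}\rightarrow\FF_2^{E^{*}}\rightarrow\FF_2^{V^{*}}$ (this is the complex the paper attributes to Dinur et al.\ in its conclusion, and those works need local Tanner/base codes precisely to fix the problem below). For a balanced product of a $(w_0,w_1)$-regular and a $(w_1,w_0)$-regular graph one has $|F|=|E_X||E_Y|/|G|$, which grows like $w_0w_1$ per group element while $|E^{*}|$ grows only like $w_0+w_1$; a short computation gives
\begin{equation*}
|E^{*}|-|V^{*}|-|F| \;=\; -\tfrac{1}{|G|}\bigl(|E_X|-|V_{X,0}|-|V_{X,1}|\bigr)\bigl(|E_Y|-|V_{Y,0}|-|V_{Y,1}|\bigr) \;<\;0
\end{equation*}
whenever all degrees are at least $2$, and strongly negative for large constant degree. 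So the naive count $k\ge \dim C_1-\operatorname{rk}\partial_1-\operatorname{rk}\partial_2$ yields nothing, and your hope to ``upper-bound the two boundary ranks using expansion'' points the wrong way: expansion pushes the boundary maps toward full rank, making $k$ smaller, not larger. Without base codes there is no reason for $\dim H_1$ of this square complex to be linear in $n$, let alone $\ge rn$ for arbitrary $r<1$. The distance sketch inherits the same problem (there may be essentially nothing nontrivial whose weight you are bounding), and in any case the square-flipping argument on edge-supported cycles is only gestured at.

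The paper avoids this by putting the qubits on the mixed vertex sets $V_{10}\sqcup V_{01}$, the Z-checks on $V_{00}$, and the X-checks on $V_{11}$, i.e.\ the complex $\FF_2^{V_{00}}\rightarrow\FF_2^{V_{10}}\oplus\FF_2^{V_{01}}\rightarrow\FF_2^{V_{11}}$. There the same naive count gives $k/n\ge (w_\downarrow-w_\uparrow)(w_\leftarrow-w_\rightarrow)/(w_\downarrow w_\leftarrow+w_\uparrow w_\rightarrow)$, which is positive for unbalanced regular graphs and can be driven above any $r<1$ by choosing the degree ratios (this is exactly why Conjecture~\ref{con:2-sided-lossless-expander-with-symmetry} lets you prescribe $w_0/w_1$ in an arbitrary interval). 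The distance is then proved via the small set LTC lemma: for a (normalized) locally minimal $c_1=(v_{10},v_{01})$ in the small-set regime, the lossless structure (Corollary~\ref{cor:lossless-split}) yields $(\tfrac12-6\epsilon)|c_1|_w\le|\partial c_1|_w$, so nonzero locally minimal cycles must be long, and locally minimal distance lower-bounds the true distance. If you want to keep qubits on edges you must introduce local base codes as in the cited constructions; otherwise you should switch to the paper's placement of qubits on $V_{10}\sqcup V_{01}$.
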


We first construct the code in Section~\ref{sec:construct-qLDPC}, and then prove that the code has constant rate, linear distance and finite weight in Section~\ref{sec:proof-qLDPC}.

% ====

%\mh{Quantum Result should move to the beginning and leave proof in this section!} \david{done}

\subsection{Construction of qLDPC} \label{sec:construct-qLDPC}

We first state the conjecture that is assumed for the entire paper.

\begin{conjecture}[2-sided lossless expander with free group action]
  \label{con:2-sided-lossless-expander-with-symmetry}
  There is a family of groups $\{G_i\}$ with $|G_i| \rightarrow \infty$,
  such that for any $\epsilon > 0$ and interval $I \subseteq (0,1)$,
    there exist parameters $(w_0, w_1), (c, \epsilon)$
      and large enough $i_0$ such that
    for all $G = G_i, i \ge i_0$
      there exists a bipartite graph $X = (V_0, V_1, E)$ such that

  \begin{itemize}
    \item $w_0/w_1 \in I$,
    \item $X$ is $(w_0, w_1)$-regular,
    \item $X$ is a 2-sided $(c, \epsilon)$-lossless expander,
    \item $X$ has free $G$-action,
    \item $|V_0| = \Theta(|G|)$.
  \end{itemize}
\end{conjecture}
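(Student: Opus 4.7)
The plan is to pursue a $G$-invariant random construction and attempt to establish 2-sided lossless expansion by a carefully chained union bound. I would fix a sequence of finite groups $\{G_i\}$ with $|G_i| \to \infty$ that admit strong generating-set expansion (for instance $\mathrm{SL}_2(\mathbb{F}_{p_i})$, to which Bourgain--Gamburd type bounds apply), set $V_0 = G \times [n_0]$ and $V_1 = G \times [n_1]$ with $n_0, n_1$ constants satisfying $n_0 w_0 = n_1 w_1$, let $G$ act freely by left translation on the first coordinate, and sample a random $G$-invariant $(w_0, w_1)$-biregular bipartite graph by taking $w_0$ independent $G$-equivariant lifts of uniformly random permutations of $V_1 / G \cong [n_1]$. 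The free action, biregularity, the sizes $|V_0|, |V_1| = \Theta(|G|)$, and the degree ratio $w_0/w_1 = n_1/n_0 \in I$ are all built into the construction; what remains is the 2-sided lossless property.

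The main step is to bound, for each subset $\nu_0 \subseteq V_0$ with $|\nu_0| = s \le c|V_0|$, the probability that $|N_{V_1}(\nu_0)| < (1-\epsilon) w_0 s$ tightly enough to absorb a union bound over all such $\nu_0$, and symmetrically over $\nu_1 \subseteq V_1$. I would split the analysis according to how $\nu_0$ sits relative to the $G$-orbits. When $\nu_0$ meets many orbits with only a few elements in each, the random permutations behave essentially independently across the orbits touched and a Chernoff-style estimate yields the required concentration with room for the usual $\binom{|V_0|}{s}$ counting factor. When $\nu_0$ is concentrated in a small number of orbits, the neighborhood is effectively determined by the images of these orbits under the random generators, and one can try to exploit expansion of $G$ to force $N_{V_1}(\nu_0)$ to nearly saturate the relevant slabs of $V_1$. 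A chaining argument combining the two extremes should, in principle, cover the full range of $s$ and the symmetric statement on the $V_1$ side follows from the same estimate after exchanging the roles of the two factors.

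The hardest part is the intermediate regime in which $s$ is comparable to $|G|$. Here the $G$-invariance correlates randomness across blocks of size $|G|$, so the effective randomness per edge drops by a factor of $|G|$ while the number of candidate subsets of that size remains $\binom{|V_0|}{s}$, and a straightforward union bound fails. Passing this regime likely requires a genuinely new combinatorial ingredient --- perhaps a decomposition of each candidate $\nu_0$ into an orbit-saturated core (analyzed via quotient expansion) together with a sparse residual (analyzed via the remaining randomness), or replacement of the random model by a zigzag-style composition seeded by a small 2-sided lossless expander and amplified by explicit algebraic lifts. This intermediate obstruction is exactly what has so far blocked the upgrade of the known 1-sided group-invariant lossless constructions of Capalbo et al.\ \cite{capalbo2002randomness} to 2-sided ones, and any proof of the conjecture will almost certainly have to confront it directly.
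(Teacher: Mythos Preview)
The statement you are attempting to prove is Conjecture~\ref{con:2-sided-lossless-expander-with-symmetry}, and the paper does not prove it. It is stated explicitly as a conjecture, assumed throughout (see Theorems~\ref{thm:good-qLDPC} and~\ref{thm:2-sided-lossless-expander-with-symmetry-2d}), and listed in Section~\ref{sec:conclusion} as the main open problem: ``An important question left open in this paper is to construct 2-sided lossless expanders.'' There is therefore no paper proof to compare your proposal against.

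As for the proposal itself, you have correctly identified the genuine obstruction. Your plan works in the extreme regimes but, as you yourself note, the union bound fails when $|\nu_0|$ is of order $|G|$: the $G$-invariance collapses the randomness by a factor of $|G|$ while the number of candidate sets does not shrink correspondingly. The two suggested escapes (an orbit-core/sparse-residual decomposition, or a new zigzag-type product) are reasonable research directions, but neither is currently known to work, and your write-up does not supply the missing ingredient. So what you have is not a proof but an honest outline of why the conjecture is hard --- which is consistent with its status in the paper.
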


Now, we introduce the notations used throughout the section.
The balanced product graph is constructed using two bipartite graphs
$X_{\updownarrow} = (V_{0*}, V_{1*}, E_{\updownarrow})$ and $X_{\leftrightarrow} = (V_{*0}, V_{*1}, E_{\leftrightarrow})$.
We denote the vertices and edges of the balanced product graph by
$V_{00}, V_{10}, V_{01}, V_{11}, E_{*0}, E_{*1}, E_{0*}, E_{1*}$
as shown in Fig.~\ref{fig:hypergraph_BG}.

The bipartite graphs $X_{\updownarrow}, X_{\leftrightarrow}$ are chosen to satisfy the conditions given in the following theorem.
This can be obtained assuming the conjecture~\ref{con:2-sided-lossless-expander-with-symmetry}.

\begin{theorem}[] \label{thm:2-sided-lossless-expander-with-symmetry-2d}
Assume Conjecture~\ref{con:2-sided-lossless-expander-with-symmetry} holds.  For any $\epsilon > 0$ and intervals $I_\updownarrow, I_\leftrightarrow \subseteq (0,1)$, there exist parameters $(w_\downarrow, w_\uparrow), (w_\rightarrow, w_\leftarrow),    (c_\updownarrow, \epsilon_\updownarrow), (c_\leftrightarrow, \epsilon_\leftrightarrow)$, such that
  \begin{itemize}%itemize
      \item $w_\downarrow/w_\uparrow \in I_\updownarrow$,
      \item $w_\rightarrow/w_\leftarrow \in I_\leftrightarrow$,
      \item $\epsilon_\updownarrow w_\downarrow \le w_\uparrow, \epsilon_\updownarrow w_\uparrow \le w_\downarrow$,
      \item $w_\uparrow \epsilon_\leftrightarrow, w_\downarrow \epsilon_\leftrightarrow \le \epsilon$,
      \item $\epsilon_\leftrightarrow \le \epsilon$ (this inequality is not necessary since it is implied from the last line), 
      \item $\epsilon_\updownarrow \le \epsilon$,
  \end{itemize}
  and for any $N \in \NN$,
  there exist group $G$ with $|G|>N$,
  and bipartite graphs $X_\updownarrow = (V_{0*}, V_{1*}, E_\updownarrow), X_\leftrightarrow = (V_{*0}, V_{*1}, E_\leftrightarrow)$
  such that 
  \begin{itemize}
      \item $X_\updownarrow$ is $(w_\downarrow, w_\uparrow)$-regular,
      \item $X_\leftrightarrow$ is $(w_\rightarrow, w_\leftarrow)$-regular,
      \item $X_\updownarrow$ is a 2-sided $(c_\updownarrow, \epsilon_\updownarrow)$-lossless expander,
      \item $X_\leftrightarrow$ is a 2-sided $(c_\leftrightarrow, \epsilon_\leftrightarrow)$-lossless expander,
      \item $X_\updownarrow, X_\leftrightarrow$ have free $G$-actions,
      \item $|V_{0*}| = \Theta(|G|)$ and $|V_{*0}| = \Theta(|G|)$.
  \end{itemize}
\end{theorem}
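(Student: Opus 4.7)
The plan is to apply Conjecture~\ref{con:2-sided-lossless-expander-with-symmetry} twice in sequence—first to produce $X_\updownarrow$, then $X_\leftrightarrow$—feeding the output of the first invocation into the input of the second, and exploiting the fact that both invocations draw graphs from the \emph{same} family of groups $\{G_i\}$. Because that family is fixed once and for all by the conjecture, any sufficiently large $G_i$ will support free actions by both graphs simultaneously, which is exactly the joint structure the theorem asks for.

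First I would replace $I_\updownarrow$ by a closed subinterval $I'_\updownarrow = [a,b] \subseteq I_\updownarrow$ with $0 < a \le b < 1$; this is merely a convenient reduction ensuring $I_\updownarrow$ is bounded away from $0$. Set $\epsilon_1 = \min(\epsilon, a)$ and invoke the conjecture with interval $I'_\updownarrow$ and parameter $\epsilon_1$ to obtain degrees $(w_\downarrow, w_\uparrow)$, constants $(c_\updownarrow, \epsilon_\updownarrow)$ with $\epsilon_\updownarrow \le \epsilon_1$, and an index $i_1$ such that for each $i \ge i_1$ there is a suitable $X_\updownarrow^{(i)}$ with free $G_i$-action. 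Since $\epsilon_\updownarrow \le a \le w_\downarrow/w_\uparrow \le 1$, one immediately reads off $\epsilon_\updownarrow w_\uparrow \le w_\downarrow$ and $\epsilon_\updownarrow w_\downarrow \le w_\uparrow$, and of course $\epsilon_\updownarrow \le \epsilon$.

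With $w_\uparrow$ now a concrete integer, set $\epsilon_2 = \epsilon/w_\uparrow$ and invoke the conjecture again with interval $I_\leftrightarrow$ and parameter $\epsilon_2$, producing $(w_\rightarrow, w_\leftarrow), (c_\leftrightarrow, \epsilon_\leftrightarrow)$ with $\epsilon_\leftrightarrow \le \epsilon/w_\uparrow$, together with an index $i_2$ such that for each $i \ge i_2$ there is a suitable $X_\leftrightarrow^{(i)}$ with free $G_i$-action. Using $w_\downarrow \le w_\uparrow$ one gets $w_\uparrow \epsilon_\leftrightarrow \le \epsilon$ and $w_\downarrow \epsilon_\leftrightarrow \le \epsilon$. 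Given any $N$, pick $i \ge \max(i_1, i_2)$ with $|G_i| > N$ (possible because $|G_i| \to \infty$), and set $G := G_i$, $X_\updownarrow := X_\updownarrow^{(i)}$, $X_\leftrightarrow := X_\leftrightarrow^{(i)}$. The remaining bullet points (regularity, lossless expansion, free $G$-action, and $|V_{0*}|, |V_{*0}| = \Theta(|G|)$) are inherited directly from the two applications of the conjecture.

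The construction is essentially bookkeeping once the invocations are staged correctly; there is no deep obstacle. The one point that requires care, and which dictates the staging, is the cross-coupling between the two graphs' parameters: the theorem demands $\epsilon_\updownarrow$ be comparable to the (unknown-in-advance) ratio $w_\downarrow/w_\uparrow$, and $\epsilon_\leftrightarrow$ be small relative to the (unknown-in-advance) integer $w_\uparrow$. Both are dealt with by the sequential order (second invocation sees first's output) and by shrinking $I_\updownarrow$ to $I'_\updownarrow$ so that the ratio is \emph{a priori} bounded below by $a > 0$, letting us pick $\epsilon_1$ small enough before ever instantiating the first graph.
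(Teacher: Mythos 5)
Your proposal is correct and follows essentially the same route as the paper: two staged invocations of Conjecture~\ref{con:2-sided-lossless-expander-with-symmetry} (the second using the degree output of the first to set its $\epsilon$), followed by choosing a single sufficiently large $G_i$ from the common family so that both graphs carry a free action of the same group. Your closed-subinterval trick for bounding $w_\downarrow/w_\uparrow$ below before the first invocation is a slightly cleaner way to avoid the mild circularity in the paper's choice $\epsilon_\updownarrow \le \min(\epsilon, w_\uparrow/w_\downarrow, w_\downarrow/w_\uparrow)$, but it is the same argument.
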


\begin{proof}
  We find $X_{\updownarrow}$ and $X_{\leftrightarrow}$ in order.
  Namely, we first find $w_\downarrow, w_\uparrow, c_\updownarrow, \epsilon_\updownarrow, i_0$ that satisfy
  the conjecture and 
  $w_\downarrow/w_\uparrow \in I_\updownarrow$, $\epsilon_\updownarrow \le \min(\epsilon, w_\uparrow/w_\downarrow, w_\downarrow/w_\uparrow)$.
  Then, we find $w_\rightarrow, w_\leftarrow, c_\leftrightarrow, \epsilon_\leftrightarrow, i'_0$ that satisfy
  the conjecture and 
  $w_\rightarrow/w_\leftarrow \in I_\leftrightarrow$, $\epsilon_\leftrightarrow \le \epsilon/\max(w_\uparrow, w_\downarrow)$. This automatically satisfies $\epsilon_\leftrightarrow \le \epsilon$.

  Now, we pick large enough $G = G_i$ such that $|G| > N$ and $i > i_0, i_0'$.

  Finally, we find $X_{\updownarrow}$ and $X_{\leftrightarrow}$ that are lossless expanders, have free $G$-action, 
    and satisfy linear size conditions $|V_{0*}| = \Theta(|G|)$, $|V_{*0}| = \Theta(|G|)$.
  These are possible because of the conjecture.
\end{proof}

Now, we are ready to construct the code in Theorem~~\ref{thm:good-qLDPC}.
\begin{itemize}
  \item By Theorem~\ref{thm:2-sided-lossless-expander-with-symmetry-2d}, there exists 
    $X_\updownarrow, X_\leftrightarrow$,
    such that the parameters satisfy
    \begin{itemize}
        \item $\frac{(w_\downarrow-w_\uparrow)(w_\leftarrow-w_\rightarrow)}
        {w_\downarrow w_\leftarrow + w_\uparrow w_\rightarrow} \ge r$,
        \item $|V_{0*}| = \Theta(|G|), |V_{*0}| = \Theta(|G|)$,
        \item $w_\uparrow \epsilon_\leftrightarrow, w_\downarrow \epsilon_\leftrightarrow, \epsilon_\leftrightarrow, \epsilon_\updownarrow \le \epsilon < 1/12$.
    \end{itemize}
  \item Next, we take the balanced product $X_\updownarrow \times_G X_\leftrightarrow$
    which gives a 3-term chain complex 
    $\FF_2^{V_{00}}
    \xrightarrow{\partial_2} \FF_2^{V_{10}} \oplus \FF_2^{V_{01}}
    \xrightarrow{\partial_1} \FF_2^{V_{11}}$,
    where $\partial_2(v_{00}) = (L(E_{*0})(v_{00}), L(E_{0*})(v_{00}))$ and $\partial_1((v_{10}, v_{01})) = L(E_{1*})(v_{10}) + L(E_{*1})(v_{01})$ are induced from the adjacency matrices of the balanced product graph,
    and $L(E_{*0}): \FF_2^{V_{00}} \rightarrow \FF_2^{V_{10}}$,
      $L(E_{0*}): \FF_2^{V_{00}} \rightarrow \FF_2^{V_{01}}$,
      $L(E_{*1}): \FF_2^{V_{01}} \rightarrow \FF_2^{V_{11}}$,
      $L(E_{1*}): \FF_2^{V_{10}} \rightarrow \FF_2^{V_{11}}$.
  \item Finally, we obtain a quantum CSS code
    $Q(H_x = \partial_1^T, H_z = \partial_2)$
    by taking $\partial_1^T$ and $\partial_2$ as the parity-check matrices
    where the vertices in $V_{00}$, $V_{10} \cup V_{01}$, and $V_{11}$
    represent the Z-stabilizers, the qubits, and the X-stabilizers.
\end{itemize}

We need 2-sided lossless expanders for our construction
  because we need to show both $d_z$ and $d_x$ are linear to $n$.
One side of the lossless expansion will show $d_z$ is linear to $n$
  and the other side will show $d_x$ is linear to $n$.

\subsection{Proof of Theorem~\ref{thm:good-qLDPC}} \label{sec:proof-qLDPC}

The main challenge for the proof is to show that the qLDPC has linear distance. This  relies on a series of lemmas in Appendix~\ref{sec:distance-lemma}.
%Most of them are simple implications.
The main work is done in what we called the small set LTC lemma which is proven in Appendix~\ref{sec:proof-key-lemmas}.

\begin{lemma}[Small set LTC] \label{lem:small-set-LTC}
  Consider the chain complex, $C_2 \xrightarrow{\partial_2} C_1 \xrightarrow{\partial_1} C_0$, 
    constructed from the balanced product graph
    $X^{\updownarrow} \times_G X^{\leftrightarrow} = (V^*, E^*, F)$,
  where $X^{\updownarrow}$ and $X^{\leftrightarrow}$ satisfy
  \begin{itemize}
    \item $X_\updownarrow$ is $(w_\downarrow, w_\uparrow)$-regular,
    \item $X_\leftrightarrow$ is $(w_\rightarrow, w_\leftarrow)$-regular,
    \item $X_\updownarrow$ is a 1-sided $(c_\updownarrow, \epsilon_\updownarrow)$-lossless expander from $V_{0*}$ to $V_{1*}$,
    \item $X_\leftrightarrow$ is a 1-sided $(c_\leftrightarrow, \epsilon_\leftrightarrow)$-lossless expander from $V_{*0}$ to $V_{*1}$,
    \item $X_\updownarrow, X_\leftrightarrow$ have free $G$-actions,
    \item $\epsilon_\updownarrow w_\downarrow \le w_\uparrow$,
    \item $w_\downarrow \epsilon_\leftrightarrow \le \epsilon$,
    \item $\epsilon_\leftrightarrow \le \epsilon$ (implied by the last line), 
    \item $\epsilon_\updownarrow \le \epsilon$.
  \end{itemize}

If $c_0 = \partial{c_1}$ for some short (normalized) locally minimal $c_1 = (v_{10}, v_{01}) \in C_1$, 
    with $|v_{10}| < \min (c_\updownarrow |V_{0*}|/w_\uparrow, c_\leftrightarrow |V_{*0}|)$ and $
    |v_{01}| < \min (c_\leftrightarrow |V_{*0}|/w_\leftarrow, c_\updownarrow |V_{0*}|)$,
  then
  \begin{equation}
    \left(\frac{1}{2} - 6 \epsilon\right)|c_1|_w \le |c_0|_w,
  \end{equation}
    where the normalized weight of the vectors $|c_1|_w = |v_{10}|/w_\downarrow + |v_{01}|/w_\rightarrow$
    and $|c_0|_w = |c_0|/(w_\downarrow w_\rightarrow)$. %Further details could be found in Appendix~\ref{sec:distance-lemma}.
\end{lemma}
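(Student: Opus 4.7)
The plan is to work in the $G$-equivariant lift, where $\tilde v_{10} \subseteq V_{1*} \times V_{*0}$ and $\tilde v_{01} \subseteq V_{0*} \times V_{*1}$ are $G$-invariant subsets with $|\tilde v_{10}| = |v_{10}|\,|G|$ and analogously for $v_{01}$. The main decomposition is $c_0 = c_0^X \oplus c_0^Z$ where $c_0^X := L(E_{1*})\,v_{10}$ and $c_0^Z := L(E_{*1})\,v_{01}$, so that $|c_0| = |c_0^X| + |c_0^Z| - 2|c_0^X \cap c_0^Z|$. I will lower-bound each of the first two terms by the lossless expansion of $X_\leftrightarrow$, $X_\updownarrow$ applied slice-by-slice, and upper-bound the intersection by combining the square-completion property of the balanced product with the local minimality of $c_1$.

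\emph{Step 1 (slicing and slice-wise expansion).} Slice $\tilde v_{10}$ into rows $\tilde v_{10}^{x_1} = \{y_0 : (x_1, y_0) \in \tilde v_{10}\} \subseteq V_{*0}$ indexed by $x_1 \in V_{1*}$. Because $\tilde v_{10}$ is $G$-invariant and $G$ acts freely, the $G$-orbit of $x_1$ contributes $|G|\cdot|\tilde v_{10}^{x_1}|$ points to $\tilde v_{10}$, so $|\tilde v_{10}^{x_1}| \le |v_{10}| < c_\leftrightarrow |V_{*0}|$; the lossless expansion of $X_\leftrightarrow$ from $V_{*0}$ to $V_{*1}$ therefore applies on each row, and the standard unique-neighbor corollary gives a set of at least $(1-2\epsilon_\leftrightarrow) w_\rightarrow |\tilde v_{10}^{x_1}|$ vertices in $V_{*1}$ having exactly one $X_\leftrightarrow$-neighbor in $\tilde v_{10}^{x_1}$. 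Summing over $x_1$ and descending to the quotient yields
\[
|c_0^X| \ge (1 - 2\epsilon_\leftrightarrow)\,w_\rightarrow\,|v_{10}|,
\qquad
|c_0^Z| \ge (1 - 2\epsilon_\updownarrow)\,w_\downarrow\,|v_{01}|,
\]
where the second bound is obtained analogously by slicing $\tilde v_{01}$ into columns $\tilde v_{01}^{y_1} \subseteq V_{0*}$ and using $|v_{01}| < c_\updownarrow |V_{0*}|$ to invoke the lossless expansion of $X_\updownarrow$.

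\emph{Step 2 (cross-cancellation via square completion and local minimality).} Each $v_{11} \in c_0^X \cap c_0^Z$ has both its row-degree $N_r$ and column-degree $N_c$ odd, hence both $\ge 1$, so it is counted in the cross-incidence sum $\sum_{v_{11}} N_r(v_{11})\, N_c(v_{11})$. By the square-completion property (Lemma~\ref{lem:square-completion}), each such incidence at $(x_1,y_1)$ reindexes uniquely by the opposite corner $(x_0,y_0)$ of the square, giving $\sum_{v_{11}} N_r N_c = \sum_{v_{00} \in V_{00}} a'(v_{00})\, b'(v_{00})$, with $a', b'$ counting $v_{10}$- and $v_{01}$-neighbors at $v_{00}$. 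Normalized local minimality against each singleton flip $\partial_2(e_{v_{00}})$ yields the pointwise constraint $a'/w_\downarrow + b'/w_\rightarrow \le 1$, which, combined with the double-count identities $\sum a' = w_\uparrow |v_{10}|$, $\sum b' = w_\leftarrow |v_{01}|$ and the parameter hypotheses $w_\uparrow \epsilon_\leftrightarrow, w_\downarrow \epsilon_\leftrightarrow \le \epsilon$, is intended to yield an upper bound of the form
\[
2|c_0^X \cap c_0^Z| \le \bigl(\tfrac{1}{2} + O(\epsilon)\bigr) \bigl(w_\rightarrow |v_{10}| + w_\downarrow |v_{01}|\bigr).
\]
Substituting into $|c_0| = |c_0^X| + |c_0^Z| - 2|c_0^X \cap c_0^Z|$ and dividing by $w_\downarrow w_\rightarrow$ then gives $(\tfrac{1}{2} - 6\epsilon)|c_1|_w \le |c_0|_w$, with the constant $6$ absorbing the $2\epsilon_\leftrightarrow, 2\epsilon_\updownarrow$ errors from Step 1 together with the $O(\epsilon)$ error from Step 2.

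\emph{Main obstacle.} The technical heart is the cross-incidence estimate in Step 2. The pointwise AM--GM consequence $\alpha\beta \le (\alpha+\beta)/2$ of $\alpha + \beta \le 1$ (with $\alpha = a'/w_\downarrow$, $\beta = b'/w_\rightarrow$) alone only yields a cross-incidence bound of order $(w_\uparrow w_\rightarrow |v_{10}| + w_\downarrow w_\leftarrow |v_{01}|)/2$, which carries spurious factors $w_\uparrow, w_\leftarrow$. To eliminate these and produce the clean $\tfrac{1}{2}$ constant, I expect to need the secondary size hypotheses $|v_{10}| < c_\updownarrow |V_{0*}|/w_\uparrow$ and $|v_{01}| < c_\leftrightarrow |V_{*0}|/w_\leftarrow$: these allow a further application of the lossless expansion of $X_\updownarrow$ or $X_\leftrightarrow$ to the $V_{0*}$- or $V_{*0}$-projections of $v_{10}$, $v_{01}$ (of respective sizes $\le w_\uparrow |v_{10}|$, $\le w_\leftarrow |v_{01}|$), which interacts with local minimality to control cross-incidences in a weight-sensitive way. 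The bookkeeping between the two layers of expansion, mediated by the square-completion identity, is where essentially all the technical work will live.
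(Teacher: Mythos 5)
Your overall architecture is genuinely different from the paper's: you lower-bound the two syndrome components $|c_0^X|,|c_0^Z|$ separately via unique-neighbor expansion and then subtract $2|c_0^X\cap c_0^Z|$ by inclusion--exclusion, whereas the paper never estimates that intersection. Instead it fixes, for each $x_{00}$, distinguished neighbor sets $n_{v_{10}}(x_{00})\subseteq N_{v_{10}}(x_{00})$ and $n_{v_{01}}(x_{00})\subseteq N_{v_{01}}(x_{00})$ coming from Corollary~\ref{cor:lossless-split} (the tree-splitting of a lossless expander, proved by max-flow/min-cut), builds a region $A$ of squares with one leg inside a distinguished set and the other leg outside, gets $|A|\ge\frac12(w_\rightarrow|v_{10}|+w_\downarrow|v_{01}|)$ directly from local minimality, and shows all but an $O(\epsilon)$ fraction of $A$ lands on unique neighbors. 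Your Step~1 is fine and matches the paper's Lemma~\ref{lem:unique-expander} plus Corollary~\ref{cor:freely-implies-lossless}.

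The gap is Step~2, and it is not merely unfinished bookkeeping: the specific inequality you commit to cannot produce the required constant. To conclude you need $|c_0^X\cap c_0^Z|\le(\frac14+O(\epsilon))(w_\rightarrow|v_{10}|+w_\downarrow|v_{01}|)$. Your chain bounds the intersection by the square count $\sum_{x_{00}}a'b'$ and then applies the pointwise bound $\alpha\beta\le(\alpha+\beta)/2$ with $\alpha=a'/w_\downarrow$, $\beta=b'/w_\rightarrow$. Even if you repair the $w_\uparrow,w_\leftarrow$ overcounting (which requires replacing $a',b'$ by partition classes $n_{v_{10}}(x_{00}),n_{v_{01}}(x_{00})$ summing to $|v_{10}|,|v_{01}|$ --- i.e.\ exactly Corollary~\ref{cor:lossless-split}, not just ``a further application of lossless expansion to the projections''), the bound $\alpha\beta\le(\alpha+\beta)/2$ only yields $\sum a'b'\lesssim\frac12(w_\rightarrow|v_{10}|+w_\downarrow|v_{01}|)$, and then $|c_0|\ge(1-O(\epsilon))(\cdots)-2\cdot\frac12(\cdots)=-O(\epsilon)(\cdots)$, which is vacuous. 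The missing idea is the quadratic use of local minimality: since $\alpha+\beta\le1$, one has $\alpha\beta\le(\alpha+\beta)^2/4\le(\alpha+\beta)/4$, which applied to the partition classes gives the needed $\frac14$ coefficient. You also still need to control the residual--residual products $|N_{v_{10}}(x_{00})-n_{v_{10}}(x_{00})|\cdot|N_{v_{01}}(x_{00})-n_{v_{01}}(x_{00})|$, which is where the hypotheses $\epsilon_\updownarrow w_\downarrow\le w_\uparrow$ and $w_\uparrow\epsilon_\leftrightarrow\le\epsilon$ enter (the paper's $D_{11}$ estimate, using the majorization clause of Corollary~\ref{cor:lossless-split}); neither appears in your plan. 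With those two ingredients added, your inclusion--exclusion route should close, but as written the central estimate is both unproved and under-equipped.
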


%$|c_0|_w, |c_1|_w$ are called the normalized weight of the vectors. Normalized locally minimal is defined in Appendix~\ref{sec:distance-lemma}.

\begin{proof}[Proof of Theorem~\ref{thm:good-qLDPC}]
  % construct
  Now, we prove the code constructed in the previous section
    has arbitrarily large code length, constant rate, linear distance and finite weight.
  
  \begin{proof}[Proof of arbitrarily large code length]
    The code length $n = |V_{01}| + |V_{10}| = |V_{0*}||V_{*1}|/|G| + |V_{1*}||V_{*0}|/|G| = \Theta(|G|)$.
    By Theorem~\ref{thm:2-sided-lossless-expander-with-symmetry-2d},
    $|G|$ can be arbitrarily large.
    So the code length $n$ can be arbitrarily large.
  \end{proof}
  % rate
  \begin{proof}[Proof of constant rate]
    There are $n = |V_{10}|+|V_{01}|$ qubits, 
      $m_z = |V_{00}|$ Z-stabilizers and $m_x = |V_{11}|$ X-stabilizers,
      so $k \ge n - m_z - m_x = |V_{10}|+|V_{01}|-|V_{00}|-|V_{11}|$.
    Because $X_\updownarrow, X_\leftrightarrow$ are regular bipartite graphs,
      we know the ratio $|V_{00}| : |V_{10}| : |V_{01}| : |V_{11}|
        = w_\uparrow w_\leftarrow : w_\downarrow w_\leftarrow
        : w_\uparrow w_\rightarrow : w_\downarrow w_\rightarrow$.
    Therefore, the rate, 
      $k/n \ge \frac{(w_\downarrow-w_\uparrow)(w_\leftarrow-w_\rightarrow)}
      {w_\downarrow w_\leftarrow + w_\uparrow w_\rightarrow} \ge r$.
  \end{proof}

  % distance
  \begin{proof}[Proof of linear distance]
    By Lemma~\ref{lem:local-minimal-implies-linear-distance}, 
      to show linear distance
      it is sufficient to show 
      the chain complex $\cC$ and the dual chain complex $\cC^T$ have linear local minimal distance.
    % By \ref{lem:freely-implies-copy}, 
    %   the 1d subgraphs are $\Theta(1)$ copies of $X_\updownarrow, X_\leftrightarrow$.
    %   So the 1d subgraphs are lossless expanders with the same parameters.
    By Corollary~\ref{cor:linear-local-minimal-distance}, 
      it is sufficient to check $\epsilon<1/12$ which holds by assumption.
  \end{proof}

  % LDPC
  \begin{proof}[Proof of finite weight]
    From the code construction, we see
      each Z-stabilizer is connected to $w_\downarrow + w_\rightarrow$ qubits,
      each X-stabilizer is connected to $w_\uparrow + w_\leftarrow$ qubits,
      each qubit in $V_{10}$ is connected to $w_\uparrow + w_\rightarrow$ stabilizers,
      and each qubit in $V_{01}$ is connected to $w_\downarrow + w_\leftarrow$ stabilizers.
    So the code is LDPC with weight
      $\max(w_\downarrow + w_\rightarrow, w_\uparrow + w_\leftarrow, w_\uparrow + w_\rightarrow, w_\downarrow + w_\leftarrow) = \Theta(1)$.
  \end{proof}
\end{proof}
% -----
\section{Construct and prove linear time decoder for qLDPC} \label{sec:decoder}
% introductory paragraph

% what do we do

% show the construction for qLDPC has linear time decoder
%   under the same assumption

% ====

In this section, we show that given the syndrome of a short error,
  there is a linear time decoder that corrects the error.
The decoder we use is the iterative greedy decoder
  which is a direct generalization of the decoder used in the lossless expander codes \cite{sipser1996expander}.
We describe the decoding algorithm in \ref{sec:decoding-algorithm},
  prove the correctness of the decoder in Sec.~\ref{sec:correctness-decoder},
  and show the decoder halts in linear time in Sec.~\ref{sec:linear-time-decoder}.

We first review some terminologies.
  The Z(X)-errors are the Z(X)-flips that act on the qubits.
  The goal of the decoder is to use the measurement outcome
    from the X(Z)-stabilizers
    to infer the Z(X)-errors.
  The measurement outcome is called the syndrome.
Using the language of the chain complex,
  a Z-error corresponds to a vector $c_1 \in \FF_2^{V_{01}} \oplus \FF_2^{V_{10}}$
  and the X-syndrome of the error corresponds to the vector $\partial c_1 \in \FF_2^{V_{11}}$.
Similarly, a X-error corresponds to $c'_1 \in \FF_2^{V_{01}} \oplus \FF_2^{V_{10}}$
  and the Z-syndrome of the error corresponds to $\partial^T c'_1 \in \FF_2^{V_{00}}$.

Without loss of generality, we focus on decoding the Z-error 
%  We explain why this is sufficient to decode all Pauli errors.
because our code construction is symmetric and  the code is a CSS code, i.e., we can decode the Z-error and the X-error separately.
%  Because our code construction is symmetric between the chain complex and the dual chain complex, what is said about the Z-error also holds for the X-error.

\subsection{Decoding algorithm} \label{sec:decoding-algorithm}

Here, we describe a decoder that corrects the Z-error.

\begin{construction}[Decoding algorithm for the Z-error] 
  Input: $c_0 \in C_0$
  
  \begin{enumerate}
    \item Given the current syndrome, $v_{11} \subseteq V_{11}$, 
      find a vertex $x_{00} \in V_{00}$ 
      and a subset of its neighbors
        $n_{10}(x_{00}) \subseteq N_{V_{10}}(x_{00})$,
        $n_{01}(x_{00}) \subseteq N_{V_{01}}(x_{00})$
      such that after applying Z-flips at $n_{10}$, $n_{01}$,
      the number of syndromes that goes from 1 to 0 $\ge \beta \cdot$ the number of syndromes that is changed,
      where $\beta = 1 - 12 \epsilon$.
      (We call this the flippability condition and say $n_{10}(x_{00}), n_{01}(x_{00})$ is flippable.)
      Apply Z-flips at $n_{10}$, $n_{01}$
        and update the syndrome.
    \item Repeat, until no such vertex exists.
  \end{enumerate}
\end{construction}

When $\epsilon < 1/24$, we have $\beta > 1/2$, so the number of syndromes strictly reduces in each iteration.
This implies the number of iterations is at most the number of syndromes.

% ====

\subsection{Correctness of the decoder} \label{sec:correctness-decoder}

Now, we show the decoder correctly removes all the errors when the initial number of errors is small.

\begin{theorem} \label{thm:correctness-of-decoder}
  If the initial number of errors satisfies
  \begin{equation*}
    |c_1| < \min(w_\downarrow, w_\rightarrow) 
      (\frac{1}{2} - 6 \epsilon)
      (\min(\frac{\min (c_\leftrightarrow|V_{0*}|/w_\leftarrow, c_\updownarrow|V_{*0}|)}{w_\downarrow},
      \frac{\min (c_\updownarrow|V_{*0}|/w_\uparrow, c_\leftrightarrow|V_{0*}|)}{w_\rightarrow}) - 1)
      = \Theta(n),
  \end{equation*}
then the decoder removes all errors and the final codeword is the original closest codeword.
\end{theorem}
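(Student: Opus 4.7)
The plan is to track the residual Z-error and its syndrome throughout the decoding process, apply Lemma~\ref{lem:small-set-LTC} at the moment the decoder halts, and conclude that the final residual is zero. Let $c_1^{(t)}$ denote the cumulative correction applied after $t$ iterations, $e^{(t)} := c_1 + c_1^{(t)}$ the residual, and $s^{(t)} := \partial_1 e^{(t)}$ its syndrome, with $e^{(0)} = c_1$ and $s^{(0)} = \partial_1 c_1$.

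First, I would establish strict monotone decrease of the syndrome weight. The flippability condition with $\beta = 1 - 12\epsilon$ and $\epsilon < 1/24$ (which is implied by the construction's $\epsilon < 1/12$) guarantees that each iteration reduces $|s^{(t)}|$ by at least $(2\beta - 1)k = (1 - 24\epsilon)k$, where $k \ge 1$ is the number of toggled syndromes. Hence the decoder halts after at most $T \le |s^{(0)}|/(1 - 24\epsilon)$ iterations, which is $O(|c_1|)$ via the trivial bound $|s^{(0)}| \le (w_\downarrow + w_\rightarrow)|c_1|$.

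Second, I would control the growth of the residual. Each iteration flips at most $w_\downarrow$ qubits in $V_{10}$ and at most $w_\rightarrow$ qubits in $V_{01}$, so the normalized weight $|e^{(t)}|_w = |v_{10}|/w_\downarrow + |v_{01}|/w_\rightarrow$ grows by at most $2$ per iteration. Combined with the iteration bound and the initial estimate $|s^{(0)}|_w \le |c_1|_w$ (the boundary map $\partial_1$ is nonexpansive in normalized weight), the hypothesis $|c_1| < \min(w_\downarrow, w_\rightarrow)(1/2 - 6\epsilon)(M - 1)$ --- with $M$ the Small set LTC threshold of Lemma~\ref{lem:small-set-LTC} --- is chosen precisely so that $|e^{(t)}|_w < M$ for all $t \le T$. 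The $-1$ buffer in the bound exactly absorbs the worst-case growth of $|e|_w$ during the final iteration before termination.

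Third, I would conclude at termination. When the decoder halts at step $T$, no flippable vertex for $e^{(T)}$ exists, so $e^{(T)}$ is locally minimal in the sense of Lemma~\ref{lem:small-set-LTC}. Since $|e^{(T)}|_w < M$ by the previous step, the lemma gives $(1/2 - 6\epsilon)|e^{(T)}|_w \le |s^{(T)}|_w$. Combined with $|s^{(T)}|_w \le |e^{(T)}|_w$ from nonexpansion, the strict monotone decrease of $|s^{(t)}|_w$, and the tight buffer in the hypothesis, the combined estimates force $s^{(T)} = 0$ and hence, via the Small set LTC inequality, $|e^{(T)}|_w = 0$, so $e^{(T)} = 0$. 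The accumulated correction then exactly matches the initial error, giving the original codeword.

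The main obstacle is the coupled potential bookkeeping that ties together the syndrome decrease and the residual growth. The factor $(1/2 - 6\epsilon)$ in the hypothesis is exactly the coefficient supplied by Lemma~\ref{lem:small-set-LTC}, and the $-1$ absorbs the $|e|_w$-jump of a single iteration, so the two arguments just barely close. Verifying that this buffer indeed propagates through every iteration and that the monotone syndrome decrease cannot stall inside the short regime requires a careful case analysis at the end of the decoding trajectory.
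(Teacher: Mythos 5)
Your overall architecture matches the paper's (keep the residual short, apply the small-set LTC lemma at halt time), but there are three concrete gaps, two of which are fatal as written. First and most importantly, nothing in your argument rules out the decoder halting with a \emph{nonzero} syndrome. You write that at halt time the LTC inequality $(1/2-6\epsilon)|e^{(T)}|_w \le |s^{(T)}|_w$ together with $|s^{(T)}|_w \le |e^{(T)}|_w$ ``forces $s^{(T)}=0$'' --- it does not; these two inequalities are mutually consistent for any short nonzero residual. What is needed is the separate statement (Lemma~\ref{lem:found-if-short} in the paper) that whenever the residual is short and the syndrome is nonempty, a flippable pair $(n_{10}(x_{00}), n_{01}(x_{00}))$ with $\beta = 1-12\epsilon$ actually \emph{exists}; this is proved by the long averaging argument over $x_{00}$ using the tree-decomposition of lossless expanders (Corollary~\ref{cor:lossless-split}), and your proposal omits it entirely. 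Second, your Step 2 does not close: bounding the residual by ``growth per iteration $\times$ number of iterations'' gives $|e^{(T)}|_w \le |c_1|_w + 2T$ with $T = O(|s^{(0)}|)$, which can far exceed the LTC threshold $M$ under the stated hypothesis (the hypothesis only grants roughly $M/2$ of initial normalized weight). The argument that actually works is the first-violation contradiction of Lemma~\ref{lem:short-remain-short}: if shortness first fails at time $T$, then $|c_1(T-1)|_w \ge M-1$, and applying the LTC inequality at $T-1$ together with the monotone decrease of the syndrome gives $(\frac12-6\epsilon)(M-1) \le |c_0(T-1)|_w \le |c_0(0)|_w \le |c_1(0)|_w$, contradicting the hypothesis. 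You correctly name this ``coupled potential'' mechanism in your closing paragraph, but you defer it as an obstacle rather than carrying it out, and it is the entire content of the short-remains-short lemma.

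Third, the LTC lemma requires the error to be \emph{normalized locally minimal}, and the literal residual $e^{(t)}$ you track has no reason to be: the decoder's halting condition (no $\beta$-flippable vertex for the \emph{syndrome}) is not the same as local minimality of the \emph{error weight}. The paper avoids this by defining $c_1(t)$ as the minimal-normalized-weight error consistent with the current syndrome (error taken modulo stabilizers), which is automatically locally minimal; this choice is what legitimizes every invocation of Lemma~\ref{lem:small-set-LTC}, both in the intermediate contradiction argument and at halt time, where $c_0(T)=0$ plus shortness forces the minimal representative to vanish and hence the residual to lie in $\Ima\,\partial_2$, i.e.\ to be a stabilizer. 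Two smaller points: your claim that $\epsilon<1/12$ implies $\epsilon<1/24$ is backwards ($1/24$ is the stronger requirement and must be imposed separately for $\beta>1/2$), and your per-iteration jump bound of $2$ rather than the paper's $1$ would require widening the $-1$ buffer in the hypothesis to $-2$.
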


To show this, we show two lemmas: Lemma~\ref{lem:found-if-short} says, when the error is small, we can find a pair of flippable $(n_{10}(x_{00}), n_{01}(x_{00}))$. Lemma~\ref{lem:short-remain-short} says, if the initial error is small, then the error remains small throughout the algorithm. Note that because for quantum codes, error is only defined up to stabilizers, when we say the error of a syndrome, we pick the error with the smallest normalized weight. When the error has smallest normalized weight, it is guaranteed to be normalized locally minimal.

We denote $c_0(t)$ to be the syndrome at time $t$,
  and $c_1(t) = (v_{10}(t), v_{01}(t))$ to be the error with the smallest normalized weight $|c_1(t)|_w$
  such that $\partial c_1(t) = c_0(t)$.

\begin{lemma} [Found if short] \label{lem:found-if-short}
  If a small error has non empty syndrome, 
    $|v_{10}| < \min (c_\leftrightarrow |V_{0*}|/w_\leftarrow, c_\updownarrow |V_{*0}|), 
      |v_{01}| < \min (c_\updownarrow |V_{*0}|/w_\uparrow, c_\leftrightarrow |V_{0*}|)$.
    then one can find a vertex $x_{00}$ with $n_{10}(x_{00}), n_{01}(x_{00})$
    that is flippable.
\end{lemma}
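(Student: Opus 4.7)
The plan is to combine the Small set LTC lemma (Lemma~\ref{lem:small-set-LTC}) with an averaging argument over the Z-stabilizer vertices $x_{00} \in V_{00}$, in the spirit of the Sipser--Spielman decoder for classical lossless expander codes. First, I would invoke Lemma~\ref{lem:small-set-LTC} on the current error. Since $c_1 = (v_{10}, v_{01})$ is chosen to have the smallest normalized weight among all errors with syndrome $c_0$, it is normalized locally minimal, and under the size bounds in the hypothesis (possibly after swapping the roles of the two directions of the balanced product to match the hypothesis of Small set LTC exactly), we obtain
\[
  |c_0| \;\ge\; \left(\tfrac{1}{2} - 6\epsilon\right)\bigl(w_\rightarrow|v_{10}| + w_\downarrow|v_{01}|\bigr),
\]
so the syndrome carries a constant fraction of the total error weight.

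Next, for each $x_{00} \in V_{00}$, I would consider the canonical local correction $e(x_{00})$ supported on $n_{10}(x_{00}) := v_{10} \cap N_{V_{10}}(x_{00})$ and $n_{01}(x_{00}) := v_{01} \cap N_{V_{01}}(x_{00})$, and expand both $\sum_{x_{00}} |\partial_1 e(x_{00})|$ and $\sum_{x_{00}} |c_0 \cap \partial_1 e(x_{00})|$ as sums over the faces of the balanced product. The key combinatorial observation is that each face $(x_{00}, x_{10}, x_{01}, x_{11}) \in F$ contributes to $\partial_1 e(x_{00})(x_{11})$ precisely when exactly one of $x_{10} \in v_{10}$ or $x_{01} \in v_{01}$ holds (up to collisions at the same $x_{11}$). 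The hypotheses $\epsilon_\updownarrow \le \epsilon$ and $w_\downarrow \epsilon_\leftrightarrow \le \epsilon$ bound the collision terms, and combined with the Small set LTC lower bound on $|c_0|$, the plan is to establish
\[
  \sum_{x_{00}} |c_0 \cap \partial_1 e(x_{00})| \;\ge\; \beta \sum_{x_{00}} |\partial_1 e(x_{00})|, \qquad \beta = 1 - 12\epsilon.
\]
By pigeonhole, some specific $x_{00}$ then satisfies the flippability condition with the subsets $(n_{10}(x_{00}), n_{01}(x_{00}))$, proving the lemma.

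The main obstacle I foresee is controlling the parity-cancellation error terms. When two qubits in $c_1 \cap N(x_{00})$ are both adjacent to the same $x_{11}$, they cancel in $\partial_1 e(x_{00})$, so $x_{11}$ is not counted as changed; this pessimistically weakens the lower bound on the 1-to-0 syndrome count. Such collisions are precisely the events ruled out by unique neighborhoods in a lossless expander, and the parameter assumptions $\epsilon_\updownarrow \le \epsilon$ and $w_\downarrow \epsilon_\leftrightarrow \le \epsilon$ are what keep the fractional loss within the $12\epsilon$ budget. Bookkeeping these error terms carefully so that the threshold $\beta = 1 - 12\epsilon$ survives is the technical heart of the argument.
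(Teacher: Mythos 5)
Your overall strategy --- fix a local flip candidate for each $x_{00}$, sum the ``good'' and ``changed'' syndrome counts over all $x_{00}$, and extract a flippable vertex by averaging --- is the same as the paper's. But there is a genuine gap in how the flip candidates are chosen and in which error terms you control. You take $n_{10}(x_{00}) = v_{10}\cap N_{V_{10}}(x_{00})$ and $n_{01}(x_{00}) = v_{01}\cap N_{V_{01}}(x_{00})$, the full local restrictions of the error. The paper instead builds $n_{v_{10}}(x_{00})$, $n_{v_{01}}(x_{00})$ from Corollary~\ref{cor:lossless-split} (the max-flow/tree decomposition of a lossless expander), so that the cells $\{n_{v_{10}}(x_{00})\}_{x_{00}}$ \emph{partition} $v_{10}$ while the leftovers $N_{v_{10}}(x_{00})-n_{v_{10}}(x_{00})$ have size at most $\epsilon_\leftrightarrow w_\rightarrow$, with a majorization bound on how many $x_{00}$ carry a nonempty leftover. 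This is not cosmetic: the partition is what lets each unique-neighbor syndrome vertex be attributed to a single $x_{00}$ in the double counting, and the leftover bounds are what control the dangerous cross term.

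That cross term is the step your proposal does not address. You only discuss collisions within one side ($x_{11}$ having two neighbors in $v_{10}$, or two in $v_{01}$), which unique expansion of the one-dimensional graphs (Lemma~\ref{lem:unique-expander}) indeed handles. The event that actually threatens the threshold is a syndrome vertex $x_{11}$ with one error neighbor $x_{10}\in v_{10}\cap N_{10}(x_{00})$ and another error neighbor $x'_{01}\in v_{01}$ lying \emph{outside} $N_{01}(x_{00})$: such an $x_{11}$ reads $0$ before the flip and is toggled to $1$, a strictly bad event. One-dimensional expansion says nothing about the overlap $N_{11}(v_{10})\cap N_{11}(v_{01})$, which can a priori be of the same order as the main term $(w_\rightarrow|v_{10}|+w_\downarrow|v_{01}|)/2$. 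The paper controls exactly this through the region $D_{11}(x_{00})$, whose size per $x_{00}$ is the \emph{product} of the two small leftover sets, summed using the majorization of Corollary~\ref{cor:lossless-split}; this is precisely where the hypotheses $w_\uparrow\epsilon_\leftrightarrow, w_\downarrow\epsilon_\leftrightarrow\le\epsilon$ enter. With the full-intersection choice there are no leftovers to exploit and no analogous mechanism, so $\beta=1-12\epsilon$ does not follow from the tools you invoke. A smaller remark: you do not need Lemma~\ref{lem:small-set-LTC} as an input --- in the paper it is \emph{deduced} from the inequalities established in this very proof, so leaning on it here risks circularity, and the averaging argument already produces a strictly positive good count directly from the local minimality of $c_1$.
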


\begin{lemma} [Short remains short] \label{lem:short-remain-short}
  If the initial number of normalized error is small,
    $|c_1(t=0)|_w < (\frac{1}{2} - 6 \epsilon)
      (\min(\frac{\min (c_\leftrightarrow|V_{0*}|/w_\leftarrow, c_\updownarrow|V_{*0}|)}{w_\downarrow},
      \frac{\min (c_\updownarrow|V_{*0}|/w_\uparrow, c_\leftrightarrow|V_{0*}|)}{w_\rightarrow}) - 1)$,
    then in all intermediate steps of the algorithm,
    $|v_{10}(t)| < \min (c_\leftrightarrow |V_{0*}|/w_\leftarrow, c_\updownarrow |V_{*0}|), 
      |v_{01}(t)| < \min (c_\updownarrow |V_{*0}|/w_\uparrow, c_\leftrightarrow |V_{0*}|)$.
\end{lemma}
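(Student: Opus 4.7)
The plan is an induction on $t$ that maintains, for all $s \le t$, the invariant $|v_{10}(s)| < S_1$ and $|v_{01}(s)| < S_2$, where I abbreviate $S_1 \coloneqq \min(c_\leftrightarrow|V_{0*}|/w_\leftarrow, c_\updownarrow|V_{*0}|)$, $S_2 \coloneqq \min(c_\updownarrow|V_{*0}|/w_\uparrow, c_\leftrightarrow|V_{0*}|)$, and $M \coloneqq \min(S_1/w_\downarrow, S_2/w_\rightarrow)$. The two driving ingredients are the monotone decrease of the syndrome weight $|c_0(t)|_w$ and Lemma~\ref{lem:small-set-LTC} (Small set LTC), which converts bounds on $|c_0(t)|_w$ into bounds on $|c_1(t)|_w$ whenever shortness holds.

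As a preliminary step I would establish $|c_0(0)|_w \le |c_1(0)|_w$. The columns of $L(E_{1*})$ and $L(E_{*1})$ have weight $w_\rightarrow$ and $w_\downarrow$ respectively, so $|c_0(0)| = |\partial_1 c_1(0)| \le w_\rightarrow|v_{10}(0)| + w_\downarrow|v_{01}(0)|$; dividing through by $w_\downarrow w_\rightarrow$ gives the claim. The flippability condition with $\beta = 1 - 12\epsilon > 1/2$ then forces $|c_0(t)|$ to strictly decrease at each iteration, so $|c_0(t)|_w \le |c_0(0)|_w \le |c_1(0)|_w < (\tfrac{1}{2} - 6\epsilon)(M - 1)$ for every $t$.

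Now the induction. The base case $t=0$ is immediate, since $|c_1(0)|_w < M - 1 < S_1/w_\downarrow$ gives $|v_{10}(0)| \le w_\downarrow|c_1(0)|_w < S_1$ and similarly for $v_{01}(0)$. For the inductive step, the min-weight representative $c_1(t)$ is automatically normalized locally minimal, so Lemma~\ref{lem:small-set-LTC} applied at time $t$ sharpens the bound to $|c_1(t)|_w \le |c_0(t)|_w/(\tfrac{1}{2} - 6\epsilon) < M - 1$. Since $c_1(t+1)$ is the min-weight representative of $c_0(t+1)$ and $c_1(t) + f(t)$ (with $f(t) = (n_{10}, n_{01})$ the flip applied at step $t$) is another valid representative of the same coset,
\begin{equation*}
|c_1(t+1)|_w \le |c_1(t) + f(t)|_w \le |c_1(t)|_w + |f(t)|_w < (M - 1) + 2 = M + 1,
\end{equation*}
using $|f(t)|_w \le |n_{10}|/w_\downarrow + |n_{01}|/w_\rightarrow \le 2$. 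This intermediate bound still keeps $c_1(t+1)$ within the small-set LTC shortness regime, so a second application of Lemma~\ref{lem:small-set-LTC} at time $t+1$ sharpens it again to $|c_1(t+1)|_w < M - 1$, restoring the invariant $|v_{10}(t+1)| < S_1$ and $|v_{01}(t+1)| < S_2$.

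The hard part is the bookkeeping of threshold constants across the three lemmas. The thresholds $S_1, S_2$ in the conclusion are calibrated to match Lemma~\ref{lem:found-if-short}, while Lemma~\ref{lem:small-set-LTC} uses slightly different thresholds; the $-1$ slack in the hypothesis on $|c_1(0)|_w$ is precisely what absorbs the possible overshoot of $|f(t)|_w \le 2$ in a single decoding step, ensuring that the intermediate representative $c_1(t) + f(t)$ stays within the LTC regime before the lemma is reapplied. A secondary subtlety is that $c_1(t+1)$ can differ from $c_1(t) + f(t)$ by any element of $\Ima \partial_2$, but since only an upper bound is needed, treating $c_1(t) + f(t)$ as a candidate representative suffices.
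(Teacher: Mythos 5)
Your overall strategy---induct on $t$, use the monotone decrease of $|c_0(t)|_w$, and use Lemma~\ref{lem:small-set-LTC} to convert syndrome bounds back into error bounds---is the same mechanism the paper uses (the paper phrases it as a contradiction at the first time $T$ at which shortness fails, rather than as a forward induction, but these are equivalent). However, the inductive step as written has a genuine gap at the sentence ``this intermediate bound still keeps $c_1(t+1)$ within the small-set LTC shortness regime.'' The shortness hypothesis of Lemma~\ref{lem:small-set-LTC} is exactly the invariant you are trying to establish at time $t+1$: it requires $|v_{10}(t+1)| < S_1$ and $|v_{01}(t+1)| < S_2$ (your $S_1$, $S_2$), i.e.\ componentwise $|c_1(t+1)|_w < M$. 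Your intermediate bound $|c_1(t+1)|_w < M+1$ only yields $|v_{10}(t+1)| < w_\downarrow(M+1)$, which can exceed $S_1$ by as much as $w_\downarrow$ (e.g.\ when $M = S_1/w_\downarrow$). So the second application of the lemma at time $t+1$ is not justified, and the induction is circular at precisely the point where it needs to close.

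The root cause is your per-step bound $|f(t)|_w \le 2$. The paper instead uses $|f(t)|_w \le 1$: the flip exhibited in Lemma~\ref{lem:found-if-short} is $(n_{v_{10}}(x_{00}), n_{v_{01}}(x_{00}))$ with $n_{v_{10}}(x_{00}) \subseteq N_{v_{10}}(x_{00})$ and $n_{v_{01}}(x_{00}) \subseteq N_{v_{01}}(x_{00})$, and normalized local minimality of $c_1(t)$ forces $|N_{v_{10}}(x_{00})|/w_\downarrow + |N_{v_{01}}(x_{00})|/w_\rightarrow \le 1$. With $|f(t)|_w \le 1$ your intermediate bound becomes $|c_1(t+1)|_w < M$, which does land strictly inside the shortness regime, and your induction closes exactly as intended: the $-1$ slack in the hypothesis is calibrated to absorb an overshoot of $1$, not of $2$. (If one insists on the worst-case $|f(t)|_w \le 2$ permitted by the literal algorithm description, the lemma's constant would have to be $M-2$, or the decoder must be restricted to flips of normalized weight at most $1$.) Alternatively, you can sidestep the issue entirely by arguing as the paper does: let $T$ be the first time shortness fails, apply the LTC lemma only at time $T-1$ where shortness is known, and derive a contradiction from $|c_1(T-1)|_w \ge |c_1(T)|_w - 1 \ge M-1$ combined with $(\tfrac{1}{2}-6\epsilon)|c_1(T-1)|_w \le |c_0(T-1)|_w \le |c_0(0)|_w \le |c_1(0)|_w$.
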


Here, we prove Lemma~\ref{lem:short-remain-short} and Theorem~\ref{thm:correctness-of-decoder}.
And leave Lemma~\ref{lem:found-if-short} to Appendix~\ref{sec:proof-key-lemmas}.

\begin{proof}[Proof of Lemma~\ref{lem:short-remain-short}]
  The proof idea is to use the fact that the number of syndromes $|c_0(t)|$ strictly decreases,
    and use a bound between $|c_0(t)|_w$ and $|c_1(t)|_w$
    to show that even if $|c_1(t)|_w$ grows, it cannot grow much.

  On one hand,
    $|c_0| \le w_\rightarrow|v_{10}| + w_\downarrow|v_{01}|$,
    because $\deg_{11}(x_{10}) = w_\rightarrow$ and $\deg_{11}(x_{01}) = w_\downarrow$.
  So,
  \begin{equation}
    |c_0(t=0)|_w \le |c_1(t=0)|_w.
  \end{equation}
    
  On the other hand,
    by the small set LTC lemma~\ref{lem:small-set-LTC},
    if $|v_{10}(t)| < \min (c_\leftrightarrow |V_{0*}|/w_\leftarrow, c_\updownarrow |V_{*0}|), 
    |v_{01}(t)| < \min (c_\updownarrow |V_{*0}|/w_\uparrow, c_\leftrightarrow |V_{0*}|)$,
    then
  \begin{equation}
    (\frac{1}{2} - 6 \epsilon)|c_1(t)|_w \le |c_0(t)|_w.
  \end{equation}

  Suppose, $|v_{10}(t)| < \min (c_\leftrightarrow |V_{0*}|/w_\leftarrow, c_\updownarrow |V_{*0}|), 
    |v_{01}(t)| < \min (c_\updownarrow |V_{*0}|/w_\uparrow, c_\leftrightarrow |V_{0*}|)$ is violated at some time.
  Let $t=T$ be the first time it is violated.
  We have $|c_1(T)|_w = \min(\frac{|v_{10}(t)|}{w_\downarrow},
  \frac{|v_{01}(t)|}{w_\rightarrow})
  \ge \min(\frac{\min (c_\leftrightarrow |V_{0*}|/w_\leftarrow, c_\updownarrow |V_{*0}|)}{w_\downarrow},
  \frac{\min (c_\updownarrow |V_{*0}|/w_\uparrow, c_\leftrightarrow |V_{0*}|)}{w_\rightarrow})$.

  Because the flip in each iteration has normalized weight $\le 1$,
    $|c'_1(T)|_w - |c_1(T-1)|_w \le |c'_1(T)-c_1(T-1)|_w \le 1$,
    where $c'_1(T)$ is the direct application of the flip on $c_1(T-1)$.
  Because $c_1(T)$ has the smallest weight, we have $|c_1(T)|_w \le |c'_1(T)|_w$.
  Overall, $|c_1(T-1)|_w \ge |c'_1(T)|_w-1 \ge |c_1(T)|_w-1
    = \min(\frac{\min (c_\leftrightarrow |V_{0*}|/w_\leftarrow, c_\updownarrow |V_{*0}|)}{w_\downarrow},
    \frac{\min (c_\updownarrow |V_{*0}|/w_\uparrow, c_\leftrightarrow |V_{0*}|)}{w_\rightarrow}) - 1$.

  Now, because the initial the number of normalized error is assumed to satisfy
    $|c_1(t=0)|_w < (\frac{1}{2} - 6 \epsilon)
      (\min(\frac{\min (c_\leftrightarrow |V_{0*}|/w_\leftarrow, c_\updownarrow |V_{*0}|)}{w_\downarrow},
      \frac{\min (c_\updownarrow |V_{*0}|/w_\uparrow, c_\leftrightarrow |V_{0*}|)}{w_\rightarrow}) - 1)$,
    through 
    $(\frac{1}{2} - 6 \epsilon)|c_1(T-1)|_w \le |c_0(T-1)|_w \le |c_0(0)|_w \le |c_1(0)|_w$,
    we reach a contradiction,
    where we use the bound between $|c_0|_w, |c_1|_w$ for the first and the last inequalities,
      and $|c_0|_w$ strictly decreases for the second inequality.
  This shows $|v_{10}(t)| < \min (c_\leftrightarrow |V_{0*}|/w_\leftarrow, c_\updownarrow |V_{*0}|), 
    |v_{01}(t)| < \min (c_\updownarrow |V_{*0}|/w_\uparrow, c_\leftrightarrow |V_{0*}|)$
    hold for all $t>0$,
    and for $t=0$ one can show directly using $|v_{10}(0)| \le w_\downarrow |c_1(0)|_w$ and $|v_{01}(0)| \le w_\rightarrow |c_1(0)|_w$.
\end{proof}

\begin{proof}[Proof of Theorem~\ref{thm:correctness-of-decoder}]
  We first show the decoder removes all errors.
  Given the bound on the initial number of the (unnormalized) error $|c_1(t=0)|$,
    $|c_1(t=0)|_w \le |c_1(t=0)|/\min(w_\downarrow, w_\rightarrow)$
    satisfies the condition in the short remains short lemma~\ref{lem:short-remain-short}.
  Therefore, $|v_{10}(t)| < \min (c_\leftrightarrow |V_{0*}|/w_\leftarrow, c_\updownarrow |V_{*0}|), 
    |v_{01}(t)| < \min (c_\updownarrow |V_{*0}|/w_\uparrow, c_\leftrightarrow |V_{0*}|)$
    holds at all time.
  By the found if short lemma~\ref{lem:found-if-short},
    when the syndrome is non empty,
    the decoder will continue to reduce the syndrome.

  Now, we show the final codeword is the original closest codeword.
  Let $T$ be the time when the decoder halts.
  Because $|v_{10}(T)| < \min (c_\leftrightarrow |V_{0*}|/w_\leftarrow, c_\updownarrow |V_{*0}|), 
    |v_{01}(T)| < \min (c_\updownarrow |V_{*0}|/w_\uparrow, c_\leftrightarrow |V_{0*}|)$,
    by the small set LTC lemma~\ref{lem:small-set-LTC},
    $|c_0|_w = 0$ implies $|c_1|_w = 0$.
  Therefore, the decoder ends with the original codeword.
\end{proof}

% ====

\subsection{Running time of the decoder} \label{sec:linear-time-decoder}

In this section, we analyze the running time of the decoder. The naive implementation    where we scan through all possible flips,   results with quadratic time complexity.  So, we consider a refined version with additional preprocessing as follows.

\begin{enumerate}
  \item Preprocessing: 
    Given the syndromes,
      we make a list $Q$, 
      which contains all the candidate vertices that can be flipped by the decoder.
    Because there are $|V_{00}|=\Theta(n)$ many choices of $x_{00}$,
      and flippability can be determined in 
      $\Theta(w_\downarrow w_\rightarrow 2^{w_\downarrow+w_\rightarrow})=\Theta(1)$,
      where $2^{w_\downarrow+w_\rightarrow}$ is the number of possible choices for $n_{10}(x_{00}), n_{01}(x_{00})$
        (since $n_{10}(x_{00}) \subseteq N_{10}(x_{00})$ and $|N_{10}(x_{00})| = w_\downarrow$)
        and $\Theta(w_\downarrow w_\rightarrow)$ is the time to determine the flippable condition.
  \item Time complexity of each iteration:
    Take a vertex from the list $Q$,
      perform the flip,
      i.e. update the syndromes.
    After updating the syndromes,
      we update the list $Q$.
    It is enough to update the vertices neighbor of the updated syndromes.
    Because there are at most $w_\downarrow w_\rightarrow$ syndromes being updated,
      each syndrome neighbors to $w_\downarrow w_\rightarrow$ vertices in $V_{00}$.
    Combine with the time it takes to determine the flippability $\Theta(w_\downarrow w_\rightarrow 2^{w_\downarrow+w_\rightarrow})$,
      the time complexity for each iteration is $\Theta(w_\downarrow^3 w_\rightarrow^3 2^{w_\downarrow+w_\rightarrow})=\Theta(1)$.
  \item Number of iterations:
    When $\epsilon < \frac{1}{24}, \beta > \frac{1}{2}$,
      the number of syndromes strictly decreases in each iteration,
      so the number of iterations is at most the number of syndromes.
\end{enumerate}

Overall, the algorithm can be implemented to run in 
  $\Theta(w_\downarrow^3 w_\rightarrow^3 2^{w_\downarrow+w_\rightarrow}|c_0|)=\Theta(|c_0|)$.

We state result from the above discussion.
\begin{theorem}[Linear time decoder] \label{thm:linear-time-decoder}
  The decoder with preprocessing described above
    halts in time linear to the number of syndromes
    which is linear to the number of errors
    $\Theta(|c_0|) = O(|c_1|)$.
\end{theorem}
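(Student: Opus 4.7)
The plan is to account separately for three costs: (i) the one-time preprocessing that assembles the data structure $Q$ of currently-flippable candidates, (ii) the per-iteration cost of performing a flip and updating $Q$, and (iii) the total number of iterations performed by the decoder. Multiplying (ii) by (iii) and adding (i) will give the desired linear bound, and the final step is to argue that the number of syndromes $|c_0|$ is itself $O(|c_1|)$, so that the running time is linear in the number of errors as claimed.

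For (i), I would iterate once over every vertex $x_{00} \in V_{00}$; since there are $|V_{00}| = \Theta(n)$ such vertices, and for each one the set of candidate flips $(n_{10}(x_{00}), n_{01}(x_{00}))$ ranges over the $2^{w_\downarrow + w_\rightarrow}$ subsets of its constant-size neighborhood, flippability is decidable in $O(w_\downarrow w_\rightarrow 2^{w_\downarrow + w_\rightarrow}) = O(1)$ time per vertex by examining the constant number of adjacent syndromes in $V_{11}$. Hence $Q$ is built in time $O(n)$. For (ii), the key observation is locality: a single flip at $x_{00}$ modifies only the syndromes adjacent to the affected qubits, i.e.\ $O(w_\downarrow w_\rightarrow) = O(1)$ entries of $c_0$, and the flippability status of a vertex $x'_{00}$ depends only on syndromes within its constant-radius neighborhood. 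Thus only an $O(1)$-sized set of vertices in $V_{00}$ must be re-examined and re-inserted into or removed from $Q$ after each iteration, giving $O(1)$ amortized cost per iteration. Implementing $Q$ as a doubly linked list with back-pointers from vertices makes both insertions and deletions $O(1)$.

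For (iii), I use the flippability condition together with the parameter choice $\epsilon < 1/24$. By construction, a flippable move changes some number of syndromes and at least a $\beta$-fraction of them flip from $1$ to $0$, where $\beta = 1 - 12\epsilon > 1/2$. Therefore the net change in $|c_0|$ is strictly negative (at least $2\beta - 1 > 0$ syndromes are cleared per iteration), so $|c_0|$ drops by at least one each step, bounding the total number of iterations by $|c_0(0)|$. Combining (i)--(iii) yields total time $O(n) + O(1) \cdot |c_0(0)| = O(|c_0|)$, where we absorb the preprocessing into the error-dependent term whenever decoding a nonempty syndrome. Finally, $|c_0| \le (w_\rightarrow + w_\downarrow)|c_1|$ from the regularity of the boundary operator, giving $|c_0| = O(|c_1|)$ and completing the proof.

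I do not expect a deep obstacle here: the theorem is essentially a bookkeeping consequence of Lemmas~\ref{lem:found-if-short} and~\ref{lem:short-remain-short} together with the monotone-decrease property $\beta > 1/2$. The only subtle point is verifying that the amortized update cost for $Q$ really is constant, which reduces to the locality argument above and relies critically on the LDPC (constant-weight) property of the code.
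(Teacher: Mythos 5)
Your proposal is correct and follows essentially the same three-part accounting as the paper: an $O(n)$ preprocessing pass to build the candidate list $Q$, a constant-time locality argument for updating $Q$ after each flip, and the strict decrease of $|c_0|$ from $\beta > 1/2$ bounding the iteration count by the initial syndrome weight. (Both your write-up and the paper share the same minor imprecision that the $\Theta(n)$ preprocessing scan is not literally $O(|c_0|)$ unless one restricts the scan to vertices adjacent to nonzero syndromes, but this is not a new gap you have introduced.)
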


If the initial error is locally minimal, then $\Theta(|c_0|) = \Theta(|c_1|)$.
We write $O(|c_1|)$ to include the case
  where the initial error is not locally minimal.
% ----- end main -----

\section{Conclusion} \label{sec:conclusion}
\subsection{Summary}

In this work, we construct good qLDPC codes using the balanced product \cite{breuckmann2021balanced} of two 2-sided lossless expander graphs. 
Assuming the existence of 2-sided lossless expander graphs with free group action \ref{con:2-sided-lossless-expander-with-symmetry},
  the resulting qLDPC codes have constant rate, linear distance and linear time decoders.

\subsection{Discussion}

Here, we compare our construction with two recent papers on similar topics \cite{panteleev2021asymptotically} \cite{dinur2021locally}. 
The common feature of these constructions is that all of them use the same kind of two-dimensional graph. 
The key difference is how one obtains the code from the graph.
Here, we illustrate 3 different ways to obtain a chain complex from the two-dimensional graph.

Take the notation in Definition~\ref{def:left-right-cayley-graph} and Figure~\ref{fig:left-right-cayley-graph}.
Denote vertices $V = V_{00} \cup V_{10} \cup V_{01} \cup V_{11}$,
horizontal edges $E^= = E_{0*} \cup E_{1*}$, 
vertical edges $E^{||} = E_{*0} \cup E_{*1}$,
and faces $F$.

Now, we can compare the chain complex obtained through 3 different methods.
In \cite{panteleev2021asymptotically},
  the chain complex is
  $\FF_2^{E^=} \rightarrow \FF_2^{F} \oplus \FF_2^{V} \rightarrow \FF_2^{E^{||}}$.
In \cite{dinur2021locally},
  the chain complex is
  $\FF_2^{F} \rightarrow \FF_2^{E^{||}} \oplus \FF_2^{E^=} \rightarrow \FF_2^{V}$.
In this work,
  the chain complex is
  $\FF_2^{V_{00}} \rightarrow \FF_2^{V_{10}} \oplus \FF_2^{V_{01}} \rightarrow \FF_2^{V_{11}}$.
For simplicity, we didn't include the base code in the examples above.

\subsection{Future work}

An important question left open in this paper
  is to construct 2-sided lossless expanders.
It is known that 1-sided lossless expanders with symmetry exist
  \cite{capalbo2002randomness},
  so one may hope to generalize their method
  and find a new version of zig-zag products
  that gives 2-sided lossless expanders.

\bibliographystyle{unsrt}
\bibliography{references.bib}

\appendix

\section{Lemma on lossless expanders} \label{sec:lossless-lemma}
In this section, we first show a simple lemma on unique expansion, 
  then show a bound of lossless expanders.
The bound from the Corollary~\ref{cor:lossless-split} puts enough constraint on lossless expanders
  that allows us to show the two key lemmas for 
  linear distance and the correctness of the linear time decoder in Appendix~\ref{sec:proof-key-lemmas}.

Comparing with Lemma 25 in \cite{lin2022c},
  the corollary~\ref{cor:lossless-split} implies Lemma 25
    and is a finer description
    that is needed for the proof of linear time decoder.
  For showing linear distance, it is enough to use the coarser bound in Lemma 25.

We first study a simple lemma.
Given a bipartite graph $(V_0, V_1, E)$.
If a vertex $x_1 \in V_1$ is neighbor to exactly one vertex in $v_0 \subseteq V_0$,
  we say $x_1$ is a unique neighbor of $v_0$.
We denote the unique neighbor of $v_0$ in $V_1$ as $N^{\textnormal{unique}}_{V_1}(v_0)$.

The first lemma says a lossless expander graph has many unique neighbors.
\begin{lemma} [Lossless expander implies unique expander] \label{lem:unique-expander}
  Let $(V_0, V_1, E)$ be a $(w_0, w_1)$-regular bipartite graph and $(c, \epsilon)$-lossless expander.
  Then for small set $v_0 \subseteq V_0$, $|v_0| < c |V_0|$,
  we have $|N^{\textnormal{unique}}_{V_1}(v_0)| \ge (1-2\epsilon) w_0 |v_0|$.
\end{lemma}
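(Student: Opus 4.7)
The plan is a standard double-counting argument on edges between $v_0$ and $N_{V_1}(v_0)$. First I would note that by $w_0$-regularity on the $V_0$ side, the total number of edges leaving $v_0$ is exactly $w_0|v_0|$. By the lossless expansion hypothesis, since $|v_0| < c|V_0|$, the neighborhood satisfies $|N_{V_1}(v_0)| \ge (1-\epsilon) w_0 |v_0|$.

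Next I would partition $N_{V_1}(v_0)$ into unique neighbors $U := N^{\textnormal{unique}}_{V_1}(v_0)$ and non-unique neighbors $D := N_{V_1}(v_0) \setminus U$. Every vertex in $U$ contributes exactly $1$ edge to $v_0$, and every vertex in $D$ contributes at least $2$ edges to $v_0$. Counting edges from the $V_1$ side gives
\[
  w_0 |v_0| \;=\; |E(v_0, N_{V_1}(v_0))| \;\ge\; |U| + 2|D| \;=\; |N_{V_1}(v_0)| + |D|.
\]
Rearranging, $|D| \le w_0|v_0| - |N_{V_1}(v_0)| \le w_0|v_0| - (1-\epsilon)w_0|v_0| = \epsilon w_0 |v_0|$.

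Finally, I would subtract to obtain
\[
  |U| \;=\; |N_{V_1}(v_0)| - |D| \;\ge\; (1-\epsilon)w_0|v_0| - \epsilon w_0|v_0| \;=\; (1-2\epsilon)w_0|v_0|,
\]
which is the claimed bound. There is no real obstacle here; the whole argument is just bookkeeping of edges, and the factor $2$ arises because each non-unique neighbor costs us once in the $|N_{V_1}(v_0)|$ lower bound and once again when we delete it to isolate unique neighbors. The only thing to be careful about is using $w_0$-regularity (not $w_1$) since we are summing edge-ends on the $v_0$ side.
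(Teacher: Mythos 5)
Your proof is correct and is essentially the same double-counting argument as the paper's, which bounds $a_1$ (your $|U|$) via $a_1 \ge 2\sum_i a_i - \sum_i i\,a_i$; your partition into unique and non-unique neighbors is just a rephrasing of that inequality. No issues.
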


\begin{proof}
  Let $a_i$ be the number of $x_1$ with $\deg_{v_0}(x_1) = i$.
  
  By the definition of lossless expanders, $|N_{V_1}(v_0)| \ge (1-\epsilon) w_0 |v_0|$.
  So $\sum_{i=1}^{w_1} a_i \ge (1-\epsilon) w_0 |v_0|$.

  By counting the edges between $v_0$ and $N_{V_1}(v_0)$ in two ways,
  we have $w_0 |v_0| = \sum_{i=1}^{w_1} i a_i$.

  Together, $a_1 \ge 2 \sum_{i=1}^{w_1} a_i - \sum_{i=1}^{w_1} i a_i \ge (1-2\epsilon) w_0 |v_0|$,
  where we use $a_i \ge 0$ in the first inequality.
\end{proof}

Now, we turn to the main corollary. The goal of the following lemma is to extract a subgraph $Y$ for each small subgraph $X'$,
  such that $Y$ is a tree,
    and the remaining graph $Z$ has low degree at $V_0$.
  Roughly, it is saying the tree $Y$ is a good approximation of $X'$.

\begin{lemma} \label{lem:lossless-split}
  Given a $(w_0, w_1)$-regular bipartite graph, $(V_0, V_1, E)$,
    with 1-sided $(c, \epsilon)$-lossless expansion from $V_0$ to $V_1$.
  
  For each subgraph $X' = (v_0, v_1, e)$ with $|v_0| < c|V_0|$,
    we can partition $X'$ into two subgraphs $X' = Y \cup Z$,
    such that $\deg_{Y}(x_1) \le 1$
      for all $x_1 \in v_1$
    and $\deg_{Z}(x_0) \le \epsilon w_0$
      for all $x_0 \in v_0$.
\end{lemma}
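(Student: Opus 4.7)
The plan is to reformulate the desired partition as a bipartite $b$-matching problem and then verify Hall's condition from the lossless expansion of the ambient graph. Define the deficit $d(x_0) := \max(0, \deg_{X'}(x_0) - \epsilon w_0)$ for each $x_0 \in v_0$. Producing a valid $(Y, Z)$ partition of $X'$ is equivalent to finding $Y \subseteq e$ with $\deg_Y(x_1) \le 1$ for every $x_1 \in v_1$ and $\deg_Y(x_0) \ge d(x_0)$ for every $x_0 \in v_0$, since then the complement $Z := e \setminus Y$ automatically satisfies $\deg_Z(x_0) = \deg_{X'}(x_0) - \deg_Y(x_0) \le \epsilon w_0$.

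Interpreting each $x_0$ as demanding $d(x_0)$ distinct $v_1$-neighbors and each $x_1 \in v_1$ as usable at most once, this is a bipartite $b$-matching. By Hall's theorem (applied to the auxiliary graph obtained by duplicating each $x_0$ exactly $d(x_0)$ times), such a $Y$ exists iff
\[
\sum_{x_0 \in S} d(x_0) \le |N_{X'}(S)| \qquad \text{for every } S \subseteq v_0.
\]
It suffices to check this for $S \subseteq \{x_0 : \deg_{X'}(x_0) > \epsilon w_0\}$, since adding vertices of lower degree contributes zero to the left-hand side but can only enlarge the right. For such $S$, the left-hand side equals $|e \cap (S \times v_1)| - \epsilon w_0 |S|$.

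To bound $|N_{X'}(S)|$ from below, apply the ambient $(c, \epsilon)$-lossless expansion to $S$ (legitimate because $|S| \le |v_0| < c|V_0|$) to obtain $|N_E(S)| \ge (1-\epsilon) w_0 |S|$, and rewrite this as the edge-excess bound
\[
\sum_{x_1 \in N_E(S)}(\deg_{E, S}(x_1) - 1) \;=\; w_0 |S| - |N_E(S)| \;\le\; \epsilon w_0 |S|.
\]
Every $x_1 \in N_{X'}(S)$ lies in $N_E(S)$ and satisfies $1 \le \deg_{X', S}(x_1) \le \deg_{E, S}(x_1)$, so restricting the sum above to $N_{X'}(S)$ (whose terms are all nonnegative and dominated termwise) gives
\[
|e \cap (S \times v_1)| - |N_{X'}(S)| \;=\; \sum_{x_1 \in N_{X'}(S)}(\deg_{X', S}(x_1) - 1) \;\le\; \epsilon w_0 |S|,
\]
which rearranges to the required Hall inequality.

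The main obstacle is that $X'$ is only a subgraph of the ambient graph, so no direct lower bound on $|N_{X'}(S)|$ follows from lossless expansion, which only controls $|N_E(S)|$ and a priori $|N_{X'}(S)|$ could be much smaller. The resolution is to transfer the ambient bound via edge excess rather than neighborhood size: the relevant excess $|e \cap (S \times v_1)| - |N_{X'}(S)|$ involves only edges and neighbors present in $X'$, and deleting edges to pass from $E$ down to $X'$ can only decrease the excess termwise, yielding the one-sided inequality in the favorable direction.
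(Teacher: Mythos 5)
Your proof is correct and is essentially the paper's argument in a different but equivalent guise: the paper sets up a flow network and invokes max-flow min-cut, while you invoke the defect version of Hall's theorem (via vertex duplication), and in both cases the matching/flow existence reduces to the same edge-excess inequality $\sum_{x_0 \in S}\max(\deg_{v_1}(x_0)-\epsilon w_0,0)\le |N(S)|$ derived from lossless expansion and $w_0$-regularity. Your termwise-domination step for passing from the ambient graph $E$ down to the subgraph $e$ is sound and matches the role of the paper's Equation~(\ref{eq:ineq-lossless-v2}).
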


We won't use the lemma directly. Instead, we will use its corollary.

Before stating the corollary, we recall some definitions.
A multiset is a modification of the concept of a set,
  where it is allowed for multiple instances.
For example, $\{a, a, b\}$ is a multiset
  with 2 instances of $a$ and 1 instance of $b$.
Another way to represent a multiset is to
  indicate the number of instances on the upper indices.
For example, $\{a, a, b\}$ can be written as $\{a^2, b\}$.

We say a multiset $A$ majorizes another multiset $B$
  if $\sum_{i=1}^k a_i \ge \sum_{i=1}^k b_i$
  for all $k = 1, ..., i_{\max}$
  where $i_{\max}=|A|=|B|$
    and $\{a_i\}_{i=1}^{i_{\max}}$ and $\{b_i\}_{i=1}^{i_{\max}}$
    are sorted sequences of $A$ and $B$ in the descending order.
This is denoted as $A \succeq B$.
In the case where $A$ and $B$ have different number of elements
  we append $0$s so that they become the same size
  and can be compared.

\begin{corollary} \label{cor:lossless-split}
  Given a $(w_0, w_1)$-regular bipartite graph, $(V_0, V_1, E)$,
    with 1-sided $(c, \epsilon)$-lossless expansion from $V_0$ to $V_1$.

  For any small subset $v_1 \subseteq V_1$ with $|v_1| < c|V_0|/w_1$,
    there exists $n(x_0) \subseteq N_{v_1}(x_0)$ for each $x_0 \in V_0$,
    such that $\{ n(x_0) : x_0 \in V_0 \}$ is a partition of $v_1$
      and $|N_{v_1}(x_0) - n(x_0)| \le \epsilon w_0$.
  Furthermore, $\{|N_{v_1}(x_0) - n(x_0)|\}_{x_0 \in V_0} \preceq \{\epsilon w_0^{\lceil \frac{w_1}{\epsilon w_0} |v_1| \rceil}\}$.
\end{corollary}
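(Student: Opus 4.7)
The plan is to apply Lemma~\ref{lem:lossless-split} to an appropriate induced subgraph and then convert the resulting edge decomposition into a vertex-partition of $v_1$. First I would set $v_0 = N_{V_0}(v_1)$. Since every vertex of $v_1$ has degree $w_1$, we have $|v_0| \le w_1 |v_1| < c|V_0|$, so Lemma~\ref{lem:lossless-split} applies to the subgraph $X' = (v_0, v_1, e)$ consisting of all edges between $v_0$ and $v_1$. This yields a decomposition $e = Y \sqcup Z$ with $\deg_Y(x_1) \le 1$ for every $x_1 \in v_1$ and $\deg_Z(x_0) \le \epsilon w_0$ for every $x_0 \in v_0$.

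Next I would construct $n(\cdot)$ as follows. For each $x_1 \in v_1$, assign $x_1$ to a single vertex of $v_0$: if $\deg_Y(x_1) = 1$, send $x_1$ to its unique $Y$-neighbor; if $\deg_Y(x_1) = 0$, send $x_1$ to an arbitrary $Z$-neighbor (which exists because all $w_1$ of $x_1$'s $V_0$-neighbors lie in $v_0$ by the choice of $v_0$). Let $n(x_0)$ be the preimage of $x_0$ under this assignment, with $n(x_0) = \emptyset$ for $x_0 \notin v_0$. By construction $n(x_0) \subseteq N_{v_1}(x_0)$ and $\{n(x_0)\}_{x_0 \in V_0}$ is a partition of $v_1$. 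The per-vertex bound is then immediate: any $x_1 \in N_{v_1}(x_0)$ whose edge to $x_0$ lies in $Y$ is assigned to $x_0$ by construction, so $N_{v_1}(x_0) - n(x_0)$ is contained in the set of $Z$-neighbors of $x_0$, whose size is at most $\epsilon w_0$ (for $x_0 \notin v_0$ both sides are empty).

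For the majorization claim, I would use that each element of the multiset is at most $\epsilon w_0$ and that the total sum equals $\sum_{x_0} |N_{v_1}(x_0)| - |v_1| = (w_1 - 1)|v_1| \le w_1|v_1|$. Hence the $k$-th partial sum of the sorted multiset is bounded by both $k \cdot \epsilon w_0$ and $w_1|v_1|$, i.e.\ by $\min(k, \lceil w_1|v_1|/(\epsilon w_0)\rceil) \cdot \epsilon w_0$, which is precisely the $k$-th partial sum of $\{(\epsilon w_0)^{\lceil w_1|v_1|/(\epsilon w_0)\rceil}\}$. This gives the required $\preceq$.

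The only mildly delicate point is handling the $v_1$-vertices with $\deg_Y(x_1) = 0$: these must be absorbed into the $Z$-budget without violating the $\epsilon w_0$-per-vertex bound. This is automatic, because such an $x_1$ is already a $Z$-neighbor of the $x_0$ to which it is sent, so placing it in $n(x_0)$ only removes an element from $N_{v_1}(x_0) - n(x_0)$, never adding one. Everything else is direct bookkeeping from Lemma~\ref{lem:lossless-split}.
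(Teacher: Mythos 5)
Your proposal is correct and follows essentially the same route as the paper: apply Lemma~\ref{lem:lossless-split} to the subgraph induced by $N_{V_0}(v_1)$ and $v_1$ (using $|N_{V_0}(v_1)| \le w_1|v_1| < c|V_0|$), then assign each $x_1$ with $\deg_Y(x_1)=0$ to an arbitrary neighbor — which is exactly the paper's step of enlarging $Y$ to $Y'$ with $\deg_{Y'}(x_1)=1$ — and deduce the majorization from the per-vertex bound $\epsilon w_0$ together with the total edge count $\le w_1|v_1|$. No gaps; your treatment of the $\deg_Y(x_1)=0$ vertices and of the partial sums is if anything slightly more explicit than the paper's.
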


\begin{proof}
  Consider $X' = (N_0(v_1), v_1, N_E(v_1))$
    where $N_0(v_1)$ is set of neighboring vertices of $v_1$ in $V_0$,
    and $N_E(v_1)$ is set of neighboring edges of $v_1$ in $E$.
  Because $|v_1| < c|V_0|/w_1$, we have $|N_0(v_1)| < c|V_0|$,
    so Lemma~\ref{lem:lossless-split} applies.
  Say the partition is $X' = Y \cup Z$.

  Because $\deg_{X'}(x_1) \ge 1$ and $\deg_{Y}(x_1) \le 1$ for all $x_1 \in v_1$,
    we can find a subgraph $Y'$ such that 
      $Y \subseteq Y' \subseteq X'$
      with $\deg_{Y'}(x_1) = 1$ for all $x_1 \in v_1$,
      by adding an edge to $x_1$ when $\deg_{Y}(x_1) = 0$.
  Now, by setting $n(x_0) = N_{Y'}(x_0)$,
    we have $\{ n(x_0) : x_0 \in V_0 \}$ is a partition of $v_1$
    and $|N(x_0) - n(x_0)| \le \epsilon w_0$.
  The final majorization inequality holds,
    because $|N(x_0) - n(x_0)| \le \epsilon w_0$ and the number of edges emanating from $v_1$ is $\le w_1 |v_1|$.
\end{proof}

Now, we go back to the lemma.
The lemma is similar to Hall's marriage theorem
  and can be shown similarly using max-flow min-cut theorem
  together with a lemma that bounds the number of edges from the number of vertices.
We first show the bound between edges and vertices, 
  then we review the max-flow min-cut theorem.

\begin{lemma}
  Given a $(w_0, w_1)$-regular bipartite graph, $(V_0, V_1, E)$,
    with 1-sided $(c, \epsilon)$-lossless expansion from $V_0$ to $V_1$.
  For any subsets $v_0 \subseteq V_0$ and $v_1 \subseteq V_1$,
    with $|v_0| < c|V_0|$,
    the number of edges between $v_0$ and $v_1$, $|E(v_0, v_1)|$,
    is bounded by
  \begin{equation} \label{eq:ineq-lossless-v1}
    |E(v_0, v_1)| = \sum_{x_0 \in v_0} \deg_{v_1}(x_0) \le \epsilon w_0 |v_0| + |v_1|,
  \end{equation}

  Moreover,
  \begin{equation} \label{eq:ineq-lossless-v2}
    \sum_{x_0 \in v_0} \max(\deg_{v_1}(x_0) - \epsilon w_0, 0) \le |v_1|.
  \end{equation}
\end{lemma}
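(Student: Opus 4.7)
My plan is to prove the first inequality by a double-counting argument on the bipartite multiset of edges incident to $v_0$, and then to deduce the ``moreover'' statement by applying the first inequality to a suitable subset of $v_0$.

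For the first inequality, I would classify each vertex $x_1 \in N_{V_1}(v_0)$ by its degree into $v_0$: let $a_i$ be the number of $x_1 \in N_{V_1}(v_0)$ with $\deg_{v_0}(x_1) = i$. Counting edges emanating from $v_0$ in two ways gives $\sum_{i \ge 1} i a_i = w_0 |v_0|$, while the lossless expansion hypothesis ($|v_0| < c|V_0|$) yields $\sum_{i \ge 1} a_i = |N_{V_1}(v_0)| \ge (1-\epsilon) w_0 |v_0|$. Subtracting these produces the key overcount bound
\begin{equation*}
  \sum_{i \ge 1} (i-1) a_i \;\le\; \epsilon w_0 |v_0|.
\end{equation*}
Now let $b_i$ be the number of $x_1 \in v_1$ with $\deg_{v_0}(x_1) = i$; clearly $b_i \le a_i$ and $\sum_i b_i \le |v_1|$. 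Writing $|E(v_0, v_1)| = \sum_i i b_i = \sum_i b_i + \sum_i (i-1) b_i$ and bounding the first sum by $|v_1|$ and the second by $\sum_i (i-1) a_i \le \epsilon w_0 |v_0|$ gives the desired inequality~\eqref{eq:ineq-lossless-v1}.

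For the moreover part~\eqref{eq:ineq-lossless-v2}, the idea is to apply the just-proven bound to the subset $v_0' := \{x_0 \in v_0 : \deg_{v_1}(x_0) > \epsilon w_0\}$. Since $|v_0'| \le |v_0| < c|V_0|$, the inequality~\eqref{eq:ineq-lossless-v1} applied to $(v_0', v_1)$ gives $\sum_{x_0 \in v_0'} \deg_{v_1}(x_0) \le \epsilon w_0 |v_0'| + |v_1|$, i.e.\ $\sum_{x_0 \in v_0'} (\deg_{v_1}(x_0) - \epsilon w_0) \le |v_1|$. Since $\max(\deg_{v_1}(x_0) - \epsilon w_0, 0)$ vanishes outside $v_0'$ and equals $\deg_{v_1}(x_0) - \epsilon w_0$ on $v_0'$, this is exactly the claim.

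The arguments are essentially routine once the right double count is identified, so I do not anticipate a substantive obstacle; the only mild subtlety is recognizing that~\eqref{eq:ineq-lossless-v2} is just~\eqref{eq:ineq-lossless-v1} restricted to the ``high-degree'' slice of $v_0$, rather than needing a separate expansion estimate.
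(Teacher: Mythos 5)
Your proposal is correct and is essentially the paper's argument: the first inequality rests on the same double count (total edges from $v_0$ equal $w_0|v_0|$, while expansion forces at least $(1-\epsilon)w_0|v_0|$ distinct neighbors, so the "excess" edges are at most $\epsilon w_0|v_0|$), merely phrased via the degree profile $a_i, b_i$ rather than the paper's vertex-removal count. The derivation of the second inequality by restricting to the slice $\{x_0 \in v_0 : \deg_{v_1}(x_0) > \epsilon w_0\}$ is exactly what the paper does.
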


\begin{proof}
  We prove the first inequality by 
    consider the graph consist of
      $v_0$ and its neighbors $N_{V_1}(v_0)$,
    then remove vertices $N_{V_1}(v_0)-v_1$ and the connected edges.

  First, the graph form by $v_0$ and $N_{V_1}(v_0)$ has $w_0 |v_0|$ edges.
  Next, when we remove $N_{V_1}(v_0)-v_1$ vertices, 
    we remove at least $|N_{V_1}(v_0)|-|v_1|$ edges.
  So 
  \begin{equation*}
    |E(v_0, v_1)| \le w_0 |v_0| - (|N_{V_1}(v_0)|-|v_1|) \le \epsilon w_0 |v_0| + |v_1|,
  \end{equation*}
    where the last inequality follows from the lossless assumption.
  
  The second inequality can be derived from the first inequality
    by writing
    \begin{equation*}
      \sum_{x_0 \in v_0} \max(\deg_{v_1}(x_0) - \epsilon w_0, 0)
      = \sum_{x_0 \in \{x_0 \in v_0 : \deg_{v_1}(x_0) > \epsilon w_0\} } 
        \deg_{v_1}(x_0) - \epsilon w_0
      \le |v_1|.
    \end{equation*}
\end{proof}

Here, we review the definition of flow, cut, and the max-flow min-cut theorem.

\begin{definition}[Flow network, flow and cut]
  Let $N = (V, E, s, t, c)$ be a flow network
    where $(V, E)$ forms a directed graph,
    $s, t \in V$ are the source and the sink,
    and $c: E \rightarrow \RR$ is the capacity function.
  
  A flow is a function $f: E \rightarrow \RR$ that satisfies
  \begin{enumerate}
    \item Capacity constraint: For every edge $(u, v) \in E$,
      $f(u,v) \le c(u,v)$.
    \item Conservation of flows: For each vertex $v \in V$ besides $s$ and $t$,
      $\sum_{\{u:(u,v) \in E\}} f(u,v) = \sum_{\{w:(v,w) \in E\}} f(v,w)$.
  \end{enumerate}

  The value of a flow is defined by
  \begin{equation}
    |f| = \sum_{\{v:(s,v) \in E\}} f(s,v) = \sum_{\{v:(v,t) \in E\}} f(v,t),
  \end{equation}
    where the last equality follows from flow conservation.

  A cut $(S,T)$ is a partition of V such that
    $s \in S$ and $t \in T$.
  
  The capacity of a cut is the sum of the capacities at the boundary of $S$ and $T$
  \begin{equation}
    c(S,T) = \sum_{(u,v) \in E, u \in S, v \in T} c(u,v).
  \end{equation}
  
\end{definition}

The maximum flow problem is to maximize $|f|$.
The minimum cut problem is to minimize $c(S, T)$.

\begin{theorem}[Max-flow min-cut] \label{thm:max-flow-min-cut}
  The maximum value of a flow is equal to the minimum capacity of a cut.

  Furthermore, when the capacities in a flow network are integers,
    there is a maximum flow such that the flow on each edge is an integer.
\end{theorem}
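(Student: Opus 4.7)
The plan is to split the theorem into two parts --- the main max-flow $=$ min-cut equality and the integrality addendum --- and handle them in sequence. The equality itself breaks naturally into weak duality ($|f| \le c(S,T)$ for every flow $f$ and cut $(S,T)$), which is routine, and strong duality, which is where I expect the real work; integrality will then come almost for free from the algorithmic proof of strong duality.

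For weak duality, I would first derive the identity
\begin{equation*}
  |f| = \sum_{(u,v) \in E,\ u \in S,\ v \in T} f(u,v) \;-\; \sum_{(u,v) \in E,\ u \in T,\ v \in S} f(u,v)
\end{equation*}
by summing the flow-conservation equation over $v \in S \setminus \{s\}$ together with the definition of $|f|$, so that the interior contributions telescope and cancel. Combined with $0 \le f(u,v) \le c(u,v)$, this immediately yields $|f| \le c(S,T)$, so any flow value is bounded above by any cut capacity.

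For strong duality, the main step, I would pass to the residual graph $G_{f^*}$ associated to a maximum flow $f^*$, whose forward edges carry capacity $c(u,v) - f^*(u,v)$ and whose reverse edges carry capacity $f^*(u,v)$. Let $S$ be the set of vertices reachable from $s$ in $G_{f^*}$ along edges of strictly positive residual capacity. If $t$ were in $S$ the witnessing path would be an augmenting path, contradicting maximality of $f^*$; hence $t \notin S$ and $(S, V \setminus S)$ is an honest cut. Every forward edge from $S$ to $V \setminus S$ must be saturated by $f^*$ (otherwise its residual would be positive and its endpoint in $V \setminus S$ would actually lie in $S$), and every edge from $V \setminus S$ to $S$ must carry zero flow in $f^*$. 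Plugging this into the identity from the weak-duality step gives $|f^*| = c(S, V \setminus S)$, which together with weak duality forces equality and simultaneously exhibits the minimizing cut.

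For integrality, I would run Ford--Fulkerson starting from the zero flow: if all capacities are integers, then the bottleneck residual capacity along any augmenting path is a positive integer, so after each augmentation the flow remains integer-valued. Since $|f|$ is bounded above by $c(s, V \setminus \{s\})$ and strictly increases by at least $1$ per iteration, the procedure terminates after finitely many steps with an integer-valued maximum flow, which upgrades the first part to the claimed integer statement.
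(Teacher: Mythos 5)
The paper does not actually prove this theorem; it is stated as a classical black-box result (used later to prove Lemma~\ref{lem:lossless-split}), so there is no in-paper argument to compare against. Your proposal is the standard and correct Ford--Fulkerson/residual-graph proof: weak duality via summing conservation over $S\setminus\{s\}$, strong duality via the reachable set in the residual graph of a maximum flow, and integrality from the fact that each augmentation preserves integrality and increases the value by at least $1$. Two small points worth making explicit if you write this out: (i) the paper's definition of a flow omits the nonnegativity constraint $f(u,v)\ge 0$, which your weak-duality step silently uses to drop the $T\to S$ term; and (ii) for arbitrary real capacities your strong-duality argument presupposes that a maximum flow exists, which needs a separate (compactness) justification since Ford--Fulkerson need not terminate in that setting --- in the integer case this is moot because your termination argument produces the maximum flow directly.
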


Finally, we are ready to prove the main lossless lemma.

\begin{proof}[Proof of Lemma~\ref{lem:lossless-split}]
  Recall that $X' = (V'_0, V'_1, E')$ is the subgraph of interest
    and our goal is to find a partition into two subgraphs $X' = Y \cup Z$.

  Consider the flow network $(V, E, s, t, c)$ where
  \begin{equation}
    V = \{s\} \cup V'_0 \cup V'_1 \cup \{t\},
  \end{equation}

  \begin{equation}
    E = \{(s, x_0) : x_0 \in V'_0\} \cup e \cup \{(x_1, t) : x_1 \in V'_1\},
  \end{equation}

  \begin{equation}
    c(s, x_0) = \max(\deg_{V'_1}(x_0)-\epsilon w_0, 0),
  \end{equation}

  \begin{equation}
    c(x_0, x_1) = 1,
  \end{equation}

  \begin{equation}
    c(x_1, t) = 1.
  \end{equation}

  \begin{figure}
    \centering
    % \fbox{
    \begin{tikzpicture}
      \newcommand{\w}{1.5}

% draw dot
\fill (0,0) circle[radius=1pt] node (s) [label=below:$s$] {};

\fill (\w, 1) circle[radius=1pt] node (x0) {};
\fill (\w, 0) circle[radius=1pt] node (y0) {};
\fill (\w, -1) circle[radius=1pt] node (z0) [label=below:$V'_0$] {};

\fill (2*\w, 1) circle[radius=1pt] node (x1) {};
\fill (2*\w, 0) circle[radius=1pt] node (y1) {};
\fill (2*\w, -1) circle[radius=1pt] node (z1) [label=below:$V'_1$] {};

\fill (3*\w,0) circle[radius=1pt] node (t) [label=below:$t$] {};

% draw edge
\draw [->] (s) -- (x0);
\draw [->] (s) -- (y0);
\draw [->] (s) -- (z0);

\draw [->] (x0) -- (x1);
\draw [->] (x0) -- (y1);
% \draw [->] (x0) -- (z1);

\draw [->] (y0) -- (x1);
\draw [->] (y0) -- (y1);
\draw [->] (y0) -- (z1);

% \draw [->] (z0) -- (x1);
\draw [->] (z0) -- (y1);
\draw [->] (z0) -- node[auto, swap] {$E'$} (z1);

\draw [->] (x1) -- (t);
\draw [->] (y1) -- (t);
\draw [->] (z1) -- (t);

\draw (0.5*\w, -2.5) node [rotate=-90]
  {$\left.\rule{0pt}{\dimexpr\w cm/2}\right\}$};
\draw (.5*\w, -3.5) node [overlay] {$c=\max(\deg_{V'_1}(x_0)-\epsilon w_0, 0)$};
\draw (1.5*\w, -2.5) node {$c=1$};
\draw (2.5*\w, -2.5) node {$c=1$};
    %   \draw(current bounding box.south west)
    %   rectangle(current bounding box.north east);
    \end{tikzpicture}
    % }
    \caption{The flow network for proving Lemma~\ref{lem:lossless-split}.}
    \label{fig:flow-network}
  \end{figure}

  Here, we claim the existence of an integer flow which proves the theorem
  and we prove the claim afterwards.
  \begin{claim}
    There exist an integer flow $f$ with value 
    $\sum_{x_0 \in V'_0} \max(\deg_{X'}(x_0) - \epsilon w_0, 0)$.
  \end{claim}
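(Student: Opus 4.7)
The plan is to apply the max-flow min-cut theorem (Theorem~\ref{thm:max-flow-min-cut}) directly to the constructed flow network. Since all capacities are integers, producing an integer flow of the desired value $M := \sum_{x_0 \in V'_0} \max(\deg_{V'_1}(x_0) - \epsilon w_0, 0)$ reduces to proving that every cut has capacity at least $M$. The trivial cut $S = \{s\}$ already has capacity exactly $M$ (it just pays for all the outgoing source edges), so the minimum cut is automatically $\le M$; the real content is the matching lower bound.

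Fix any cut $(S, T)$ with $s \in S$ and $t \in T$, and write $A = V'_0 \cap S$, $B = V'_1 \cap S$, with complements $A^c = V'_0 \setminus A$ and $B^c = V'_1 \setminus B$. Enumerating the three types of edges crossing the cut --- source edges $(s, x_0)$ with $x_0 \in A^c$, internal edges $(x_0, x_1) \in E'$ with $x_0 \in A$ and $x_1 \in B^c$, and sink edges $(x_1, t)$ with $x_1 \in B$ --- gives
\[
c(S, T) = \sum_{x_0 \in A^c} \max(\deg_{V'_1}(x_0) - \epsilon w_0, 0) + |E'(A, B^c)| + |B|.
\]
Comparing with $M$, the bound $c(S, T) \ge M$ is equivalent to
\[
\sum_{x_0 \in A} \max(\deg_{V'_1}(x_0) - \epsilon w_0, 0) \le |E'(A, B^c)| + |B|.
\]

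The step I expect to be the main obstacle --- and the only place lossless expansion enters --- is bridging this inequality to~(\ref{eq:ineq-lossless-v2}), which only controls sums of the form $\sum_{x_0 \in A} \max(\deg_B(x_0) - \epsilon w_0, 0) \le |B|$. The bridge is the sub-additivity $\max(a + b, 0) \le \max(a, 0) + \max(b, 0)$: splitting $\deg_{V'_1}(x_0) = \deg_B(x_0) + \deg_{B^c}(x_0)$ and taking $a = \deg_B(x_0) - \epsilon w_0$, $b = \deg_{B^c}(x_0) \ge 0$, one obtains
\[
\sum_{x_0 \in A} \max(\deg_{V'_1}(x_0) - \epsilon w_0, 0) \le \sum_{x_0 \in A} \max(\deg_B(x_0) - \epsilon w_0, 0) + \sum_{x_0 \in A} \deg_{B^c}(x_0).
\]
The second sum on the right is precisely $|E'(A, B^c)|$. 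The first sum is bounded by $|B|$ via~(\ref{eq:ineq-lossless-v2}) applied with $v_0 = A$ and $v_1 = B$, which is legitimate because $|A| \le |V'_0| < c|V_0|$ by the hypothesis of Lemma~\ref{lem:lossless-split}. Combining yields the required inequality, so min cut $\ge M$, and the integer max flow produced by Theorem~\ref{thm:max-flow-min-cut} has value exactly $M$.
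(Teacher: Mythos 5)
Your proof is correct and follows essentially the same route as the paper: both reduce the claim via max-flow min-cut to a lower bound on every cut, compute the cut capacity in the same three pieces, cancel the source-side terms, split $\deg_{V'_1}(x_0)$ across $B$ and $B^c$ using the sub-additivity of $\max(\cdot,0)$, and invoke inequality~(\ref{eq:ineq-lossless-v2}) with $v_0 = A$, $v_1 = B$. Your writeup makes the sub-additivity step and the smallness check $|A| \le |V'_0| < c|V_0|$ more explicit than the paper does, but the argument is the same.
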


  Assuming the exists of the flow $f$,
    we define $Y = (V'_0, V'_1, \{e \in E': f(e) = 1\})$
    to be the subgraph consists of edges with flows equal to 1
    and define $Z = X' - Y$ to be its complement.

  We show the subgraphs satisfy the condition.
  We have $\deg_{Y}(x_1) \le 1$, because of capacity $c(x_1, t) = 1$ and flow conservation.
  Now, we show $\deg_{Z}(x_0) \le \epsilon w_0$.
  Because the cut $S = \{s\}$, $T = V'_0 \cup V'_1 \cup \{t\}$
    has capacity $\sum_{x_0 \in V'_0} \max(\deg_{X'}(x_0) - \epsilon w_0, 0)$,
    $f$ is saturated on this cut,
    so $f(s, x_0) = \max(\deg_{X'}(x_0) - \epsilon w_0, 0)$.
  By flow conservation
    $\deg_{Y}(x_0) = f(s, x_0) = \max(\deg_{X'}(x_0) - \epsilon w_0, 0)$
    which implies $\deg_{Z}(x_0) \le \epsilon w_0$.

  Now we suffice to prove the claim.

  \begin{proof} [Proof of the claim]
    By max-flow min-cut theorem \ref{thm:max-flow-min-cut}, 
      it is sufficient to show all cuts have value greater than 
      $\sum_{x_0 \in V'_0} \max(\deg_{V'_1}(x_0) - \epsilon w_0, 0)$.
    We will show this using Equation~\ref{eq:ineq-lossless-v2} in Lemma~\ref{lem:lossless-split}.
    
    Given a cut $S = \{s\} \cup S_0 \cup S_1$, $T = T_0 \cup T_1 \cup \{t\}$,
      where $S_0 \cup T_0 = V'_0$, $S_1 \cup T_1 = V'_1$.
    The capacity of the cut is 
    \begin{eqs}
      c(S, T) &= c(s, T_0) + c(S_0, T_1) + c(S_1, t) \\
              &= \sum_{x_0 \in T_0} \max(\deg_{V'_1}(x_0) - \epsilon w_0, 0)
              + E(S_0, T_1) + |S_1| \\
              &= \sum_{x_0 \in T_0} \max(\deg_{V'_1}(x_0) - \epsilon w_0, 0)
              + \sum_{x_0 \in S_0} \deg_{T_1}(x_0) + |S_1|.
    \end{eqs}
    
    Now, we show the value is $\ge \sum_{x_0 \in V'_0} \max(\deg_{V'_1}(x_0) - \epsilon w_0, 0)$.
    \begin{eqs}
      & \sum_{x_0 \in V'_0} \max(\deg_{V'_1}(x_0) - \epsilon w_0, 0) \le c(S, T) \\
      \Leftrightarrow &\sum_{x_0 \in S_0} \max(\deg_{S_1}(x_0) + \deg_{T_1}(x_0) - \epsilon w_0, 0)
        \le \sum_{x_0 \in S_0} \deg_{T_1}(x_0) + |S_1| \\
      \Leftarrow &\sum_{x_0 \in S_0} \max(\deg_{S_1}(x_0) - \epsilon w_0, 0) \le |S_1|,
    \end{eqs}
    where the last inequality follows from Equation~\ref{eq:ineq-lossless-v2}
      which uses $|V'_0| \le c |V_0|$.
  
    Therefore, all cuts have capacities
      $\ge \sum_{x_0 \in V'_0} \max(\deg_{V'_1}(x_0) - \epsilon w_0, 0)$,
      which together with the max-flow min-cut theorem~\ref{thm:max-flow-min-cut} implies the existence of the integer flow.
  \end{proof}
\end{proof}
\section{Chain complex from balanced product}

In this section, we first provide more detail on the construction of the chain complex, then we prove the chain complex is well defined.

To get the code, we need to obtain vector spaces and linear maps from the graph. We do so by taking the adjacency matrix. 
Recall the adjacency matrix of a bipartite graph
  is a linear map
  $L((V_0, V_1, E)): \FF_2^{V_0} \rightarrow \FF_2^{V_1}$.

Before going further, we introduce a convenient notation.
Because of the one-one correspondence between
  the subsets of $V$ and vectors in $\FF_2^{V}$
  by the map $\v \mapsto \sum_{x \in \v} e_x$
    where $\v \subseteq V$,
  we abuse the notation by overloading both use cases.
For example, we write both $x \in \v$ and $\v \in \FF_2^{V}$,
  where the first $\v$ is interpreted as a set with $x$ as its element
  and the second $\v$ is interpreted as a vector $\v \in \FF_2^{V}$.
Similarly, $x_0$ can be interpreted both as an element in $V_0$
  and the basis vector $e_{x_0}$.
  
Now, we generalize the construction of linear maps from bipartite graphs to
  balanced product of bipartite graphs.
Given a balanced product graph $(V^*, E^*, F)$,
  we obtain the vector spaces 
  $\FF_2^{V_{00}}, \FF_2^{V_{10}}, \FF_2^{V_{01}}, \FF_2^{V_{11}}$
  and the linear maps
  \begin{itemize}
    \item $L_{00 \rightarrow 10}=L(E_{*0}): \FF_2^{V_{00}} \rightarrow \FF_2^{V_{10}}$,
    \item $L_{00 \rightarrow 01}=L(E_{0*}): \FF_2^{V_{00}} \rightarrow \FF_2^{V_{01}}$,
    \item $L_{10 \rightarrow 11}=L(E_{1*}): \FF_2^{V_{10}} \rightarrow \FF_2^{V_{11}}$,
    \item $L_{01 \rightarrow 11}=L(E_{*1}): \FF_2^{V_{01}} \rightarrow \FF_2^{V_{11}}$,
  \end{itemize}
  where
  $L_{00 \rightarrow 10}(z_{00}) = \sum_{z_{10} \in N_{10}(z_{00})} z_{10}$,
  $L_{00 \rightarrow 01}(z_{00}) = \sum_{z_{01} \in N_{01}(z_{00})} z_{01}$,
  $L_{10 \rightarrow 11}(z_{10}) = \sum_{z_{11} \in N_{11}(z_{10})} z_{11}$,
  $L_{01 \rightarrow 11}(z_{01}) = \sum_{z_{11} \in N_{11}(z_{01})} z_{11}$.

We claim that these 4 linear maps form a chain complex.
  $\FF_2^{V_{00}}
    \xrightarrow{\partial_2} \FF_2^{V_{10}} \oplus \FF_2^{V_{01}}
    \xrightarrow{\partial_1} \FF_2^{V_{11}}$,
  where
  \begin{equation}
    \partial_2(v_{00}) = (L_{00 \rightarrow 10}(v_{00}), L_{00 \rightarrow 01}(v_{00})),
  \end{equation}
  \begin{equation}
    \partial_1((v_{10}, v_{01})) = L_{10 \rightarrow 11}(v_{10}) + L_{01 \rightarrow 11}(v_{01}).
  \end{equation}
For the chain complex to be well defined,
  we need to show the condition $\partial_1 \partial_2 = 0$,
  which is same as showing 
  $L_{10 \rightarrow 11} L_{00 \rightarrow 10}(v) + L_{01 \rightarrow 11} L_{00 \rightarrow 01}(v) = 0$.

To prove it, we first show a lemma.

\begin{lemma} [Square completion lemma] \label{lem:square-completion}
  Given $z_{00} \in V_{00}, z_{10} \in V_{10}, z_{01} \in V_{01}$
  and $(z_{00}, z_{10}) \in E_{*0}$, $(z_{00}, z_{01}) \in E_{0*}$,
  then there exists a unique vertex $z_{11} \in V_{11}$ that completes the square $(z_{00}, z_{10}, z_{01}, z_{11}) \in F$.
\end{lemma}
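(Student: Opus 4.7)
The plan is to lift the situation from the balanced product $V^*$ up to the Cartesian product $V_{X,\alpha} \times V_{Y,\beta}$ (before quotienting by $G$), where the square-completion property is essentially built in, and then push the resulting fourth vertex back down to the quotient. The freeness of the $G$-action is what makes the lifts canonical.

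More concretely, I would first fix any representative $(x_0, y_0) \in V_{X,0} \times V_{Y,0}$ of $z_{00}$. Since $(z_{00}, z_{10}) \in E_{*0}$, there is some representative $((x_0', y_0'), (x_1', y_0'))$ of this edge with $(x_0', x_1') \in E_X$. The two endpoint equivalence classes force $[(x_0', y_0')] = z_{00} = [(x_0, y_0)]$, so there is a unique $g \in G$ (by freeness) with $g(x_0', y_0') = (x_0, y_0)$; translating by $g$ gives a lift of the edge of the form $((x_0, y_0), (x_1, y_0))$ with $(x_0, x_1) \in E_X$, and this $x_1$ is uniquely determined (again by freeness on the other endpoint). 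The same argument applied to $(z_{00}, z_{01}) \in E_{0*}$ produces a unique $y_1$ with $(y_0, y_1) \in E_Y$ such that the edge lifts to $((x_0, y_0), (x_0, y_1))$.

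For existence, I then set $z_{11} := [(x_1, y_1)] \in V_{11}$. Since $(x_0, x_1) \in E_X$ and $(y_0, y_1) \in E_Y$, the tuple $((x_0, y_0), (x_1, y_0), (x_0, y_1), (x_1, y_1))$ is a face of the Cartesian product, so its image under the $G$-quotient, namely $(z_{00}, z_{10}, z_{01}, z_{11})$, lies in $F$. For uniqueness, suppose $z_{11}''$ also completes the square, giving a face that lifts to some $((x_0'', y_0''), (x_1'', y_0''), (x_0'', y_1''), (x_1'', y_1''))$; matching the $z_{00}$ corner and using freeness of the $G$-action, we may translate so the lift starts at $(x_0, y_0)$, and then the three edges of this lifted face share endpoints with the two canonical edge-lifts already constructed above. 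Freeness of the $G$-action on $V_{X,1}$ and $V_{Y,1}$ then forces $x_1'' = x_1$ and $y_1'' = y_1$, so $z_{11}'' = z_{11}$.

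The only delicate point, and the step I expect to require the most care, is the systematic use of freeness to move between different representatives of the same equivalence class without gaining or losing information: every time we lift a vertex or an edge, we must verify that there is exactly one $g \in G$ translating one representative to another, so that the resulting $x_1$ and $y_1$ are truly well-defined and not just defined up to $G$. Once that bookkeeping is set up, both existence and uniqueness reduce to the trivial square-completion property of the Cartesian product before quotienting.
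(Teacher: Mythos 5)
Your proposal is correct and follows essentially the same route as the paper: choose a representative of $z_{00}$, use freeness of the $G$-action to canonically align representatives of the two edges at that corner, define $z_{11}=[(x_1,y_1)]$ from the resulting face of the Cartesian product, and prove uniqueness by lifting any other completing face and translating it to the same basepoint. The only nitpick is that in the uniqueness step the relevant freeness is on the \emph{shared} coordinate of each edge (e.g.\ freeness on $V_{Y,0}$ forces $x''_1=x_1$ from $(x''_1,y_0)\sim(x_1,y_0)$), not on $V_{X,1}$ and $V_{Y,1}$ as you wrote, but since the action is assumed free on all vertex sets this does not affect the argument.
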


\begin{proof} [Proof of square completion lemma]
  By the definition of balanced product, we can write
    $z_{00} = [(x_0, y_0)], z_{10} = [(x^a_1, y^a_0)]], z_{01} = [(x^b_0, y^b_1)]$,
    and because of the group action is free there exist unique $g^a, g^b \in G$
    such that $y^a_0 = g^a y^a_0, x^a_0 = g^b x^b_0$.
    Let $x_1 = g^a x^a_1, y_1 = g^b y^b_1$.

  We first show existence.
    Consider the face $[((x_0, y_0), (x_1, y_0), (x_0, y_1), (x_1, y_1))] \in F$, 
      we see $z_{11} = [(x_1, y_1)]$ satisfies the condition.

  Now, we show uniqueness.
    Say $(z_{00}, z_{10}, z_{01}, z'_{11}) \in F$.
    By the definition of balanced product, we can write 
      $(z_{00}, z_{10}, z_{01}, z_{11}) = [((x^c_0, y^c_0), (x^c_1, y^c_0), (x^c_0, y^c_1), (x^c_1, y^c_1))]$,
    and there exists unique $g^c \in G$
      such that $x_0 = g^c x^c_0, y_0 = g^c y^c_0$.
      Let $x'_1 = g^c x^c_1, y'_1 = g^c y^c_1$.
    Because $(x_1, y_0) \sim (x^a_1, y^a_0) \sim (x^c_1, y^c_0) \sim (x'_1, y_0)$
      and because the group action is free,
      we have $x'_1 = x_1$.
    Similarly $y'_1 = y_1$.
    So $z'_{11} = [(x^c_1, y^c_1)] = [(x'_1, y'_1)] = [(x_1, y_1)] = z_{11}$.
\end{proof}

For this unique vertex, we denote $z_{11} = z_{10} \times_{z_{00}} z_{01}$.
Using a similar argument, we can also define
  $z_{10} = z_{11} \times_{z_{01}} z_{00}$,
  $z_{01} = z_{00} \times_{z_{10}} z_{11}$,
  $z_{00} = z_{01} \times_{z_{11}} z_{10}$.

Now, we show the condition
  $L_{10 \rightarrow 11} L_{00 \rightarrow 10}(v) + L_{01 \rightarrow 11} L_{00 \rightarrow 01}(v) = 0$.
By linearity, it is sufficient to show 
  $L_{10 \rightarrow 11} L_{00 \rightarrow 10}(z_{00}) + L_{00 \rightarrow 01} L_{01 \rightarrow 11}(z_{00}) = 0$ for each $z_{00} \in V_{00}$.

If we expand the summations,
  $L_{10 \rightarrow 11} L_{00 \rightarrow 10}(z_{00}) = \sum_{z_{11} \in N_{11}(z_{10})} \sum_{z_{10} \in N_{10}(z_{00})} z_{11}$,
  we see the number of $z_{11}$ appearing in $L_{10 \rightarrow 11} L_{00 \rightarrow 10}(z_{00})$ is equal to
    $|\{z_{10}: z_{10} \in N_{10}(z_{00}) \cap N_{10}(z_{11})\}|$.
Similarly, the number of $z_{11}$ appearing in $L_{01 \rightarrow 11} L_{00 \rightarrow 01}(z_{00})$ is equal to
$|\{z_{01}: z_{01} \in N_{01}(z_{00}) \cap N_{01}(z_{11})\}|$.

Now, we show a bijection between $\{z_{10}: z_{10} \in N_{10}(z_{00}) \cap N_{10}(z_{11})\}$ and $\{z_{01}: z_{01} \in N_{01}(z_{00}) \cap N_{01}(z_{11})\}$.
This would implies the total number of $z_{11}$ appearing in $L_{10 \rightarrow 11} L_{00 \rightarrow 10}(z_{00}) + L_{00 \rightarrow 01} L_{01 \rightarrow 11}(z_{00})$ is even
which proves the condition $\partial_1 \partial_2 = 0$.

To show the bijection, we use the square completion lemma~\ref{lem:square-completion}.
  Define $f: \{z_{10}: z_{10} \in N_{10}(z_{00}) \cap N_{10}(z_{11})\} \rightarrow \{z_{01}: z_{01} \in N_{01}(z_{00}) \cap N_{01}(z_{11})\}$ to be
    $f(z_{10}) = z_{00} \times_{z_{10}} z_{11}$
  and $g: \{z_{01}: z_{01} \in N_{01}(z_{00}) \cap N_{01}(z_{11})\} \rightarrow \{z_{10}: z_{10} \in N_{10}(z_{00}) \cap N_{10}(z_{11})\}$ to be 
    $g(z_{01}) = z_{11} \times_{z_{01}} z_{00}$.
  It is clear that $f(g(z_{10})) = z_{10}$ and $g(f(z_{01})) = z_{01}$.
\section{Lemmas for Theorem~\ref{thm:good-qLDPC}} \label{sec:distance-lemma}
%\mh{Rearranged from Sec 2 on Jan 22. Make sure it was cited somewhere otherwise, please remove. }
In this section, we provide relevant materials for the proof of good qLDPC code, Theorem~\ref{thm:good-qLDPC}.

We first prove a simple lemma that shows the one-dimensional subgraph of the balanced product graph remains a lossless expander.
Then we introduce locally minimal distance and show that small set LTC lemma implies linear locally minimal distance, which further implies linear distance.

\subsection{Lossless expander}

\begin{lemma} \label{lem:freely-implies-copy}
Given two graphs $X_{\updownarrow} = (V_{0*}, V_{1*}, E_{\updownarrow})$ and $X_{\leftrightarrow} = (V_{*0}, V_{*1}, E_{\leftrightarrow})$ with free $G$-invariant action. Then for the one-dimensional subgraph of $X_{\updownarrow} \times_G X_{\leftrightarrow}$,
$(V_{00}, V_{10}, E_{*0})$ is isomorphic to $|V_{*0}/G|$ copies of $X_{\updownarrow}$. Similarly,  
$(V_{01}, V_{11}, E_{*1})$ is isomorphic to $|V_{*1}/G|$ copies of $X_{\updownarrow}$,
$(V_{00}, V_{01}, E_{0*})$ is isomorphic to $|V_{0*}/G|$ copies of $X_{\leftrightarrow}$, and 
$(V_{10}, V_{11}, E_{1*})$ is isomorphic to $|V_{1*}/G|$ copies of $X_{\leftrightarrow}$.
%\david{can you explain more what this comment is?}
\end{lemma}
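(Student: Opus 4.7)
The plan is to prove the first isomorphism $(V_{00}, V_{10}, E_{*0}) \cong \bigsqcup_{V_{*0}/G} X_\updownarrow$ carefully, since the other three claims are entirely symmetric (swap the roles of the two factors and/or of the two sides of each bipartite graph).

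First, I would fix a set of orbit representatives $R \subseteq V_{*0}$ for the quotient $V_{*0}/G$, which exists since the $G$-action on $V_{*0}$ is free. The freeness of the action is the key input: it says that for every $y \in V_{*0}$ there is a unique pair $(g, r) \in G \times R$ with $y = g \cdot r$. Using this, I would define candidate identifications
\begin{equation}
  \Phi_0 : V_{0*} \times R \longrightarrow V_{00}, \quad (x_0, r) \mapsto [(x_0, r)],
\end{equation}
\begin{equation}
  \Phi_1 : V_{1*} \times R \longrightarrow V_{10}, \quad (x_1, r) \mapsto [(x_1, r)],
\end{equation}
and check they are bijections. Surjectivity follows because for any $[(x_\alpha, y)] \in V_{\alpha 0}$, writing $y = g \cdot r$ with $r \in R$ gives $[(x_\alpha, y)] = [(g^{-1}x_\alpha, r)]$. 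Injectivity follows because $[(x_\alpha, r)] = [(x'_\alpha, r')]$ forces $r' = g \cdot r$ for some $g$, and since $r, r' \in R$ lie in possibly different orbits, we must have $g = 1$ (here I use freeness on $V_{*0}$), hence $r = r'$ and $x_\alpha = x'_\alpha$.

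Next I would verify that edges respect the $R$-label. An edge in $E_{*0}$ lifts to a pair $((x_0, y_0), (x_1, y_0))$ with $(x_0, x_1) \in E_\updownarrow$. After translating so that $y_0 = r \in R$, the edge becomes an edge between $\Phi_0(x_0, r)$ and $\Phi_1(x_1, r)$ with the same $r$. Conversely, any $(x_0, x_1) \in E_\updownarrow$ and any $r \in R$ produce such an edge. Thus, under the bijections $\Phi_0, \Phi_1$, the edge set $E_{*0}$ decomposes as the disjoint union over $r \in R$ of a copy of $E_\updownarrow$ living in the fiber $\{(\cdot, r)\}$. Since $|R| = |V_{*0}/G|$, this exhibits $(V_{00}, V_{10}, E_{*0})$ as a disjoint union of $|V_{*0}/G|$ isomorphic copies of $X_\updownarrow = (V_{0*}, V_{1*}, E_\updownarrow)$.

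The remaining three statements follow by the same argument with the roles of $X_\updownarrow$ and $X_\leftrightarrow$ interchanged and with $V_{*0}$ replaced by the appropriate quotient ($V_{*1}/G$, $V_{0*}/G$, or $V_{1*}/G$), all of which are well-defined because $G$ acts freely on every one of these vertex sets. I expect no real obstacle here; the only substantive content is the use of freeness to pick consistent representatives, and the main thing to be careful about is confirming that the equivalence relation on edges agrees with the equivalence relation induced on pairs of vertices, so that $\Phi_0$ and $\Phi_1$ genuinely define a graph isomorphism rather than just a vertex bijection.
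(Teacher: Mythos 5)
Your proof is correct and takes essentially the same route as the paper's: both view the pre-quotient Cartesian product as $|V_{*0}|$ copies of $X_\updownarrow$ indexed by the second coordinate and observe that the free diagonal $G$-action permutes these fibers, leaving one clean copy per orbit; you simply make the paper's two-line argument explicit via orbit representatives. The only nit is the phrasing of injectivity: the correct order is that $r' = g\cdot r$ forces $r$ and $r'$ into the same orbit, hence $r = r'$ since $R$ has one representative per orbit, and only then does freeness on $V_{*0}$ give $g = 1$.
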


\begin{proof}
  Here, we show the case for $(V_{00}, V_{10}, E_{*0})$.
    Other cases follow similarly.
  Before quotienting,
    the 1d subgraph $(V'_{00}, V'_{10}, E'_{*0})$ of
    the hypergraph product $X_{\updownarrow} \times X_{\leftrightarrow}$
    is isomorphic to $|V_{*0}|$ copies of $X_{\updownarrow}$
    each labeled by $V_{*0}$.
  After quotienting,
    the copies in the same orbit of $V_{*0}$
    are identified into 1 copy of $X_{\updownarrow}$.
    So we are left with $|V_{*0}/G|$ copies of $X_{\updownarrow}$.
\end{proof}

% \david{think if we need this corollary}
\begin{corollary} \label{cor:freely-implies-lossless}
  Under the same assumption in \ref{lem:freely-implies-copy}.
  Futhermore, assume $X_{\updownarrow}$ is a 1-sided $(c_\updownarrow, \epsilon_\updownarrow)$-lossless expander, $X_{\leftrightarrow}$ is a 1-sided $(c_\leftrightarrow, \epsilon_\leftrightarrow)$-lossless expander.
  Then $(V_{00}, V_{10}, E_{*0})$ is a 1-sided $(c_\updownarrow |V_{0*}|/|V_{00}|, \epsilon_\updownarrow)$-lossless expander, 
  $(V_{01}, V_{11}, E_{*1})$ is a 1-sided $(c_\updownarrow |V_{0*}|/|V_{01}|, \epsilon_\updownarrow)$-lossless expander,
  $(V_{00}, V_{01}, E_{0*})$ is a 1-sided $(c_\leftrightarrow |V_{*0}|/|V_{00}|, \epsilon_\leftrightarrow)$-lossless expander,
  $(V_{10}, V_{11}, E_{1*})$ is a 1-sided $(c_\leftrightarrow |V_{*0}|/|V_{10}|, \epsilon_\leftrightarrow)$-lossless expander.
\end{corollary}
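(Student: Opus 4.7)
The plan is to reduce the corollary directly to Lemma~\ref{lem:freely-implies-copy} and then observe that lossless expansion is inherited by disjoint unions, provided one rescales the $c$ parameter correctly. I will treat the case $(V_{00}, V_{10}, E_{*0})$ in detail and note that the other three cases are completely analogous.

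By Lemma~\ref{lem:freely-implies-copy}, the bipartite graph $(V_{00}, V_{10}, E_{*0})$ is isomorphic to a disjoint union of $k \coloneqq |V_{*0}/G|$ copies of $X_\updownarrow$. In particular, $|V_{00}| = k \cdot |V_{0*}|$, so the threshold size to verify is $c_\updownarrow |V_{0*}|/|V_{00}| \cdot |V_{00}| = c_\updownarrow |V_{0*}|$. Given any $\nu_0 \subseteq V_{00}$ with $|\nu_0| < c_\updownarrow |V_{0*}|$, I would split $\nu_0 = \bigsqcup_{i=1}^{k} \nu_0^{(i)}$ according to which copy of $X_\updownarrow$ each vertex lies in, and similarly note $N_{V_{10}}(\nu_0) = \bigsqcup_i N(\nu_0^{(i)})$ since distinct copies share no edges.

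The key observation is that each $|\nu_0^{(i)}| \le |\nu_0| < c_\updownarrow |V_{0*}|$, so the lossless expansion hypothesis for $X_\updownarrow$ applies inside every copy, giving $|N(\nu_0^{(i)})| \ge (1-\epsilon_\updownarrow) w_\downarrow |\nu_0^{(i)}|$. Summing over $i$ yields $|N_{V_{10}}(\nu_0)| \ge (1-\epsilon_\updownarrow) w_\downarrow |\nu_0|$, which is exactly the 1-sided $(c_\updownarrow |V_{0*}|/|V_{00}|, \epsilon_\updownarrow)$-lossless expansion claim. The other three subgraphs are handled identically: for $(V_{01}, V_{11}, E_{*1})$ one again uses copies of $X_\updownarrow$, while for $(V_{00}, V_{01}, E_{0*})$ and $(V_{10}, V_{11}, E_{1*})$ one uses copies of $X_\leftrightarrow$, each time rescaling $c$ by the ratio of a single-copy side to the full vertex set.

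There is no real obstacle here; the only point that requires a little care is bookkeeping the normalization of $c$. The coefficient $|V_{0*}|/|V_{00}|$ (respectively $|V_{*0}|/|V_{00}|$, etc.) appears precisely because the absolute threshold $c_\updownarrow |V_{0*}|$ inherited from a single copy must be re-expressed as a fraction of the larger vertex set $|V_{00}|$. Everything else is immediate from disjointness of the copies.
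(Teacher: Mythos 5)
Your proposal is correct and follows essentially the same route as the paper: invoke Lemma~\ref{lem:freely-implies-copy} to decompose each one-dimensional subgraph into disjoint copies of $X_\updownarrow$ (or $X_\leftrightarrow$), split a small set across the copies, apply the single-copy lossless expansion to each piece (valid since each piece is no larger than the whole set), and sum the disjoint neighborhoods. Your bookkeeping of the threshold $c_\updownarrow |V_{0*}|/|V_{00}| \cdot |V_{00}| = c_\updownarrow |V_{0*}|$ matches the paper's, and you even state the per-copy bound with the correct factor $w_\downarrow$, which the paper's writeup omits by an apparent typo.
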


\begin{proof}
  Here, we show the case for $(V_{00}, V_{10}, E_{*0})$.
    Other cases follow similarly.
  By definition, any small set $v_{0*} \subseteq V_{0*}$, $|v_{0*}| < c_\updownarrow |V_{0*}|$,
    satisfies $|N_{V_{1*}}(v_{0*})| \ge (1-\epsilon_\updownarrow) w_\downarrow |v_{0*}|$.

  Now, given a small set $v_{00} \subseteq V_{00}$, $|v_{00}| < c_\updownarrow |V_{0*}|$.
    Let $v_{00} = \cup_{i=1}^{|V_{*0}/G|} v_{00, i}$, where $v_{00, i}$ is the intersection of $v_{00}$ with the $i$-th copy of $X_{\updownarrow}$.
  Because each $v_{00, i}$ is small, $|v_{00, i}| < c_\updownarrow |V_{0*}|$,
    the size of its neighbor $|N_{V_{10}}(v_{00, i})|$ has size at least $(1-\epsilon_\updownarrow)|v_{00, i}|$
  Therefore, $|N_{V_{10}}(v_{00})| = \sum_{i=1}^{|V_{*0}/G|} |N_{V_{10}}(v_{00, i})|
    \ge \sum_{i=1}^{|V_{*0}/G|} (1-\epsilon_\updownarrow)|v_{00, i}| = (1-\epsilon_\updownarrow)|v_{00}|$.
  This implies $(V_{00}, V_{10}, E_{*0})$ is a 1-sided $(c_\updownarrow |V_{0*}|/|V_{00}|, \epsilon_\updownarrow)$-lossless expander.
\end{proof}

\subsection{Locally minimal} \label{sec:local-minimal}

In this section, we introduce a variant of the locally minimal distance
  \cite{evra2020decodable}
  \cite{kaufman2014ramanujan}
  \cite{panteleev2021asymptotically},
  the normalized locally minimal distance
  and show that the normalized locally minimal distance is a lower bound of
  the distance.

We first review the definition of local minimality.

\begin{definition}[Locally minimal]
  Given a chain complex 
    $C_{i+1} \xrightarrow{\partial_{i+1}} C_{i}$.
  A vector $c_i \in C_i$ is locally minimal if 
    for any basis vector $e_{i+1} \in C_{i+1}$
    \begin{equation}
      |c_i + \partial_{i+1} e_{i+1}| \ge |c_i|.
    \end{equation}
\end{definition}

The definition of locally minimal is related to the greedy flipping decoder of the expander code.

\begin{definition}[Greedy flipping algorithm]
  Input: $c_i \in C_i$.
  \begin{enumerate}
    \item If there exists a basis vector $e_{i+1} \in C_{i+1}$,
      such that $|c_i + \partial_{i+1} e_{i+1}| < |c_i|$,
      replace $c_i$ with $c_i + \partial_{i+1} e_{i+1}$. 
    \item Repeat until no such $e_{i+1}$ exists. Output $c_i$.
  \end{enumerate}
\end{definition}

Any output of a greedy flipping algorithm is locally minimal.
Note that $c_i$ strictly decreases in each iteration.
So the algorithm halts in $|c_i|$ steps.
Note that $\partial_{i+1} c_i$ does not change throughout the algorithm
  because in each iteration the change is $0$, $\partial (c_i + \partial e_{i+1}) - \partial c_i = \partial \partial e_i = 0$.
We refer the process of 
  replacing $c_i$ with $c_i + \partial_{i+1} e_{i+1}$ flipping, 
  because in $\FF_2$, the bits flip between 0 and 1.

In our context, we consider a variant, the normalized locally minimal,
  where we normalize the weight
    before comparing $c_i + \partial_{i+1} e_{i+1}$ and $c_i$.
  The purpose of performing this additional normalization is to make the statement more natural.

The normalization is determined through the following discussion.
For a chain complex constructed from balanced product of regular bipartite graphs,
  $\partial e_2$ flips 
  $w_\downarrow$ bits in $\FF_2^{V_{10}}$
  and $w_\rightarrow$ bits in $\FF_2^{V_{01}}$.
So we will weight the components in $\FF_2^{V_{10}}$ with $1/w_\downarrow$,
  and the components in $\FF_2^{V_{01}}$ with $1/w_\rightarrow$.

\begin{definition}[Normalized locally minimal]
  Given a chain complex 
    $C_2 \xrightarrow{\partial_2} C_1 
    \xrightarrow{\partial_1} C_0$
    constructed from balanced product of regular bipartite graphs.
  A vector $c_1 = (v_{10}, v_{01}) \in \Ker \partial_1$ is normalized locally minimal if 
    for any basis vector $e_2 \in C_2$
    \begin{equation}
      |c_1 + \partial e_2|_w \ge |c_1|_w,
    \end{equation}
    where $|(v_{10}, v_{01}))|_w = |v_{10}|/w_\downarrow + |v_{01}|/w_\rightarrow$.
\end{definition}

The greedy flipping algorithm still applies.
Because in each step, $|c_1|_w$ strictly decreases by at least $1/\max(w_\downarrow, w_\rightarrow)$,
  the algorithm halts in $|c_1|_w \max(w_\downarrow, w_\rightarrow) \le |c_1| \max(w_\downarrow, w_\rightarrow)/\min(w_\downarrow, w_\rightarrow) = \Theta(|c_1|)$ steps.

From now on, we only consider the chain complex
  constructed from balanced product of regular bipartite graphs,
  and locally minimal always means normalized locally minimal.

Now, we define the locally minimal distance.

\begin{definition}[Locally minimal distance]
  Given a chain complex 
    $\cC: C_2 \xrightarrow{\partial_2} C_1 
    \xrightarrow{\partial_1} C_0$.
  The locally minimal distance $d_1^{LM}(\cC)$ is 
    the minimal weight of all the non trivial locally minimal vectors.
  Formally,
  \begin{equation}
    d_1^{LM}(\cC) = \min_{c_1 \in \Ker (\partial_1), c_1\text{ is locally minimal}, c_1 \ne 0} |c_1|.
  \end{equation}
\end{definition}

Finally, we show the locally minimal distance is a lower bound of
  the distance.

\begin{lemma}[Linear locally minimal distance implies linear distance]
  \label{lem:local-minimal-implies-linear-distance}
  Given a chain complex $\cC$.
  Then 
    \begin{equation}
      d_1(\cC) \ge d_1^{LM}(\cC).
    \end{equation}
\end{lemma}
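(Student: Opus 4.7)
The plan is to produce a nonzero locally minimal element of $\Ker \partial_1$ whose weight is at most $d_1(\cC)$, which is exactly what is needed for $d_1^{LM}(\cC) \le d_1(\cC)$.

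First, I would take a distance-achieving representative: pick $c_1 \in \Ker \partial_1 \setminus \Ima \partial_2$ with $|c_1| = d_1(\cC)$. Without loss of generality, I would further replace $c_1$ by a minimum weight representative of its coset $c_1 + \Ima \partial_2$; such a minimum exists because the coset is finite, and it cannot have weight larger than the original $|c_1| = d_1(\cC)$. Crucially, since $c_1 \notin \Ima \partial_2$, every coset representative (including the chosen minimum) is also in $\Ker \partial_1 \setminus \Ima \partial_2$, and in particular is nonzero.

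Next, I would observe that by this choice, for every basis vector $e_2 \in C_2$, the element $c_1 + \partial_2 e_2$ lies in the same coset, hence $|c_1 + \partial_2 e_2| \ge |c_1|$ by minimality. This is exactly the (unnormalized) locally minimal condition. To match the paper's normalized convention, I would then run the greedy flipping procedure from the preceding definitions: while some basis $e_2$ gives a strict decrease $|c_1 + \partial_2 e_2|_w < |c_1|_w$, I replace $c_1$ by $c_1 + \partial_2 e_2$. Since $|\cdot|_w$ is nonnegative and strictly decreases by at least $1/\max(w_\downarrow, w_\rightarrow)$ per iteration, the process terminates in finitely many steps at a normalized locally minimal element $c_1^\star$. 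Every iteration only adds a boundary, so $c_1^\star - c_1 \in \Ima \partial_2$ and thus $c_1^\star \in \Ker \partial_1 \setminus \Ima \partial_2$, giving $c_1^\star \ne 0$.

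Finally, $c_1^\star$ is a nonzero locally minimal element of $\Ker \partial_1$, so by the definition of $d_1^{LM}(\cC)$ as the minimum weight of such elements, $d_1^{LM}(\cC) \le |c_1^\star|$. Combined with the bound $|c_1^\star| \le |c_1| = d_1(\cC)$ coming from the initial choice, this yields the desired inequality. The main obstacle is reconciling the two weights: greedy flipping guarantees the normalized weight $|\cdot|_w$ does not increase, but a priori the unnormalized $|\cdot|$ could grow, so the argument depends on starting from a minimum unnormalized weight coset representative and verifying that the subsequent normalized flipping does not inflate $|c_1|$ past $d_1(\cC)$ — this is where care with the balanced product weights $w_\downarrow, w_\rightarrow$ is needed, and is the only nontrivial step in the proof.
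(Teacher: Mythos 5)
Your first two paragraphs are exactly the paper's proof: take $c_1$ attaining $d_1(\cC) = \min_{c \in \Ker\partial_1 - \Ima\partial_2}|c|$; since the whole coset $c_1 + \Ima\partial_2$ stays inside $\Ker\partial_1 - \Ima\partial_2$, minimality gives $|c_1 + \partial_2 e_2| \ge |c_1|$ for every basis vector $e_2$, so $c_1$ is locally minimal and $d_1^{LM}(\cC) \le |c_1| = d_1(\cC)$. The paper stops there.

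The extra step you add to reconcile the unnormalized and normalized weights does not work as written, and you are right to flag it as the sore point. After running the normalized greedy flipping from your $c_1$, the output $c_1^\star$ lies in the same coset as $c_1$; but you chose $c_1$ to have minimum unnormalized weight in that coset, so $|c_1^\star| \ge |c_1|$, not $\le$ --- the inequality you need goes the wrong way, and indeed a flip that decreases $|\cdot|_w$ can strictly increase $|\cdot|$ whenever $w_\downarrow \ne w_\rightarrow$. So your argument, like the paper's, only establishes the \emph{unnormalized} local minimality of the distance-achieving vector, whereas the standing convention in that section makes ``locally minimal'' in the definition of $d_1^{LM}$ mean \emph{normalized} locally minimal; the paper glosses over this mismatch and you have correctly isolated it without closing it. To genuinely cover the normalized notion one must either read $d_1^{LM}$ with the unnormalized condition in this lemma, or settle for the weaker but still sufficient bound $d_1(\cC) \ge \frac{\min(w_\downarrow, w_\rightarrow)}{\max(w_\downarrow, w_\rightarrow)}\, d_1^{LM}(\cC)$, which follows from $|c_1^\star| \le \max(w_\downarrow,w_\rightarrow)\,|c_1^\star|_w \le \max(w_\downarrow,w_\rightarrow)\,|c_1|_w \le \frac{\max(w_\downarrow,w_\rightarrow)}{\min(w_\downarrow,w_\rightarrow)}\,|c_1|$ and still yields linear distance.
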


\begin{proof}
  Recall $d_1(\cC) = \min_{c_1 \in \Ker \partial_1 - \Ima \partial_2} |c_1|$.
  So $d_1(\cC) = |c_1|$ for some $c_1 \in \Ker \partial_1 - \Ima \partial_2$.
  We show that such $c_1$ is locally minimal.

  Because $c_1 \notin \Ima \partial_2$, 
    we have $c_1 + \partial e_2 \notin \Ima \partial_2$.
  Because $c_1$ has the smallest weight in $\Ker \partial_1 - \Ima \partial_2$,
    we have $|c_1| \le |c_1 + \partial e_2|$.
  Therefore, $c_1$ is locally minimal 
    and $d_1(\cC) \ge d_1^{LM}(\cC)$.
\end{proof}

\begin{corollary}[Linear locally minimal distance] \label{cor:linear-local-minimal-distance}
  Under the same assumption as in the lemma~\ref{lem:small-set-LTC}
    and $\epsilon < 1/12$,
    we have
  \begin{equation}
    d_1^{LM}(\cC) \ge \min (c_\leftrightarrow|V_{0*}|/w_\leftarrow, c_\updownarrow|V_{*0}|, 
      c_\updownarrow|V_{*0}|/w_\uparrow, c_\leftrightarrow|V_{0*}|).
  \end{equation}

  When $|V_{0*}| = \Theta(|G|), |V_{*0}| = \Theta(|G|)$,
    and $w_\downarrow$, $w_\uparrow$, $w_\rightarrow$, $w_\leftarrow$, $c$ are $\Theta(1)$,
    we have $\min (c_\leftrightarrow|V_{0*}|/w_\leftarrow, c_\updownarrow|V_{*0}|, 
    c_\updownarrow|V_{*0}|/w_\uparrow, c_\leftrightarrow|V_{0*}|) = \Theta(n)$,
    where $n = |V_{10}| + |V_{01}|$.
  So,
  \begin{equation}
    d_1^{LM}(\cC) \ge \Theta(n),
  \end{equation}
  which means the locally minimal distance is linear.
\end{corollary}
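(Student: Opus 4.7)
The plan is to use Lemma~\ref{lem:small-set-LTC} as a contrapositive: any nonzero locally minimal element of $\Ker \partial_1$ that is short enough would force its own weight to zero, which is the contradiction that yields the distance lower bound.

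More precisely, I would take any nonzero normalized locally minimal $c_1 = (v_{10}, v_{01}) \in \Ker \partial_1$ and let $c_0 = \partial_1 c_1 = 0$, so $|c_0|_w = 0$. I would then argue by contradiction: suppose $c_1$ satisfies the smallness hypotheses of Lemma~\ref{lem:small-set-LTC}, namely $|v_{10}| < \min(c_\updownarrow |V_{0*}|/w_\uparrow, c_\leftrightarrow |V_{*0}|)$ and $|v_{01}| < \min(c_\leftrightarrow |V_{*0}|/w_\leftarrow, c_\updownarrow |V_{0*}|)$. Since $c_1$ is already normalized locally minimal and short, the lemma applies and gives $(\tfrac{1}{2} - 6\epsilon)|c_1|_w \le |c_0|_w = 0$. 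The assumption $\epsilon < 1/12$ makes the coefficient $\tfrac{1}{2} - 6\epsilon$ strictly positive, so this forces $|c_1|_w = 0$ and hence $c_1 = 0$, contradicting the choice of $c_1$.

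Therefore at least one of the smallness inequalities on $|v_{10}|$ or $|v_{01}|$ must fail, and in either case the unnormalized weight $|c_1| = |v_{10}| + |v_{01}|$ is at least one of the four quantities appearing in the stated minimum. Taking the infimum over all such nonzero locally minimal $c_1$ yields the bound on $d_1^{LM}(\cC)$. For the second statement, since the regularity parameters and expansion constants are $\Theta(1)$ while $|V_{0*}|, |V_{*0}| = \Theta(|G|) = \Theta(n)$ (using $n = |V_{10}| + |V_{01}| = \Theta(|G|)$ from the construction), every term inside the minimum is $\Theta(n)$, and so is their minimum.

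I do not expect any real obstacle here: the heavy lifting is already done inside Lemma~\ref{lem:small-set-LTC}, and the corollary is essentially a packaging step. The only thing to watch for is keeping the two shortness bounds on $|v_{10}|$ and $|v_{01}|$ straight so that the ``or'' from negating their conjunction lines up correctly with the four-term minimum in the conclusion, and making sure the condition $\epsilon < 1/12$ is invoked precisely to keep $\tfrac{1}{2} - 6\epsilon > 0$ so that $|c_0|_w = 0$ really does propagate to $|c_1|_w = 0$.
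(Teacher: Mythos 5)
Your proposal is correct and follows essentially the same route as the paper: apply Lemma~\ref{lem:small-set-LTC} to a nonzero locally minimal $c_1 \in \Ker\partial_1$ (so $|c_0|_w = 0$), use $\epsilon < 1/12$ to conclude $c_1$ cannot satisfy both shortness hypotheses, and hence $|c_1|$ is bounded below by the four-term minimum. The asymptotic part is likewise handled identically, so there is nothing to add.
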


\begin{proof}
  Recall the definition of locally minimal distance,
    $d_1^{LM}(\cC) = \min_{c_1 \in \Ker \partial_1, c_1\text{ is locally minimal}, c_1 \ne 0} |c_1|$.

  From lemma~\ref{lem:small-set-LTC} we know if 
    $|v_{10}| < \min(c_\leftrightarrow|V_{0*}|/w_\leftarrow, c_\updownarrow|V_{*0}|)$ and 
    $|v_{01}| < \min(c_\updownarrow|V_{*0}|/w_\uparrow, c_\leftrightarrow|V_{0*}|)$
    then $|c_1|(\frac{1}{2} - 6 \epsilon) \le |c_0| = 0$,
    for $c_1 \in \Ker \partial_1, c_1\text{ is locally minimal}$.
  Therefore, if $c_1 \ne 0$,
    at least one of 
      $|v_{10}| < \min(c_\leftrightarrow|V_{0*}|/w_\leftarrow, c_\updownarrow|V_{*0}|), 
      |v_{01}| < \min(c_\updownarrow|V_{*0}|/w_\uparrow, c_\leftrightarrow|V_{0*}|)$
    is violated.
  So $|c_1| \ge \min (c_\leftrightarrow|V_{0*}|/w_\leftarrow, c_\updownarrow|V_{*0}|, 
    c_\updownarrow|V_{*0}|/w_\uparrow, c_\leftrightarrow|V_{0*}|)$.
\end{proof}

\section{Proof of small set LTC lemma~\ref{lem:small-set-LTC} and found if short lemma~\ref{lem:found-if-short}} \label{sec:proof-key-lemmas}
Here, we prove the two key lemmas 
  for the linear distance and the linear time decoder.
We first prove the harder lemma for the linear time decoder,
  and obtain the small set LTC lemma as a corollary.
Alternatively, one can also prove the small set LTC lemma directly as in \cite{lin2022c}.

\subsection{Prove the found if short lemma~\ref{lem:found-if-short}}

Here is an overview of the proof.
For each $x_{00} \in V_{00}$,
  we set $n_{10}(x_{00}), n_{01}(x_{00})$
  according to the structure of lossless expanders.
We show that there exists $x_{00}$ such that the corresponding $n_{10}(x_{00}), n_{01}(x_{00})$ is flippable.
This is done through an averaging argument over $x_{00}$.
The averaged argument is shown by utilizing the inequalities from lossless expanders.

\begin{proof}[Proof of Lemma~\ref{lem:found-if-short}]
  
  By Corollary~\ref{cor:freely-implies-lossless},
    $(V_{00}, V_{10}, E_{*0})$ is a 1-sided $(c_\updownarrow |V_{0*}|/|V_{00}|, \epsilon_\updownarrow)$-lossless expander, 
    $(V_{01}, V_{11}, E_{*1})$ is a 1-sided $(c_\updownarrow |V_{0*}|/|V_{01}|, \epsilon_\updownarrow)$-lossless expander,
    $(V_{00}, V_{01}, E_{0*})$ is a 1-sided $(c_\leftrightarrow |V_{*0}|/|V_{00}|, \epsilon_\leftrightarrow)$-lossless expander,
    $(V_{10}, V_{11}, E_{1*})$ is a 1-sided $(c_\leftrightarrow |V_{*0}|/|V_{10}|, \epsilon_\leftrightarrow)$-lossless expander.

  Recall that $(v_{10}, v_{01})$ represents the error 
    and are assumed to be short enough
    so that the lossless lemmas apply.

  We first assign $n_{10}(x_{00}), n_{01}(x_{00})$
    for each $x_{00} \in V_{00}$.
  By Corollary~\ref{cor:lossless-split},
    there exist $n_{v_{10}}(x_{00}) \subseteq N_{v_{10}}(x_{00}), n_{v_{01}}(x_{00}) \subseteq N_{v_{01}}(x_{00})$,
    such that
    \begin{itemize}
      \item $\{n_{v_{10}}(x_{00}): x_{00} \in V_{00}\}$ forms a partition of $v_{10}$,
      \item $|N_{v_{10}}(x_{00}) - n_{v_{10}}(x_{00})| \le \epsilon_\leftrightarrow w_\rightarrow$,
      \item $\{n_{v_{01}}(x_{00}): x_{00} \in V_{00}\}$ forms a partition of $v_{01}$,
      \item $|N_{v_{01}}(x_{00}) - n_{v_{01}}(x_{00})| \le \epsilon_\updownarrow w_\downarrow$.
    \end{itemize}

  \begin{figure}
    \centering
    % \begin{tikzpicture}
    %   \input{tikz/local-view-small-set-LTC.tex}
    % \end{tikzpicture}
    \begin{tikzpicture}
      \begin{scope}[shift={(-5,5)}]
        \def\eps{2pt}

        % draw 0d
        \fill (0,0) circle[radius=1pt] node [label=above left:$x_{00}$] {};
    
        % draw 1d
        \filldraw[fill=black, draw=black] (-\eps,-2) rectangle (\eps,-4);
        \filldraw[fill=gray, draw=black] (-\eps,-4) rectangle (\eps,-5);
        \filldraw[fill=white, draw=black] (-\eps,-5) rectangle (\eps,-8);
    
        \filldraw[fill=black, draw=black] (2,-\eps) rectangle (4,\eps);
        \filldraw[fill=gray, draw=black] (4,-\eps) rectangle (5,\eps);
        \filldraw[fill=white, draw=black] (5,-\eps) rectangle (8,\eps);
    
        % draw 1d label
        \path (-\eps,-2) -- node[anchor=south, rotate = 90]{$n_{v_{10}}(x_{00})$} (-\eps,-4);
        \path (-\eps,-4) -- node[anchor=north, rotate = 90]{$N_{v_{10}}(x_{00}) - n_{v_{10}}(x_{00})$} (-\eps,-5);
        \path (-\eps,-5) -- node[anchor=south, rotate = 90]{$N_{10}(x_{00}) - N_{v_{10}}(x_{00})$} (-\eps,-8);
    
        \path (2,\eps) -- node[auto]{$n_{v_{01}}(x_{00})$} (4,\eps);
        \path (4,\eps) -- node[auto,swap]{$N_{v_{01}}(x_{00}) - n_{v_{01}}(x_{00})$} (5,\eps);
        \path (5,\eps) -- node[auto]{$N_{01}(x_{00}) - N_{v_{01}}(x_{00})$} (8,\eps);
      
      \end{scope}
      
      \draw (-3, -3) rectangle (3, 3);
      \begin{scope}[shift={(1pt,1pt)}, blue, rounded corners]
        \draw [save path = \AOI] (-1, 1) rectangle (3, 3)
              node [above] {$A_{01}(x_{00})$};
        \clip [use path = \AOI];
        \foreach\x in {-3, -2.5, ..., 3}{
          \draw [.!50!white] (\x, 1) -- +(2, 2);
        }
      \end{scope}
      
      \begin{scope}[shift={(1pt,1pt)}, blue, rounded corners]
        \draw [save path = \AIO] (-1, 1) rectangle (-3, -3)
              node [left] {$A_{10}(x_{00})$};
        \clip [use path = \AIO];
        \foreach\y in {3, 2.5, ..., -3}{
          \draw [.!50!white] (-1, \y) -- +(-2, -2);
        }
      \end{scope}
      
      \begin{scope}[orange, rounded corners]
        \draw [save path = \pathBOI] (-1, 1) rectangle (0, 3);
        \node [above] at (-1,3) {$B_{01}(x_{00})$};
        \clip [use path = \pathBOI];
        \foreach\x in {-3, -2.5, ..., 3}{
          \draw [.!50!white] (\x, 1) -- +(-2, 2);
        }
      \end{scope}
      
      \begin{scope}[orange, rounded corners]
        \draw [save path = \pathBIO] (-1, 1) rectangle (-3, 0);
        \node [left] at (-3,1) {$B_{10}(x_{00})$};
        \clip [use path = \pathBIO];
        \foreach\y in {3, 2.5, ..., -3}{
          \draw [.!50!white] (-1, \y) -- +(-2, 2);
        }
      \end{scope}
      
      \begin{scope}[teal, rounded corners]
        \draw [save path = \COI] (-1, 1) rectangle (3, 3);
        \path (-1,3) -- node [above] {$C_{01}(x_{00})$} (3,3);
        \clip [use path = \COI];
        \foreach\x in {-3.25, -2.75, ..., 3.25}{
          \foreach\y in {-3, -2.5, ..., 3}{
            \fill (\x,\y) circle[radius=1pt, .!50!white];
          }
        }
      \end{scope}
      
      \begin{scope}[teal, rounded corners]
        \draw [save path = \CIO] (-1, 1) rectangle (-3, -3);
        \path (-3,1) -- node [left] {$C_{10}(x_{00})$} (-3,-3);
        \clip [use path = \CIO];
        \foreach\x in {-3.25, -2.75, ..., 3.25}{
          \foreach\y in {-3, -2.5, ..., 3}{
            \fill (\x,\y) circle[radius=1pt, .!50!white];
          }
        }
      \end{scope}
      
      \begin{scope} [shift={(-1pt,-1pt)}, purple, rounded corners]
        \draw (0, 0) -- (0, 3) -- (-1, 3) -- (-1, 1)
              -- (-3, 1) -- (-3, 0) -- cycle
              node [below right] {$D(x_{00})$};
        \clip (0, 0) -- (0, 3) -- (-1, 3) -- (-1, 1)
              -- (-3, 1) -- (-3, 0) -- cycle;
        \foreach\x in{-3, -2.7, ..., 0}{
          \draw [.!50!white] (\x, 0) -- +(1, 3);
        }
      \end{scope}
      
    % %   O(1) 畫虛線的方法
    %   \draw [line width = 3cm,
    %          dash pattern = on 0.4pt off 12pt,
    %          rotate = 10,
    %          green]
    %   (4, 0) -- (6, 0);
    \end{tikzpicture}
    \caption{Sketch of the regions $A, B, C, D$ for the proof of Lemma~\ref{lem:found-if-short}.}
    \label{fig:local-view-small-set-LTC}
  \end{figure}
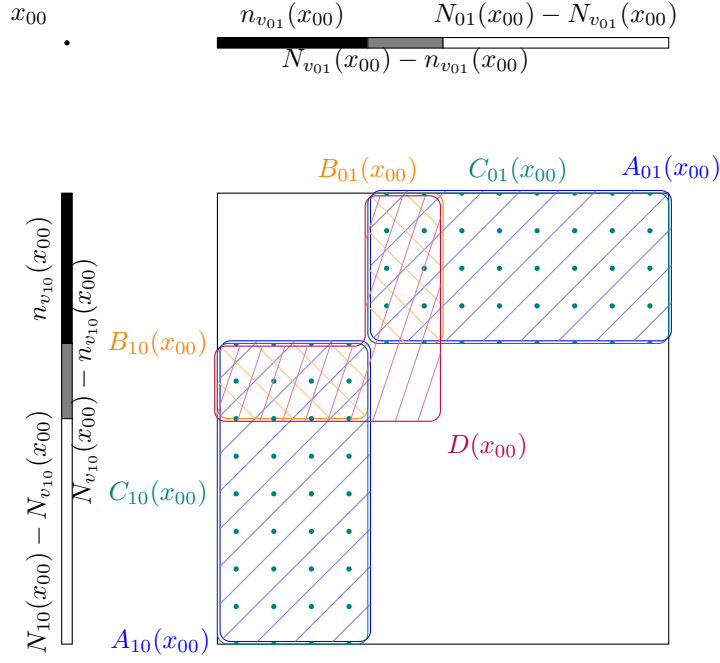

%   \simplefigure{images/local-view.jpg}

  Let $F(x_{00}) = \partial n_{v_{10}}(x_{00}) + \partial n_{v_{01}}(x_{00})$
    be the vertices (i.e. syndrome) that is flipped
    after flipping $n_{v_{10}}(x_{00})$ and $n_{v_{01}}(x_{00})$.
  Let $O(x_{00}) \subseteq F(x_{00})$ be the vertices in $F(x_{00})$
    that is has non trivial syndrome before the flip.
  Let $U(x_{00}) \subseteq F(x_{00})$ be the vertices in $F(x_{00})$
    that is unique neighbor to $v_{10} \cup v_{01}$.
  It is easy to see $U(x_{00}) \subseteq O(x_{00})$.
  
  To remind ourself, our goal is to show there exists $x_{00}$
    such that $|O(x_{00})| \ge \beta |F(x_{00})|$.
  Because $|O(x_{00})| \ge |U(x_{00})|$
    we can reduce the question to show
    there exists $x_{00}$, such that $|U(x_{00})| \ge \beta |F(x_{00})|$,
  To show the existence, we will show the averaged statement $\sum_{x_{00} \in V_{00}} |U(x_{00})| \ge \beta \sum_{x_{00} \in V_{00}} |F(x_{00})|$.
  
  Now, we begin to show the averaged statement.
  The method is to show that most elements in $F$ are also in $U$
    by removing the unwanted regions.
  Consider the following subsets 
    $A=\cup_{x_{00} \in V_{00}} A(x_{00})$, 
    $B=\cup_{x_{00} \in V_{00}} B(x_{00})$, 
    $C=\cup_{x_{00} \in V_{00}} C(x_{00})$, 
    $D=\cup_{x_{00} \in V_{00}} D(x_{00})$, 
  where $A$ is the main contribution and $B, C, D$ are error terms.

  $A(x_{00}) = A_{01}(x_{00}) \cup A_{10}(x_{00})$ is the region
    where the syndromes are flipped
    when we flip $n_{v_{10}}(x_{00})$ and $n_{v_{01}}(x_{00})$.
    This includes the possibility of flipping multiple times.

  \begin{equation}
    A_{01}(x_{00}) = \{(x_{00}, x_{10}, x_{01}, x_{10} \times_{x_{00}} x_{01}) : 
    x_{10} \in n_{v_{10}}(x_{00}), x_{10} \in N_{01}(x_{00})-n_{v_{01}}(x_{00})\},
  \end{equation}

  \begin{equation}
    A_{10}(x_{00}) = \{(x_{00}, x_{10}, x_{01}, x_{10} \times_{x_{00}} x_{01}) : 
    x_{10} \in N_{10}(x_{00})-n_{v_{10}}(x_{00}), x_{10} \in n_{v_{01}}(x_{00})\}.
  \end{equation}

  We mainly care about the last coordinate, i.e. the vertex in $V_{11}$.
    But we include the full information of the square
    in case $x_{10} \times_{x_{00}} x_{01}$ are not all distinct.
  Sometime we make this projection to the last coordinate implicit.

  From the definition, tt is easy to see $F(x_{00}) \subseteq A(x_{00})$.

  $B(x_{00}) = B_{01}(x_{00}) \cup B_{10}(x_{00})$ is a subregion of $A(x_{00})$.

  \begin{equation}
    B_{01}(x_{00}) = \{(x_{00}, x_{10}, x_{01}, x_{10} \times_{x_{00}} x_{01}) : 
    x_{10} \in n_{v_{10}}(x_{00}), x_{10} \in N_{v_{01}}(x_{00})-n_{v_{01}}(x_{00})\},
  \end{equation}

  \begin{equation}
    B_{10}(x_{00}) = \{(x_{00}, x_{10}, x_{01}, x_{10} \times_{x_{00}} x_{01}) : 
    x_{10} \in N_{v_{10}}(x_{00})-n_{v_{10}}(x_{00}), x_{10} \in n_{v_{01}}(x_{00})\}.
  \end{equation}

  $C(x_{00}) = C_{01}(x_{00}) \cup C_{10}(x_{00})$ is a subregion of $A(x_{00})$
    which has more than one neighbor in
      $n_{v_{10}}(x_{00})$ for $C_{01}(x_{00})$
      or $n_{v_{01}}(x_{00})$ for $C_{10}(x_{00})$.

  \begin{equation}
    C_{01}(x_{00}) = \{(x_{00}, x_{10}, x_{01}, x_{11}) \in A_{01}(x_{00}): 
      |N_{v_{10}}(x_{11})| > 1\},
  \end{equation}

  \begin{equation}
    C_{10}(x_{00}) = \{(x_{00}, x_{10}, x_{01}, x_{11}) \in A_{10}(x_{00}) : 
      |N_{v_{01}}(x_{11})| > 1\}.
  \end{equation}

  $D(x_{00}) = B(x_{00}) \cup D_{11}(x_{00})$ is a region slightly larger than $B(x_{00})$.

  \begin{equation}
    D_{11}(x_{00}) = \{(x_{00}, x_{10}, x_{01}, x_{10} \times_{x_{00}} x_{01}) : 
    x_{10} \in N_{v_{10}}(x_{00})-n_{v_{10}}(x_{00}), x_{10} \in N_{v_{01}}(x_{00})-n_{v_{01}}(x_{00})\}.
  \end{equation}

  Now, we claim the relation between the number of unique neighbor vertices
    and $A, B, C, D$.
  \begin{claim} \label{claim:unique-neighbor-and-ABCD}
    $|U| \ge |A| - |B| - |C| - 2 |D|$.
  \end{claim}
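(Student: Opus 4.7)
The plan is to build a ``bad-square'' charging map from $A \setminus A_U$ into $B \cup C \cup D$ (where $A_U \subseteq A$ consists of squares whose $V_{11}$-projection lies in $U$) and then bound its multiplicity. First I would reduce $|A|-|U|$ to the number of bad squares. If $x_{11} \in U(x_{00})$ has its unique $(v_{10} \cup v_{01})$-neighbor in $v_{10}$, then by the square completion lemma exactly one square in $A(x_{00})$ projects to $x_{11}$ (the completion of that $v_{10}$-neighbor through $x_{00}$), and no $A_{10}$-square does since $x_{11}$ has no $v_{01}$-neighbor. Hence $|A(x_{00})|-|U(x_{00})|$ equals the number of bad $\sigma \in A(x_{00})$, those with $\pi(\sigma) \notin U(x_{00})$.

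Next I would classify each bad $\sigma = (x_{00}, x_{10}, x_{01}, x_{11}) \in A_{01}(x_{00})$ (the $A_{10}$ case is symmetric). Because $x_{10} \in v_{10}$ is automatically a $v_{10}$-neighbor of $x_{11}$, $\sigma$ is bad exactly when $x_{11}$ has an extra $(v_{10} \cup v_{01})$-neighbor. In Case I, $|N_{v_{10}}(x_{11})| \ge 2$, so $\sigma \in C_{01}(x_{00})$ and I charge $\sigma$ to itself. In Case II-a, $|N_{v_{10}}(x_{11})| = 1$ and $x_{01} \in v_{01}$; then $\sigma \in B_{01}(x_{00})$ and I again charge $\sigma$ to itself. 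In Case II-b, $|N_{v_{10}}(x_{11})| = 1$ and $x_{01} \notin v_{01}$; I pick some $x' \in N_{v_{01}}(x_{11})$ and invoke square completion to get a unique $x_{00}'$ with $(x_{00}', x_{10}, x', x_{11}) \in F$. Because the partition $v_{10} = \bigsqcup_{y_{00} \in V_{00}} n_{v_{10}}(y_{00})$ assigns $x_{10}$ to $n_{v_{10}}(x_{00})$ alone, we have $x_{00}' \neq x_{00}$ and therefore $x_{10} \in N_{v_{10}}(x_{00}') - n_{v_{10}}(x_{00}')$. The witness $\omega := (x_{00}', x_{10}, x', x_{11})$ then lies in $B_{10}(x_{00}') \cup D_{11}(x_{00}') \subseteq D(x_{00}')$, and I charge $\sigma$ to $\omega$.

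Finally I would bound the multiplicities. Self-charges meet each $\omega \in B \cup C$ at most once. For the Case II-b charge, $\omega$ determines $\sigma$ uniquely: the $V_{10}$- and $V_{11}$-coordinates of $\sigma$ coincide with those of $\omega$; $x_{00}$ is recovered as the unique element with $x_{10} \in n_{v_{10}}(x_{00})$; and $x_{01}$ is pinned down by square completion from $(x_{00}, x_{10}, x_{11})$. The mirror Case II-b coming from bad squares in $A_{10}$ contributes at most one further external charge per $\omega \in D$. Summing all contributions gives $|A| - |U| \le |B| + |C| + 2|D|$. The hard part is this bookkeeping, especially checking that the externally produced witness indeed lies in $D$ (this rests on partition uniqueness) and that the two symmetric external-charging maps together contribute at most $2$ per witness in $D$, which is the source of the coefficient $2$ in front of $|D|$.
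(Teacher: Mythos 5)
Your charging argument is correct and is essentially the paper's own proof reorganized: the paper phrases it as ``each $x_{11}\notin U$ appears at most twice in $A-B-C$ (once per side) and contributes at least one element of $D$,'' which is exactly your two injective external charges into $D$ plus the self-charges into $B\cup C$. One small repair to the justification in Case II-b: $x_{00}'\neq x_{00}$ follows from $x'\neq x_{01}$ (since $x'\in v_{01}$ while $x_{01}\notin v_{01}$) together with uniqueness of square completion, not from the partition of $v_{10}$ --- the partition is what then yields $x_{10}\in N_{v_{10}}(x_{00}')-n_{v_{10}}(x_{00}')$.
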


  We first assume the claim and prove the averaged statement.

  We here bound each of $|A|, |B|, |C|, |D|$.

  First, we study $|A|$.
  \begin{align*}
    |A_{01}(x_{00})| + |A_{10}(x_{00})| 
    &= |n_{v_{10}}(x_{00})| (w_\rightarrow - |n_{v_{01}}(x_{00})|)
    + (w_\downarrow - |n_{v_{10}}(x_{00})|) |n_{v_{01}}(x_{00})| \\
    &\ge (|n_{v_{10}}(x_{00})| w_\rightarrow + w_\downarrow |n_{v_{01}}(x_{00})|)/2,
  \end{align*}
  where the last inequality follows from local minimality
    $|n_{v_{10}}(x_{00})|/w_\downarrow + |n_{v_{01}}(x_{00})|/w_\rightarrow \le |N_{v_{10}}(x_{00})|/w_\downarrow + |N_{v_{01}}(x_{00})|/w_\rightarrow \le 1$.
  So
  \begin{equation} \label{eq:bound-A}
    |A| \ge (w_\rightarrow |v_{10}| + w_\downarrow |v_{01}|)/2.
  \end{equation}
  
  Next, we study $|B|$.
  \begin{align*}
    |B_{01}(x_{00})|
    &= |n_{v_{10}}(x_{00})| |N_{v_{01}}(x_{00})-n_{v_{01}}(x_{00})| \\
    &\le |n_{v_{10}}(x_{00})| \epsilon_\leftrightarrow w_\rightarrow.
  \end{align*}
  So
  \begin{equation*}
    |B| \le \epsilon_\leftrightarrow w_\rightarrow |v_{10}|
    + \epsilon_\updownarrow w_\downarrow |v_{10}|.
  \end{equation*}

  Then, we study $|C|$.
  $\sum_{x_{00} \in V_{00}} |C_{01}(x_{00})|$ are the number of edges that are connected to the non unique neighbors in $N_{11}(v_{10})$.
  From Lemma~\ref{lem:unique-expander}, we have $|N^{\textnormal{unique}}_{11}(v_{10})| \ge (1-2 \epsilon_\leftrightarrow) w_\rightarrow |v_{10}|$.
  This implies
    $\sum_{x_{00} \in V_{00}} |C_{01}(x_{00})| \le 2 \epsilon_\leftrightarrow w_\rightarrow |v_{10}|$,
  so
  \begin{equation*}
    |C| \le 2 \epsilon_\leftrightarrow w_\rightarrow |v_{10}|
    + 2 \epsilon_\updownarrow w_\downarrow |v_{10}|.
  \end{equation*}

  Finally, we study $|D|$.
  \begin{align*}
    |D_{11}|
    &= \sum_{x_{00} \in V_{00}} |N_{v_{10}}(x_{00})-n_{v_{10}}(x_{00})| |N_{v_{01}}(x_{00})-n_{v_{01}}(x_{00})| \\
    &\le \epsilon_\updownarrow w_\downarrow \epsilon_\leftrightarrow w_\rightarrow 
    \min(\lceil \frac{w_\uparrow}{\epsilon_\updownarrow w_\downarrow} |v_{10}| \rceil,
    \lceil \frac{w_\leftarrow}{\epsilon_\leftrightarrow w_\rightarrow} |v_{01}| \rceil) \\
    &\le 2 w_\uparrow \epsilon_\leftrightarrow (w_\rightarrow |v_{10}|),
  \end{align*}
  where the last inequality follows from $\lceil \frac{w_\uparrow}{\epsilon_\updownarrow w_\downarrow} |v_{10}| \rceil \le 2\frac{w_\uparrow}{\epsilon_\updownarrow w_\downarrow} |v_{10}|$.
  If $|v_{10}| = 0$, the inequality holds trivially.
  Otherwise, because $\epsilon_\updownarrow w_\downarrow \le w_\uparrow, |v_{10}| \ge 1$,
  we have $\frac{w_\uparrow}{\epsilon_\updownarrow w_\downarrow} |v_{10}| \ge 1$.
  Therefore,
  $\lceil \frac{w_\uparrow}{\epsilon_\updownarrow w_\downarrow} |v_{10}| \rceil \le 2 \frac{w_\uparrow}{\epsilon_\updownarrow w_\downarrow} |v_{10}|$.
  
  Combine with the result of $|B|$, we have
  \begin{equation*}
    |D| \le \epsilon w_\rightarrow |v_{10}|
    + \epsilon w_\downarrow |v_{10}|
    + 2 (w_\uparrow \epsilon_\leftrightarrow) (w_\rightarrow |v_{10}|).
  \end{equation*}

  Now, we combine the results and use
    $w_\uparrow \epsilon_\leftrightarrow \le \epsilon,
    \epsilon_\leftrightarrow \le \epsilon,
    \epsilon_\updownarrow  \le \epsilon$
  to obtain the desired result

  \begin{equation} \label{eq:bound-U-A}
    |U| \ge |A| - |B| - |C| - 2 |D| \ge (1-12\epsilon) |A| \ge (1-12\epsilon) |F|,
  \end{equation}
  where for the second inequality we use 
    $|A| \ge (w_\rightarrow |v_{10}| + w_\downarrow |v_{01}|)/2$,
    $|B| \le \epsilon (w_\rightarrow |v_{10}| + w_\downarrow |v_{01}|)$,
    $|C| \le 2 \epsilon (w_\rightarrow |v_{10}| + w_\downarrow |v_{01}|)$,
    $|D| \le 3 \epsilon (w_\rightarrow |v_{10}| + w_\downarrow |v_{01}|)$
    and for the last inequality we use
    $F(x_{00}) \subseteq A(x_{00})$.
  
  We suffice to prove the claim.
  The idea is to consider the elements $(x_{00}, x_{10}, x_{01}, x_{11})$ in $A-B-C$, 
    and show that if the $x_{11} \not \in U$,
    then such $x_{11}$ appears at most twice in $A-B-C$,
    and $x_{11}$ appears at least once in $D$.

  \begin{proof} [Proof of Claim~\ref{claim:unique-neighbor-and-ABCD}]
    Let $(x_{00}, x_{10}, x_{01}, x_{11}) \in A - B - C$.

    We first show $x_{11}$ appears at most twice in $A-B-C$.
    Suppose $(x_{00}, x_{10}, x_{01}, x_{11}) \in A_{01}(x_{00})$.
    Because $x_{10} \in n_{v_{10}}(x_{00})$, 
      we have $|N_{v_{10}}(x_{11})| \ge 1$.
    Because $(x_{00}, x_{10}, x_{01}, x_{11}) \not \in C$,
      we have $|N_{v_{10}}(x_{11})| \le 1$,
      so $|N_{v_{10}}(x_{11})| = 1$.

    This implies for $x_{11}$ to be in $A_{01}(x_{00}) - B - C$,
      there is a unique $x_{10} \in N_{v_{10}}(x_{11})$.
    Because $\{n_{v_{10}}(x_{00})\}_{x_{00} \in V_{00}}$ forms a partition of $v_{10}$,
      there is a unique choice of $x_{00}$.
    Therefore, there is a unique choice of tuple $(x_{00}, x_{10}, x_{01}, x_{11})$
      and $x_{11}$ appears at most once in $A_{01}(x_{00}) - B - C$.
    Similarly, $x_{11}$ appears at most once in $A_{10}(x_{00}) - B - C$,
      so $x_{11}$ appears at most twice in $A - B - C$.
    
    Now we show if $x_{11} \not \in U$,
      then $x_{11}$ appears at least once in $D$.
    Suppose $(x_{00}, x_{10}, x_{01}, x_{11}) \in A_{01}(x_{00})$.
    From the previous discussion, we have $|N_{v_{10}}(x_{11})| = 1$,
      and let $x_{10}$ be the unique element.
    We also know $x_{10} \in n_{v_{10}}(x_{00})$
      because $(x_{00}, x_{10}, x_{01}, x_{11}) \in A_{01}(x_{00})$
      and $x_{01} \in N_{01}(x_{00}) - N_{v_{01}}(x_{00})$
      because $(x_{00}, x_{10}, x_{01}, x_{11}) \not \in B_{01}(x_{00})$.

    When $x_{11} \not \in U$, $|N_{v_{10}}(x_{11})| + |N_{v_{01}}(x_{11})| \ge 2$, so we have $|N_{v_{01}}(x_{11})| \ge 1$.
    Pick any $x'_{01} \in N_{v_{01}}(x_{11})$.
    Then we have $(x'_{00} = x_{10} \times_{x_{11}} x'_{01}, x_{10}, x'_{01}, x_{11}) \in D$.
    First of all, $x'_{01} \in N_{v_{01}}(x_{11})$ but $x_{01} \not \in N_{v_{01}}(x_{11})$
      so $x'_{01} \ne x_{01}$.
    This implies $x'_{00} \ne x_{00}$,
      so $n_{v_{10}}(x_{00})$ is disjoint from $n_{v_{10}}(x'_{00})$.
    Because $x_{10} \in n_{v_{10}}(x_{00})$, we have $x_{10} \in N_{v_{10}}(x'_{00}) - n_{v_{10}}(x'_{00})$.
    Now, together with $x'_{01} \in N_{v_{01}}(x'_{00})$,
      we obtain $(x'_{00}, x_{10}, x'_{01}, x_{11}) \in D$.
  \end{proof}
\end{proof}

\subsection{Prove the small set LTC lemma~\ref{lem:small-set-LTC}}

Now, we can prove the small set LTC lemma as a simple corollary.
\begin{proof} [Proof of Lemma~\ref{lem:small-set-LTC}]
  Using the inequalities \ref{eq:bound-A} and \ref{eq:bound-U-A},
    we obtain $|c_0| \ge |U| \ge (1-12 \epsilon) |A|
    \ge (\frac{1}{2} - 6\epsilon) (w_\rightarrow |v_{10}| + w_\downarrow |v_{01}|)$.
\end{proof}

\end{document}